\newcommand\footnoteref[1]{\protected@xdef\@thefnmark{\ref{#1}}\@footnotemark}
\crefname{apdx}{Appendix}{Appendices}
\newcommand{\bbC}{\mathbb{C}}
\newcommand{\rmd}{\mathrm{d}}
\newcommand{\bbE}{\mathbb{E}}\newcommand{\rme}{\mathrm{e}}
\newcommand{\bbI}{\mathbb{I}}\newcommand{\rmI}{\mathrm{I}}
\newcommand{\bbN}{\mathbb{N}}\newcommand{\rmN}{\mathrm{N}}\newcommand{\rmn}{\mathrm{n}}
\newcommand{\rmO}{\mathrm{O}}
\newcommand{\bbR}{\mathbb{R}}
\newcommand{\sfA}{\mathsf{A}}
\newcommand{\sfB}{\mathsf{B}}
\newcommand{\sfD}{\mathsf{D}}
\newcommand{\bfI}{\mathbf{I}}\newcommand{\sfI}{\mathsf{I}}
\newcommand{\sfM}{\mathsf{M}}
\newcommand{\sfP}{\mathsf{P}}
\newcommand{\cD}{\mathcal{D}}
\newcommand{\cI}{\mathcal{I}}
\newcommand{\cN}{\mathcal{N}}\newcommand{\scrN}{\mathscr{N}}
\newcommand{\cO}{\mathcal{O}}
\newcommand{\scrS}{\mathscr{S}}
\newcommand{\cX}{\mathcal{X}}
 \newcommand{\mfra}{\mathfrak{a}}
\newcommand{\mfrF}{\mathfrak{F}}
\newtheoremstyle{mystyle}
{}
{}
{\itshape}
{}
{\bfseries}
{}
{.5em}
{}
\newtheoremstyle{remark}
{}
{}
{}
{}
{\itshape}
{}
{.5em}
{}
\def\thmhead@plain#1#2#3{%
  \thmname{#1}\thmnumber{\@ifnotempty{#1}{ }\@upn{#2.}}%
  \thmnote{ \small\textsf{\the\thm@notefont\textit{#3}.}}}
\let\thmhead\thmhead@plain
\theoremstyle{mystyle}
\newtheorem{theorem}{Theorem}
\theoremstyle{mystyle}
\newtheorem{lemma}{Lemma}
\theoremstyle{mystyle}
\theoremstyle{mystyle}
\newtheorem{corollary}{Corollary}
\theoremstyle{mystyle}
\newtheorem{definition}{Definition}
\theoremstyle{mystyle}
\theoremstyle{mystyle}
\theoremstyle{mystyle}
\theoremstyle{mystyle}
\theoremstyle{mystyle}
\theoremstyle{mystyle}
\theoremstyle{remark}
\newtheorem{rem}{Remark}
\newcommand\independent{\protect\mathpalette{\protect\independent}{\perp}}
\def\independent#1#2{\mathrel{\rlap{$#1#2$}\mkern2mu{#1#2}}}
\def\squarebox#1{\hbox to #1{\hfill\vbox to #1{\vfill}}}
\newcommand{\opTheta}{\operatorname{\Theta}}
\newcommand{\del}{\partial}
\newcommand{\opGamma}{\operatorname{\Gamma}}
\newcommand{\bigger}{\bBigg@{3}}
\newcommand{\vast}{\bBigg@{4}}
\newcommand{\Vast}{\bBigg@{5}}
\newcommand{\Gigantic}{\bBigg@{8}}
\newcommand{\bsX}{\boldsymbol{X}}
\newcommand{\bsY}{\boldsymbol{Y}}
\newcommand{\bsZ}{\boldsymbol{Z}}
\renewcommand{\qedsymbol}{$\blacksquare$}
\newcommand{\supp}{{\mathsf{supp}}}
\title{The Capacity Achieving Distribution for the Amplitude Constrained Additive Gaussian Channel:  An Upper Bound on the Number of Mass Points}  
\author{Alex Dytso,~\IEEEmembership{Member, IEEE} Semih Yagli,~\IEEEmembership{Student Member, IEEE} H. Vincent Poor,~\IEEEmembership{Fellow, IEEE}, and Shlomo Shamai (Shitz),~\IEEEmembership{Fellow, IEEE}%
\thanks{The work of A. Dytso, S. Yagli and H. V. Poor was supported by the U. S. National Science Foundation under Grant CCF-1908308 and in part by the United States-Israel Binational Science Foundation, under Grant BSF-2018710. The work of S. Shamai  was supported by the United States-Israel Binational Science Foundation,
under Grant BSF-2018710, and in part by the European Union's Horizon
2020 Research And Innovation Programme, grant agreement no. 694630.  
 }
\thanks{A. Dytso, S. Yagli, and H. V. Poor are with the Department of Electrical Engineering, Princeton University, Princeton, NJ 08544, USA   (e-mail: \{adytso,  syagli, poor\}@princeton.edu).}
\thanks{S.\,Shamai (Shitz) is with the Department of Electrical Engineering, Technion -- Israel Institute of Technology, Haifa 3200003, Israel (e-mail: sshlomo@ee.technion.ac.il).
}
}
\begin{document}
\maketitle

\begin{abstract}

This paper studies an $n$-dimensional additive  Gaussian noise channel with  a peak-power-constrained  input. 
 It is well known that, in this case,  when $n=1$ the capacity-achieving input distribution is discrete with finitely many mass points, and when $n>1$  the capacity-achieving input distribution is supported on finitely many concentric shells. 
 However, due to the previous proof technique, not even a bound on the exact number of  mass points/shells was available.
  This paper provides an alternative proof of the finiteness of the number mass points/shells of the capacity-achieving input distribution while producing the first firm bounds on the number of mass points and shells, paving an alternative way for approaching many such problems.  
 
 The first main result of this paper is an order tight implicit bound which shows that the number of mass points in the capacity-achieving input distribution is within a factor of two from the number of zeros of the downward shifted capacity-achieving output probability density function. Next, this implicit bound is utilized to provide a first firm upper on the support size of optimal input distribution, an $O(\sfA^2)$ upper bound where $\sfA$ denotes the constraint on the input amplitude. The second main result of this paper generalizes the first one to the case when $n>1$, showing that, for each and every dimension $n\ge 1$, the number of shells that the optimal input distribution contains is $O(\sfA^2)$. Finally, the third main result of this paper reconsiders the case $n=1$ with an additional average power constraint, demonstrating a similar $O(\sfA^2)$ bound.

{\bf \small Keywords: Amplitude constraint, power constraint, additive vector Gaussian noise channel, capacity, discrete distributions.}
\end{abstract}

\section{Introduction}
We consider  an additive noise channel for which the input-output relationship is given by 
\begin{align}
\bsY=\bsX+\bsZ \text{,} \label{eq:channel}   
\end{align}
where the input $\bsX \in \mathbb{R}^n$ is independent of the standard Gaussian noise  $\bsZ \in \mathbb{R}^n $.  We are interested in finding the capacity of the channel  in \eqref{eq:channel} subject to the constraint that $\bsX \in \mathcal{B}_{0}(\sfA) $ where $\mathcal{B}_{0}(\sfA) $ is an $n$-ball centered at zero with radius $\sfA$ (i.e., amplitude or peak-power constrained input), that is
\begin{align}
C_n(\sfA)=\max_{\bsX \colon  \bsX \in  \mathcal{B}_{0}(\sfA) } I(\bsX;\bsY) \text{.} \label{eq:Capacity} 
\end{align}

In his seminal paper \cite{smith1971information} (see also \cite{smith1969Thesis}), for the case of $n=1$, Smith has shown that an optimizing distribution in \eqref{eq:Capacity} is unique, symmetric around the origin, and  perhaps surprisingly, discrete with finitely many mass points. Using tools such as the Identity Theorem from complex analysis, Smith has proven that the cardinality of the support set of the optimal input distribution cannot be infinite, and, thus, must be  finite.  
Employing this proof by contradiction, Shamai and Bar-David \cite{ShamQuadrat} have extended the method of Smith to $n=2$, and showed that, in this setting, the maximizing input random variable is given by
\begin{align}
\bsX^\star=R^\star \cdot \boldsymbol{U}^\star \label{eq:maximizingInput}
\end{align}
where  the magnitude $R^\star$ is discrete with finitely many points  and the random unit vector $ \boldsymbol{U}^\star$, which is independent of $R^\star$, has a uniform phase on $[0,2\pi)$. In other words, the  support is given by finitely many concentric shells, e.g., Fig.~\ref{fig:InputSphere}. 
  As a matter of fact, this phenomena that the optimal input distribution lies on finitely many concentric spheres remains true for any $n\ge 2$, cf. \cite{chan2005MIMObounded,rassouli2016capacity} and \cite{CISS2018}. 
  
\begin{figure}
\center
%
%
\pgfplotsset{every axis plot/.append style={very thick}}
\begin{tikzpicture}

\begin{axis}[%
width=6.953cm,
height=6.953cm,
at={(1.166in,0.84in)},
scale only axis,
xmin=-3,
xmax=3,
xlabel style={font=\color{white!15!black}},
xlabel={$x_1$},
ymin=-3,
ymax=3,
ylabel style={font=\color{white!15!black}},
ylabel={$x_2$},
axis background/.style={fill=white},
xmajorgrids,
ymajorgrids
]
\addplot [color=black, line width=2.0pt, forget plot]
  table[row sep=crcr]{%
1.7	0\\
1.69664543832806	0.106743883199833\\
1.68659499223461	0.213066497059317\\
1.66988832623877	0.318548234795732\\
1.64659137391867	0.422772808180253\\
1.61679607770176	0.525328890437411\\
1.58062002601003	0.625811739563952\\
1.53820598919223	0.723824795660624\\
1.48972135607457	0.818981245972916\\
1.43535747335343	0.910905551464294\\
1.37532889043741	0.999234928897204\\
1.30987251271884	1.08362078257277\\
1.2392466666164	1.16373008007877\\
1.16373008007877	1.2392466666164\\
1.08362078257277	1.30987251271884\\
0.999234928897204	1.37532889043741\\
0.910905551464294	1.43535747335343\\
0.818981245972916	1.48972135607457\\
0.723824795660624	1.53820598919223\\
0.625811739563953	1.58062002601003\\
0.525328890437411	1.61679607770176\\
0.422772808180253	1.64659137391867\\
0.318548234795732	1.66988832623877\\
0.213066497059317	1.68659499223461\\
0.106743883199833	1.69664543832806\\
1.04094977927525e-16	1.7\\
-0.106743883199833	1.69664543832806\\
-0.213066497059317	1.68659499223461\\
-0.318548234795732	1.66988832623877\\
-0.422772808180253	1.64659137391867\\
-0.52532889043741	1.61679607770176\\
-0.625811739563953	1.58062002601003\\
-0.723824795660624	1.53820598919223\\
-0.818981245972916	1.48972135607457\\
-0.910905551464295	1.43535747335343\\
-0.999234928897204	1.37532889043741\\
-1.08362078257277	1.30987251271884\\
-1.16373008007877	1.2392466666164\\
-1.2392466666164	1.16373008007877\\
-1.30987251271884	1.08362078257277\\
-1.37532889043741	0.999234928897204\\
-1.43535747335343	0.910905551464294\\
-1.48972135607457	0.818981245972917\\
-1.53820598919223	0.723824795660624\\
-1.58062002601003	0.625811739563953\\
-1.61679607770176	0.525328890437411\\
-1.64659137391867	0.422772808180253\\
-1.66988832623877	0.318548234795732\\
-1.68659499223461	0.213066497059318\\
-1.69664543832806	0.106743883199833\\
-1.7	2.0818995585505e-16\\
-1.69664543832806	-0.106743883199833\\
-1.68659499223461	-0.213066497059317\\
-1.66988832623877	-0.318548234795732\\
-1.64659137391867	-0.422772808180254\\
-1.61679607770176	-0.525328890437411\\
-1.58062002601003	-0.625811739563953\\
-1.53820598919223	-0.723824795660624\\
-1.48972135607457	-0.818981245972916\\
-1.43535747335343	-0.910905551464294\\
-1.37532889043741	-0.999234928897204\\
-1.30987251271884	-1.08362078257277\\
-1.2392466666164	-1.16373008007877\\
-1.16373008007877	-1.2392466666164\\
-1.08362078257277	-1.30987251271884\\
-0.999234928897206	-1.37532889043741\\
-0.910905551464295	-1.43535747335343\\
-0.818981245972917	-1.48972135607457\\
-0.723824795660624	-1.53820598919223\\
-0.625811739563954	-1.58062002601003\\
-0.525328890437411	-1.61679607770176\\
-0.422772808180254	-1.64659137391867\\
-0.318548234795732	-1.66988832623877\\
-0.213066497059318	-1.68659499223461\\
-0.106743883199832	-1.69664543832806\\
-3.12284933782575e-16	-1.7\\
0.106743883199832	-1.69664543832806\\
0.213066497059317	-1.68659499223461\\
0.318548234795731	-1.66988832623877\\
0.422772808180253	-1.64659137391867\\
0.52532889043741	-1.61679607770176\\
0.625811739563953	-1.58062002601003\\
0.723824795660623	-1.53820598919223\\
0.818981245972915	-1.48972135607457\\
0.910905551464293	-1.43535747335343\\
0.999234928897204	-1.37532889043741\\
1.08362078257277	-1.30987251271884\\
1.16373008007877	-1.2392466666164\\
1.2392466666164	-1.16373008007877\\
1.30987251271884	-1.08362078257277\\
1.37532889043741	-0.999234928897205\\
1.43535747335343	-0.910905551464294\\
1.48972135607457	-0.818981245972916\\
1.53820598919223	-0.723824795660624\\
1.58062002601003	-0.625811739563954\\
1.61679607770176	-0.525328890437411\\
1.64659137391867	-0.422772808180254\\
1.66988832623877	-0.318548234795732\\
1.68659499223461	-0.213066497059318\\
1.69664543832806	-0.106743883199833\\
1.7	-4.163799117101e-16\\
};
\addplot [color=black, line width=2.0pt, forget plot]
  table[row sep=crcr]{%
1	0\\
0.998026728428272	0.0627905195293134\\
0.992114701314478	0.125333233564304\\
0.982287250728689	0.187381314585725\\
0.968583161128631	0.248689887164855\\
0.951056516295154	0.309016994374947\\
0.929776485888251	0.368124552684678\\
0.904827052466019	0.425779291565073\\
0.876306680043864	0.481753674101715\\
0.844327925502015	0.535826794978997\\
0.809016994374947	0.587785252292473\\
0.770513242775789	0.63742398974869\\
0.728968627421412	0.684547105928689\\
0.684547105928689	0.728968627421412\\
0.63742398974869	0.770513242775789\\
0.587785252292473	0.809016994374947\\
0.535826794978997	0.844327925502015\\
0.481753674101715	0.876306680043864\\
0.425779291565073	0.90482705246602\\
0.368124552684678	0.929776485888251\\
0.309016994374947	0.951056516295154\\
0.248689887164855	0.968583161128631\\
0.187381314585725	0.982287250728689\\
0.125333233564304	0.992114701314478\\
0.0627905195293135	0.998026728428272\\
6.12323399573677e-17	1\\
-0.0627905195293134	0.998026728428272\\
-0.125333233564304	0.992114701314478\\
-0.187381314585725	0.982287250728689\\
-0.248689887164855	0.968583161128631\\
-0.309016994374947	0.951056516295154\\
-0.368124552684678	0.929776485888251\\
-0.425779291565073	0.904827052466019\\
-0.481753674101715	0.876306680043863\\
-0.535826794978997	0.844327925502015\\
-0.587785252292473	0.809016994374947\\
-0.63742398974869	0.770513242775789\\
-0.684547105928689	0.728968627421411\\
-0.728968627421411	0.684547105928689\\
-0.770513242775789	0.63742398974869\\
-0.809016994374947	0.587785252292473\\
-0.844327925502015	0.535826794978997\\
-0.876306680043863	0.481753674101716\\
-0.904827052466019	0.425779291565073\\
-0.929776485888251	0.368124552684678\\
-0.951056516295154	0.309016994374948\\
-0.968583161128631	0.248689887164855\\
-0.982287250728689	0.187381314585725\\
-0.992114701314478	0.125333233564305\\
-0.998026728428272	0.0627905195293136\\
-1	1.22464679914735e-16\\
-0.998026728428272	-0.0627905195293133\\
-0.992114701314478	-0.125333233564304\\
-0.982287250728689	-0.187381314585725\\
-0.968583161128631	-0.248689887164855\\
-0.951056516295154	-0.309016994374948\\
-0.929776485888251	-0.368124552684678\\
-0.904827052466019	-0.425779291565073\\
-0.876306680043864	-0.481753674101715\\
-0.844327925502015	-0.535826794978996\\
-0.809016994374947	-0.587785252292473\\
-0.770513242775789	-0.63742398974869\\
-0.728968627421412	-0.684547105928689\\
-0.684547105928689	-0.728968627421411\\
-0.63742398974869	-0.770513242775789\\
-0.587785252292474	-0.809016994374947\\
-0.535826794978997	-0.844327925502015\\
-0.481753674101716	-0.876306680043863\\
-0.425779291565073	-0.904827052466019\\
-0.368124552684679	-0.929776485888251\\
-0.309016994374948	-0.951056516295154\\
-0.248689887164855	-0.968583161128631\\
-0.187381314585725	-0.982287250728689\\
-0.125333233564305	-0.992114701314478\\
-0.0627905195293132	-0.998026728428272\\
-1.83697019872103e-16	-1\\
0.0627905195293128	-0.998026728428272\\
0.125333233564304	-0.992114701314478\\
0.187381314585724	-0.982287250728689\\
0.248689887164855	-0.968583161128631\\
0.309016994374947	-0.951056516295154\\
0.368124552684678	-0.929776485888251\\
0.425779291565073	-0.90482705246602\\
0.481753674101715	-0.876306680043864\\
0.535826794978996	-0.844327925502016\\
0.587785252292473	-0.809016994374948\\
0.637423989748689	-0.77051324277579\\
0.684547105928689	-0.728968627421412\\
0.728968627421411	-0.684547105928689\\
0.770513242775789	-0.63742398974869\\
0.809016994374947	-0.587785252292473\\
0.844327925502015	-0.535826794978996\\
0.876306680043864	-0.481753674101715\\
0.904827052466019	-0.425779291565073\\
0.929776485888251	-0.368124552684679\\
0.951056516295154	-0.309016994374948\\
0.968583161128631	-0.248689887164855\\
0.982287250728689	-0.187381314585725\\
0.992114701314478	-0.125333233564305\\
0.998026728428272	-0.0627905195293133\\
1	-2.44929359829471e-16\\
};
\addplot [color=black, line width=2.0pt, forget plot]
  table[row sep=crcr]{%
2.1	0\\
2.09585612969937	0.131860091011558\\
2.0834408727604	0.263199790485039\\
2.06280322653025	0.393500760630022\\
2.03402463837013	0.522248763046195\\
1.99721868421982	0.64893568818739\\
1.95253062036533	0.773061560637824\\
1.90013681017864	0.894136512286653\\
1.84024402809211	1.0116827156136\\
1.77308864355423	1.12523626945589\\
1.69893568818739	1.23434902981419\\
1.61807780982916	1.33859037847225\\
1.53083411758496	1.43754892245025\\
1.43754892245025	1.53083411758496\\
1.33859037847225	1.61807780982916\\
1.23434902981419	1.69893568818739\\
1.12523626945589	1.77308864355423\\
1.0116827156136	1.84024402809211\\
0.894136512286653	1.90013681017864\\
0.773061560637824	1.95253062036533\\
0.64893568818739	1.99721868421982\\
0.522248763046195	2.03402463837013\\
0.393500760630022	2.06280322653025\\
0.263199790485039	2.0834408727604\\
0.131860091011558	2.09585612969937\\
1.28587913910472e-16	2.1\\
-0.131860091011558	2.09585612969937\\
-0.263199790485039	2.0834408727604\\
-0.393500760630022	2.06280322653025\\
-0.522248763046195	2.03402463837013\\
-0.648935688187389	1.99721868421982\\
-0.773061560637824	1.95253062036533\\
-0.894136512286653	1.90013681017864\\
-1.0116827156136	1.84024402809211\\
-1.12523626945589	1.77308864355423\\
-1.23434902981419	1.69893568818739\\
-1.33859037847225	1.61807780982916\\
-1.43754892245025	1.53083411758496\\
-1.53083411758496	1.43754892245025\\
-1.61807780982916	1.33859037847225\\
-1.69893568818739	1.23434902981419\\
-1.77308864355423	1.12523626945589\\
-1.84024402809211	1.0116827156136\\
-1.90013681017864	0.894136512286653\\
-1.95253062036533	0.773061560637824\\
-1.99721868421982	0.64893568818739\\
-2.03402463837013	0.522248763046195\\
-2.06280322653025	0.393500760630022\\
-2.0834408727604	0.26319979048504\\
-2.09585612969937	0.131860091011559\\
-2.1	2.57175827820944e-16\\
-2.09585612969937	-0.131860091011558\\
-2.0834408727604	-0.263199790485039\\
-2.06280322653025	-0.393500760630022\\
-2.03402463837013	-0.522248763046196\\
-1.99721868421982	-0.64893568818739\\
-1.95253062036533	-0.773061560637824\\
-1.90013681017864	-0.894136512286653\\
-1.84024402809211	-1.0116827156136\\
-1.77308864355423	-1.12523626945589\\
-1.69893568818739	-1.23434902981419\\
-1.61807780982916	-1.33859037847225\\
-1.53083411758496	-1.43754892245025\\
-1.43754892245025	-1.53083411758496\\
-1.33859037847225	-1.61807780982916\\
-1.2343490298142	-1.69893568818739\\
-1.12523626945589	-1.77308864355423\\
-1.0116827156136	-1.84024402809211\\
-0.894136512286653	-1.90013681017864\\
-0.773061560637825	-1.95253062036533\\
-0.64893568818739	-1.99721868421982\\
-0.522248763046196	-2.03402463837013\\
-0.393500760630022	-2.06280322653025\\
-0.26319979048504	-2.0834408727604\\
-0.131860091011558	-2.09585612969937\\
-3.85763741731416e-16	-2.1\\
0.131860091011557	-2.09585612969937\\
0.263199790485039	-2.0834408727604\\
0.393500760630021	-2.06280322653025\\
0.522248763046195	-2.03402463837013\\
0.648935688187389	-1.99721868421982\\
0.773061560637824	-1.95253062036533\\
0.894136512286652	-1.90013681017864\\
1.0116827156136	-1.84024402809211\\
1.12523626945589	-1.77308864355423\\
1.23434902981419	-1.69893568818739\\
1.33859037847225	-1.61807780982916\\
1.43754892245025	-1.53083411758496\\
1.53083411758496	-1.43754892245025\\
1.61807780982916	-1.33859037847225\\
1.69893568818739	-1.23434902981419\\
1.77308864355423	-1.12523626945589\\
1.84024402809211	-1.0116827156136\\
1.90013681017864	-0.894136512286653\\
1.95253062036533	-0.773061560637825\\
1.99721868421982	-0.64893568818739\\
2.03402463837013	-0.522248763046196\\
2.06280322653025	-0.393500760630022\\
2.0834408727604	-0.26319979048504\\
2.09585612969937	-0.131860091011558\\
2.1	-5.14351655641888e-16\\
};
\addplot [color=black, line width=2.0pt, forget plot]
  table[row sep=crcr]{%
2.6	0\\
2.59486949391351	0.163255350776215\\
2.57949822341764	0.325866407267191\\
2.55394685189459	0.487191417922884\\
2.51831621893444	0.646593706628622\\
2.4727469423674	0.803444185374863\\
2.41741886330945	0.957123836980163\\
2.35255033641165	1.10702615806919\\
2.27839736811405	1.25255955266446\\
2.19525260630524	1.39314966694539\\
2.10344418537486	1.52824165596043\\
2.00333443121705	1.65730237334659\\
1.89531843129567	1.77982247541459\\
1.77982247541459	1.89531843129567\\
1.65730237334659	2.00333443121705\\
1.52824165596043	2.10344418537486\\
1.39314966694539	2.19525260630524\\
1.25255955266446	2.27839736811405\\
1.10702615806919	2.35255033641165\\
0.957123836980163	2.41741886330945\\
0.803444185374863	2.4727469423674\\
0.646593706628623	2.51831621893444\\
0.487191417922884	2.55394685189459\\
0.325866407267191	2.57949822341764\\
0.163255350776215	2.59486949391351\\
1.59204083889156e-16	2.6\\
-0.163255350776215	2.59486949391351\\
-0.325866407267191	2.57949822341764\\
-0.487191417922885	2.55394685189459\\
-0.646593706628622	2.51831621893444\\
-0.803444185374863	2.4727469423674\\
-0.957123836980163	2.41741886330945\\
-1.10702615806919	2.35255033641165\\
-1.25255955266446	2.27839736811405\\
-1.39314966694539	2.19525260630524\\
-1.52824165596043	2.10344418537486\\
-1.65730237334659	2.00333443121705\\
-1.77982247541459	1.89531843129567\\
-1.89531843129567	1.77982247541459\\
-2.00333443121705	1.65730237334659\\
-2.10344418537486	1.52824165596043\\
-2.19525260630524	1.39314966694539\\
-2.27839736811405	1.25255955266446\\
-2.35255033641165	1.10702615806919\\
-2.41741886330945	0.957123836980163\\
-2.4727469423674	0.803444185374864\\
-2.51831621893444	0.646593706628623\\
-2.55394685189459	0.487191417922884\\
-2.57949822341764	0.325866407267192\\
-2.59486949391351	0.163255350776215\\
-2.6	3.18408167778312e-16\\
-2.59486949391351	-0.163255350776215\\
-2.57949822341764	-0.325866407267191\\
-2.55394685189459	-0.487191417922884\\
-2.51831621893444	-0.646593706628623\\
-2.4727469423674	-0.803444185374864\\
-2.41741886330945	-0.957123836980164\\
-2.35255033641165	-1.10702615806919\\
-2.27839736811405	-1.25255955266446\\
-2.19525260630524	-1.39314966694539\\
-2.10344418537486	-1.52824165596043\\
-2.00333443121705	-1.65730237334659\\
-1.89531843129567	-1.77982247541459\\
-1.77982247541459	-1.89531843129567\\
-1.65730237334659	-2.00333443121705\\
-1.52824165596043	-2.10344418537486\\
-1.39314966694539	-2.19525260630524\\
-1.25255955266446	-2.27839736811404\\
-1.10702615806919	-2.35255033641165\\
-0.957123836980164	-2.41741886330945\\
-0.803444185374864	-2.4727469423674\\
-0.646593706628624	-2.51831621893444\\
-0.487191417922884	-2.55394685189459\\
-0.325866407267192	-2.57949822341764\\
-0.163255350776214	-2.59486949391351\\
-4.77612251667468e-16	-2.6\\
0.163255350776213	-2.59486949391351\\
0.325866407267191	-2.57949822341764\\
0.487191417922883	-2.55394685189459\\
0.646593706628623	-2.51831621893444\\
0.803444185374863	-2.4727469423674\\
0.957123836980163	-2.41741886330945\\
1.10702615806919	-2.35255033641165\\
1.25255955266446	-2.27839736811405\\
1.39314966694539	-2.19525260630524\\
1.52824165596043	-2.10344418537486\\
1.65730237334659	-2.00333443121705\\
1.77982247541459	-1.89531843129567\\
1.89531843129567	-1.77982247541459\\
2.00333443121705	-1.65730237334659\\
2.10344418537486	-1.52824165596043\\
2.19525260630524	-1.39314966694539\\
2.27839736811405	-1.25255955266446\\
2.35255033641165	-1.10702615806919\\
2.41741886330945	-0.957123836980165\\
2.4727469423674	-0.803444185374864\\
2.51831621893444	-0.646593706628624\\
2.55394685189459	-0.487191417922884\\
2.57949822341764	-0.325866407267192\\
2.59486949391351	-0.163255350776215\\
2.6	-6.36816335556624e-16\\
};
\end{axis}
\end{tikzpicture}%
\caption{An example of a support of an optimal input distribution for $n=2$.}
\label{fig:InputSphere}
\end{figure}
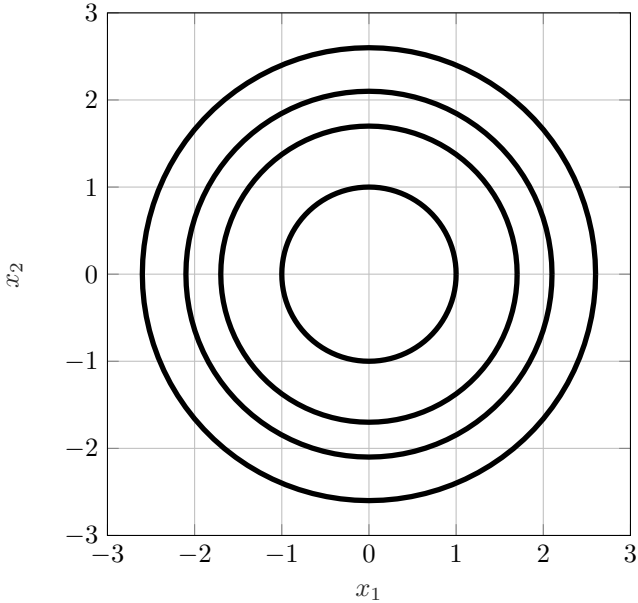

 Regrettably, the method of proof by contradiction does not lead to a characterization of the number of spheres (number of mass points when $n=1$) in the capacity-achieving input distribution.  In fact, as of the writing of this paper, very little is known about the structure of that distribution, and a very simple question remains open about $50$ years after Smith's contribution:
\begin{center}
\emph{When $n=1$, what is the cardinality of the support  of the optimal input distribution as a  function of  $\sfA$?} 
\end{center}
  In this work, we provide the first firm upper bound on the number of  points for $n=1$ and the number of shells for every $n> 1$, partially answering the above question.
    Furthermore, for the case of $n=1$, using similar methods, we also provide an upper bound on the cardinality of the support of the distribution achieving
  \begin{align}
    C(\sfA, \sfP) = \max_{\substack{X\colon |X|\le \sfA \\  \bbE[X^2]\le \sfP }} I(X;Y) \text{.}  \label{eq:channel power}
  \end{align}

\subsection{Prior Work}

 The history of the problem begins with  Shannon  who was the first to consider an amplitude constraint on the input \cite{Shannon:1948}. Shannon's original paper proposes both upper and lower bounds on the capacity and shows that peak-power capacity and average power capacity have the same asymptotic behavior at the low signal to noise ratio. The next major breakthrough is the seminal paper of Smith \cite{smith1971information}, where Smith proves the discreteness of the capacity-achieving input distribution and also shows the optimality of the equiprobable binary input on $\{ \pm \sfA \}$ so long as  $\sfA \le 0.1$. Sharma and Shamai \cite{sharma2010transition} extend the result of Smith, and argue\footnote{The formal proof is incomplete as the argument in \cite{sharma2010transition} uses a conjecture which presumes that the number of points in the optimal distribution increases by 1 as the amplitude constraint is relaxed. Proof of this fact was later shown to be true in \cite{dytso2019capacityEstim}.} that an equiprobable input on $\{ \pm \sfA \}$ is optimal if and only if  $\sfA \le \bar{\sfA} \approx 1.665$.  The proof of the result in  \cite{sharma2010transition}, which generalizes to vector channels, is shown in \cite{dytso2019capacityEstim}.   Also, based on numerical evidence, Sharma and Shamai \cite{sharma2010transition} conjectured that the number of mass points increases by at most one and a new point always appears at zero.  Based on this conjecture, in \cite{sharma2010transition} it has been shown that  a ternary input distributed on $\{ -\sfA,0,\sfA \}$ is optimal for  all $ \bar{\sfA} \le  \sfA \le \bar{\bar{\sfA}}  \approx 2.786$.

A progress on the algorithmic aspect of computing the optimal input distribution was made in \cite{huang2005characterization} which proposed an iterative procedure  that converges to the a capacity achieving distribution based on the cutting-plane method. The bound on the number of mass points found in our work is particularly relevant for numerical methods as it reduces the optimization space for algorithms such as the one contained in \cite{huang2005characterization}.  

A number of papers have also focused on upper and lower bounds on the capacity in \eqref{eq:Capacity}. Broadly speaking, there are three types of capacity upper bounding approaches. The first approach uses   the maximum entropy principle \cite[Chapter 12]{CoverInfoTheory} and upper bounds the output differential entropy, $h(Y)$, subject to some moment constraint   \cite{dytso2017amplitude_Globecom}. The second approach uses a dual capacity characterization\footnote{Also known as Redundancy-Capacity Theorem (see, e.g., \cite{gallagher1976}, \cite{ryabko1979}.)} where the maximization of the mutual information  over the input distribution is replaced by minimization of the relative entropy over the output distribution. A suboptimal choice of an output distribution in the dual capacity expression results in an upper bound on the capacity \cite{mckellips2004simple,thangaraj2017capacity,lapidoth2003capacity,rassouli2016upper}. 
 The third approach uses a characterization of the mutual information as an integral of the minimum mean square error (MMSE) \cite{I-MMSE}, and  leads to an upper bound by replacing the optimal estimator in the MMSE term by a suboptimal one \cite{dytso2019capacityEstim}. As for the lower bounds on the capacity, the first one relevant to our setting, as mentioned above, was proposed by Shanon in \cite{Shannon:1948} which was based on the entropy power inequality. Other important lower bounds include Ozarow-Wyner bounds \cite{ozarow1990capacity, GenOWbounds}, and bounds based on Jensen's inequality \cite{Entropy2019}.  

There is also a substantial literature that extends the proof recipe of Smith to the other channels.  For example, the  approach of Smith for showing discreteness of an optimal input distribution has been extended to complex Gaussian channels \cite{ShamQuadrat}, additive  noise channels where noise has a sufficiently regular pdf \cite{tchamkerten2004discreteness},  Rayleigh fading channels \cite{abou2001capacity}, and Poisson channels \cite{shamai1990capacity}.  For an overview of the literature on various optimization methods that show discreteness of a capacity-achieving distribution the interested reader is referred to \cite{CISS2018}. Moreover, a comprehensive account of capacity results for point-to-point Gaussian channels can be found in \cite{verdu1998fifty}. 
  
  One of the ingredients of our proof is the Oscillation Theorem of Karlin \cite{karlin1957polya}.  In the past, Karlin's theorem has been used to study extreme distributions; however, not to the same degree as it is used in this paper. For example,  in the context of a Bayesian estimation problem \cite{casella1981estimating}, Oscillation Theorem has been used to show the necessary and sufficient conditions for a binary distribution to be the least favorable. In \cite{dytso2019capacityEstim}, in a vector version of the optimization in \eqref{eq:Capacity},  Oscillation Theorem has been used to show the necessary and  sufficient conditions for a uniform distribution on a single sphere to be optimal. 

\subsection{Contributions and Paper Outline}
In what follows:
\begin{enumerate}
\item Section~\ref{sec:MainResult} presents our main results;
\item Section~\ref{sec:ProofMainResult} provides the proof of the first part of our main result for the case of $n=1$. 
 There, it is shown that the number of zeros of the shifted optimal output probability density function (pdf) is within a factor of two from  the number of mass points of the optimal input distribution. The main element of this part relies on Karlin's Oscillation Theorem; 

\item  Section~\ref{sec:BoundOnTheNumberOfOscillations} provides the proof of the second part of the main result for the case of $n=1$. Specifically, an explicit upper bound on the number of extreme points of an arbitrary output pdf of the  Gaussian channel described in \eqref{eq:channel} is derived. The proof of this result exploits the analyticity of the Gaussian density together with Tijdeman's Number of Zeros Lemma \cite[Lemma 1]{Tijdeman1971number}.  The proof for the vector case ($n > 1$)  follows along the same lines as the proof for the scalar case ($n=1$), albeit with a more involved algebra, therefore it is relegated to the Appendix;  and
\item  Section~\ref{sec:Conclusion} concludes the paper with some final remarks. 
 \end{enumerate}
\subsection{Notation} 
Throughout the paper, the deterministic scalar quantities are denoted by lower-case letters, deterministic vectors are denoted by bold lowercase letters, random variables are denoted by uppercase letters, and random vectors are denoted by bold uppercase  (e.g., $x$, $\boldsymbol{x}$, $X$, $\boldsymbol{X}$).
%
 We denote the distribution  of a random vector $\boldsymbol{X}$ by $P_{\boldsymbol{X}}$. Moreover, we say that a point $\boldsymbol{x}$ is in the support, denoted by $\supp(P_{\boldsymbol{X}})$,\footnote{Also known as ``points of increase of $P_{\boldsymbol{X}}$" or ``spectrum of $P_{\boldsymbol{X}}$."} of the distribution $P_{\boldsymbol{X}}$ if for every open set $ \mathcal{O} \ni \boldsymbol{x} $ we have that $P_{\boldsymbol{X}}(\cO)>0$. We refer to symmetric random variables as those that  are symmetric with respect to the origin. 

The number of zeros of a function $f \colon \mathbb{R} \to \mathbb{R} $  on the interval $\cI$ is denoted by\footnote{The definition $\rmN(\cI, f)$ is blind to the multiplicities of the zeros.}  $\rmN(\cI, f)$. Similarly, if $f  \colon \bbC \to \bbC$ is a function on the complex domain, $\rmN(\cD, f)$ denotes the number of its  zeros within the region $\cD$.

Finally, while the relative entropy between $X$ and $Y$ is denoted by $D(X\| Y)$, the entropy of a discrete random variable $X$ is denoted by $H(X)$ and the differential entropy of a continuous random variable $X$ is denoted by $h(X)$. 

\section{ Main Results} \label{sec:MainResult}

Theorem~\ref{thm:MainResult}, stated below, gives the first firm upper bound on the support size of the capacity-achieving input of the scalar additive Gaussian channel with an amplitude constraint.

\begin{theorem}\label{thm:MainResult} 
Consider the amplitude constrained scalar additive Gaussian channel $Y= X+ Z $ where the input $X$, satifying $|X|\le \sfA $, is assumed to be independent from the noise $Z\sim \cN(0,1)$. Assuming $\sfA \ge 1$, let $P_{X^\star}$ be the optimizing input distribution for this channel. Then,   $P_{X^\star}$ is a symmetric discrete distribution with
\begin{align}
  \frac{1}{2} \rmN\left([-R, R], f_{Y^\star}  -  \kappa_1 \right)   &\le |\supp(P_{X^\star})| \label{eq:BoundViaOutputPDF_lower} \\ 
  & \le \rmN\left([-R, R], f_{Y^\star}  -  \kappa_1 \right) \label{eq:BoundViaOutputPDF_upper} \\ 
  & <\infty \text{,} 
\end{align}
where  $\kappa_1=\rme^{-C(\sfA)-h(Z) } $   and\footnote{Unless otherwise stated, the logarithms in this paper are of base $\rme$.}   $R=  \sfA + \log^{\frac{1}{2}} \left(\frac1{2 \pi \kappa_1^2}\right )$.   Moreover,  
\begin{align}
\sqrt{1+\frac{2 \sfA^2 }{\pi \rme } } &\le |\supp(P_{X^\star})|  \label{eq:EPI_lowbd} \\ 
&\le \rmN\left([-R, R], f_{Y^\star}  -  \kappa_1 \right) \\ 
&\le  a_2 \sfA^2 +a_1 \sfA+ a_0 \text{,} \label{eq:MainBound} 
\end{align}
with 
\begin{align}
a_2&=9\rme+ 6\sqrt{\rme}+5\text{,}\\
a_1&=  6\rme+2\sqrt{\rme} \text{,} \\
a_0&=\rme+2\log\left(4 \sqrt{\rme}+2\right)+1 \text{.}
\end{align}
\end{theorem}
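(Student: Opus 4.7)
The plan combines four ingredients: the KKT conditions of the capacity problem, Karlin's Oscillation Theorem applied to a Gaussian convolution, a mode count for Gaussian mixtures, and Tijdeman's Number of Zeros Lemma.

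I first dualize via KKT. Writing $\phi$ for the standard Gaussian density and setting
$\xi(x) := D(P_{Y|X=x}\|P_{Y^\star})-C(\sfA) = \int \phi(y-x)\log(\kappa_1/f_{Y^\star}(y))\,dy$
(which uses $\log\kappa_1=-C(\sfA)-h(Z)$), the standard KKT optimality conditions yield $\xi(x)\le 0$ on $[-\sfA,\sfA]$ with equality exactly on $\supp(P_{X^\star})$. Interior support points are thus double zeros of $\xi$ (both $\xi$ and $\xi'$ vanish there), while any support points at $\pm\sfA$ are at least simple zeros. A Gaussian-tail estimate on $f_{Y^\star}$ gives $f_{Y^\star}(y)<\kappa_1$ for all $|y|>R$, so every zero of $f_{Y^\star}-\kappa_1$ lies in $[-R,R]$. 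Set $N := |\supp(P_{X^\star})|$ and $M := \rmN([-R,R], f_{Y^\star}-\kappa_1)$.

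For the two-sided sandwich \eqref{eq:BoundViaOutputPDF_lower}--\eqref{eq:BoundViaOutputPDF_upper}, namely $M/2\le N\le M$, I proceed as follows. The upper bound $N\le M$ uses Karlin's Oscillation Theorem in its strong form for the (strictly) totally positive Gaussian kernel: the number of zeros of $\xi$ counted with multiplicity is at most the number of sign changes of $\log(\kappa_1/f_{Y^\star})$, which in turn is at most $M$ (each sign change needs a zero of $f_{Y^\star}-\kappa_1$). Combining with the $\ge 2N-2$ multiplicity count at support points gives $N\le(M+2)/2$; the EPI lower bound below forces $N\ge 2$ under $\sfA\ge 1$, hence $M\ge 2$, and we conclude $N\le M$. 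The complementary bound $M/2\le N$ comes from the classical fact that $f_{Y^\star}$, being a mixture of $N$ equal-variance Gaussians, has at most $N$ local maxima, hence at most $2N$ monotone intervals on $\bbR$; thus $f_{Y^\star}=\kappa_1$ admits at most $2N$ solutions, so $M\le 2N$. Finally, the lower bound \eqref{eq:EPI_lowbd} follows by combining $C(\sfA)\le H(X^\star)\le\log N$ (discrete input on $N$ atoms) with Shannon's classical EPI-based lower bound $C(\sfA)\ge\tfrac12\log(1+2\sfA^2/(\pi\rme))$, obtained by plugging the uniform input on $[-\sfA,\sfA]$ into the entropy power inequality.

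For the explicit quadratic bound \eqref{eq:MainBound} on $M$, I extend $y\mapsto f_{Y^\star}(y)-\kappa_1$ to an entire function of $z\in\bbC$ via the Gaussian convolution representation and apply Tijdeman's Number of Zeros Lemma on a disk of radius $\rme R$ centered at a suitable reference point. The elementary modulus bound $|\phi(y-z)|\le(2\pi)^{-1/2}\exp(\tfrac12(\operatorname{Im} z)^2)$ controls $\max_{|z|=\rme R}|f_{Y^\star}(z)-\kappa_1|$ by a factor of order $\exp(\rme^2 R^2/2)$, and since $R^2=O(\sfA^2)$ (using $\log(1/\kappa_1)=C(\sfA)+h(Z)=O(\log\sfA)$), taking logs produces the $O(\sfA^2)$ bound with the explicit constants $a_2,a_1,a_0$ after bookkeeping. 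The main obstacle I anticipate is in this step: obtaining a lower bound on $|f_{Y^\star}(z_0)-\kappa_1|$ at the reference point $z_0$ that is uniform in the (unknown) optimal $P_{X^\star}$; the natural move is to place $z_0$ on the imaginary axis at a height where the Gaussian convolution's rapid growth in $\operatorname{Im} z$ dominates $\kappa_1$ in modulus, but tuning this to extract sharp numerical constants $a_2,a_1,a_0$ requires careful, though routine, bookkeeping.
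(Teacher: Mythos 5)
Your sandwich bound $M/2 \le N \le M$ is reached by a genuinely different route. For the upper bound you invoke a \emph{with-multiplicity} version of Karlin's oscillation theorem (interior support points are double zeros of $\Xi$, so $2N-2\le M$), which is stronger than needed and requires a stronger statement than the one the paper quotes; the paper gets $N\le M$ more directly from the plain count $N\le\rmN(\bbR,\Xi)\le\scrS(\xi)\le M$, with no multiplicity accounting. For the lower bound $M\le 2N$ you appeal to ``a mixture of $N$ equal-variance Gaussians has at most $N$ modes.'' That statement is true, but it is not a classical triviality — it is Silverman's critical-bandwidth theorem (or Carreira-Perpi\~n\'an--Williams), and it needs to be cited with the same care the paper gives to its own zero-count lemma. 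The paper instead perturbs to distinct variances, multiplies by $\rme^{-y^2/2}$ to turn the constant $\kappa_1$ into an $(N{+}1)$-st Gaussian, and cites the Kalai--Moitra--Valiant bound of $2n$ zeros for a combination of $n{+}1$ distinct-variance Gaussians, then takes a limit. Both roads are sound; yours is a legitimate alternative. The EPI lower bound is identical to the paper's.

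The genuine gap is in the explicit $O(\sfA^2)$ bound. You apply Tijdeman's lemma directly to $f_{Y^\star}-\kappa_1$ and correctly flag the obstacle — a \emph{lower} bound on $\max_{|z|\le tR}\bigl|\breve f_{Y^\star}(z)-\kappa_1\bigr|$ uniform in the unknown $P_{X^\star}$ and unknown $\kappa_1$ — but the remedy you sketch does not close it. Evaluating on the imaginary axis gives $f_{Y^\star}(\rmi h)=\frac{\rme^{h^2/2}}{\sqrt{2\pi}}\bbE\bigl[\rme^{\rmi hX-X^2/2}\bigr]$, and the real quantity $\bbE\bigl[\rme^{\rmi hX-X^2/2}\bigr]$ (a weighted characteristic function) can vanish at exactly the height $h$ you would want to use; there is no uniform lower bound available this way. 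The paper's missing ingredient is Rolle's theorem: it first passes to $\rmN([-R,R],f_{Y^\star}-\kappa_1)\le 1+\rmN([-R,R],f'_{Y^\star})$, which eliminates $\kappa_1$ from the problem entirely, and then lower-bounds the max of $|\breve f'_{Y^\star}|$ at the \emph{real} point $z=\sfA$, where $\bbE[X]=0$ and $X\le\sfA$ give $|\breve f'_{Y^\star}(\sfA)|\ge\frac{\sfA}{\sqrt{2\pi}}\rme^{-2\sfA^2}$. Without that Rolle step the Tijdeman denominator cannot be controlled, and the explicit constants $a_2,a_1,a_0$ do not fall out of your outline.
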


Since it consists of two parts, the proof of Theorem~\ref{thm:MainResult} is divided into two sections. While Section~\ref{sec:ProofMainResult} proves the order tight bounds \eqref{eq:BoundViaOutputPDF_lower} and \eqref{eq:BoundViaOutputPDF_upper}, Section~\ref{sec:BoundOnTheNumberOfOscillations} finds the lower and upper bounds presented in \eqref{eq:EPI_lowbd} and \eqref{eq:MainBound}.

\begin{rem}
Observe that the bounds in \eqref{eq:BoundViaOutputPDF_lower} and \eqref{eq:BoundViaOutputPDF_upper} are order tight. While the same cannot be said about the bounds in \eqref{eq:EPI_lowbd} and \eqref{eq:MainBound}, we conjecture that the order of the lower bound in \eqref{eq:EPI_lowbd} is the one that is tight. A possible approach for tightening the upper bound  is discussed in Section~\ref{sec:BoundOnTheNumberOfOscillations} along with a figure that supports our conjecture, see Figure~\ref{fig:NumberOfZeros}. 
\end{rem}

\begin{theorem}\label{thm:vector Gaussian}
Consider the amplitude constrained vector additive Gaussian channel $\bsY=\bsX+\bsZ $ where the input $\bsX $, satisfying $ \|\bsX\| \le \sfA$, is assumed to be independent from the white Gaussian noise $\bsZ\sim \cN(\boldsymbol{0},\bfI_{n})$. Let $\bsX^\star \sim P_{\bsX^\star}$ be the optimizing input for this channel. Then, $P_{\bsX^\star}$ is unique, radially symmetric, and the distribution of its amplitude, namely $P_{\|\bsX^\star \|} $, is a discrete distribution with
\begin{align}
 \left|\supp\left(P_{\|\bsX^\star\|}\right)\right| \le \mfra_{n_2} \sfA^2 + \mfra_{n_1} \sfA +\mfra_{n_0} \text{,}
\end{align}
where, denoting the gamma function by $\opGamma$,
\begin{align}
  \mfra_{n_2} &=  4+4\rme +\sqrt{8\rme +4}  \text{,} \label{eqn:def:a_n_2Vectorcase}  \\
   \mfra_{n_1} &= \left(3+4\rme + \sqrt{2\rme +1}\right)n + \sqrt{\frac{32}{n-1}} \text{,} \label{eqn:def:a_n_1Vectorcase} \\ 
   \mfra_{n_0} &= \log \frac{\rme^2 \sqrt{\pi}\opGamma\left( \frac n2\right) }{\opGamma\left(\frac{n-1}{2}\right)} \notag  \\ 
    & + \left(3+4\rme + \sqrt{2\rme +1}\right) \left(\frac n2 +\log \frac{\sqrt{\pi}\opGamma\left(\frac n2\right) }{\opGamma\left(\frac{n-1}{2}\right)}\right) \text{.} \label{eqn:def:a_n_0Vectorcase}
\end{align}
\end{theorem}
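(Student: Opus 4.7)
\textbf{Proof plan for Theorem~\ref{thm:vector Gaussian}.} The plan is to mirror the two-part strategy used for Theorem~\ref{thm:MainResult}, but reduced to a one-dimensional problem via radial symmetry. First I would recall (from, e.g., \cite{rassouli2016capacity,CISS2018}) that the optimal input is unique and radially symmetric, so it suffices to characterize $P_{\|\bsX^\star\|}$; writing $\bsX^\star = R^\star \bsU^\star$ with $R^\star = \|\bsX^\star\|$ independent of $\bsU^\star$ uniform on the unit sphere, the output $\bsY^\star$ is also radially symmetric, so the full channel reduces to the scalar transition kernel from $R^\star$ to $\|\bsY^\star\|$. The conditional density of $\|\bsY^\star\|$ given $R^\star=r$ is the noncentral chi density (a Gaussian kernel times a modified Bessel function of the first kind $I_{n/2-1}$ and the Jacobian $r^{n-1}$), and the KKT/Lagrangian conditions for \eqref{eq:Capacity} give the usual equality-on-support characterization
\begin{align}
D\bigl(f_{\|\bsY^\star\| \,|\, R^\star=r}\,\big\|\, f_{\|\bsY^\star\|}\bigr) = C_n(\sfA) \quad \text{for every } r\in\supp(P_{\|\bsX^\star\|}) \text{,}
\end{align}
with $\le$ for all $r\in[0,\sfA]$.

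Next, I would lift this to the number-of-zeros inequality by repeating the Karlin Oscillation argument of Section~\ref{sec:ProofMainResult}. The central point is that the radial Gaussian kernel $(r,\rho)\mapsto \rme^{-(r^2+\rho^2)/2}I_{n/2-1}(r\rho)\rho^{n-1}$ is strictly totally positive of all orders on $[0,\infty)\times[0,\infty)$; this is a classical fact about the noncentral chi family. Hence the variation-diminishing property applies to the shifted output density $r\mapsto f_{\|\bsY^\star\|}(r)-\kappa_n$ for an appropriate constant $\kappa_n$ obtained from the KKT condition (playing the role of $\kappa_1$ in the scalar case), giving an implicit bound of the form $|\supp(P_{\|\bsX^\star\|})|\le \rmN\bigl([0,R_n],\,f_{\|\bsY^\star\|}-\kappa_n\bigr)$ with $R_n$ a linear-in-$\sfA$ cutoff determined by tail control of $f_{\|\bsY^\star\|}$.

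The main work, and the principal obstacle, is then the explicit bound on $\rmN\bigl([0,R_n], f_{\|\bsY^\star\|}-\kappa_n\bigr)$, i.e., the generalization of Section~\ref{sec:BoundOnTheNumberOfOscillations} to the radial setting. The idea is to analytically continue $f_{\|\bsY^\star\|}(r)-\kappa_n$ in $r\in\bbC$ and apply Tijdeman's Number of Zeros Lemma on a disk of radius proportional to $R_n$. Away from the Gaussian envelope, zeros are governed by the same $\rme^{r^2/2}$ growth that gave the $\sfA^2$ rate in the scalar proof; the new ingredient is that one must also control the contribution from the Bessel factor $I_{n/2-1}$ and the $r^{n-1}$ weight. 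Using the entire-function representation $I_\nu(z)=(z/2)^\nu\sum_k (z^2/4)^k/(k!\,\opGamma(\nu+k+1))$ and the standard bound $|I_\nu(z)|\le \rme^{|z|}$ for the modulus on $\bbC$, the maximum-modulus piece in Tijdeman's inequality picks up only an extra linear-in-$r$ factor inside the exponential, which translates into the $O(\sfA)$ correction $\mfra_{n_1}\sfA$ and into the $n$-dependent constant $\mfra_{n_0}$. The expected normalization constants in \eqref{eqn:def:a_n_0Vectorcase} (the ratio $\sqrt{\pi}\,\opGamma(n/2)/\opGamma((n-1)/2)$) enter through the surface-area normalization when passing between $f_{\bsY^\star}$ and $f_{\|\bsY^\star\|}$, and the $\sqrt{32/(n-1)}$ term is the cost of bounding the Bessel contribution, which is why $\mfra_{n_1}$ carries that shape.

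Putting the three pieces together, namely radial reduction, Karlin-based implicit bound, and Tijdeman-based explicit zero count, yields the claimed quadratic-in-$\sfA$ bound on $|\supp(P_{\|\bsX^\star\|})|$ with the stated constants $\mfra_{n_2}, \mfra_{n_1}, \mfra_{n_0}$. I expect the algebraically delicate step to be the tail/cutoff argument that produces the correct linear cutoff $R_n$ and the handling of $I_{n/2-1}$ near $r=0$ where the $r^{\nu}$ factor forces one to work on an annulus rather than a full disk; this will determine the exact form of the lower-order constants but not the $O(\sfA^2)$ order.
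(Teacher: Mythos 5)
Your plan matches the paper's proof in structure: radial reduction to a one-dimensional problem, Karlin's Oscillation Theorem applied to the (strictly totally positive) noncentral chi/chi-squared kernel, and Tijdeman's Number-of-Zeros Lemma for the explicit $O(\sfA^2)$ count, with Bessel-function estimates controlling the modulus of the analytic extension.

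One place where your sketch would lead you astray if executed literally is the identification of the function to which the oscillation argument applies. You propose working with the shifted \emph{magnitude} density $f_{\|\bsY^\star\|}(\rho)-\kappa_n$, which vanishes like $\rho^{n-1}$ near the origin, and you correctly anticipate that this forces an awkward annulus argument. The paper instead works with the radial profile of the $n$-dimensional output density, i.e., $\tilde f(\rho)=f_{\bsY^\star}(\boldsymbol y)$ for $\|\boldsymbol y\|=\rho$, which differs from $f_{\|\bsY^\star\|}(\rho)$ by the Jacobian factor $\propto\rho^{n-1}$. Concretely, in the appendix this appears as the normalized kernel $g_n(x;P_R)$ (with $x=\rho^2$), defined by dividing the noncentral $\chi^2_n$ density by $x^{n/2-1}$; this is exactly the function that arises from expanding the KKT condition $D(P_{\bsY|\bsX=\bsx}\|P_{\bsY^\star})=C_n(\sfA)$ in polar coordinates, because the conditional term $h(\bsY|\bsX)=h(\bsZ)$ is constant while the cross term involves $\log\tilde f$ (not $\log f_{\|\bsY^\star\|}$). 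With this normalization $g_n$ does not vanish at the origin, no annulus is needed, and the Oscillation Theorem and Tijdeman step go through on a plain interval (with the Tijdeman disk centered at $\sfB_{\kappa_n}/2$ rather than at the origin so that the relevant interval $[0,\sfB_{\kappa_n}]$ sits inside). If you adopt this normalization your argument coincides with the paper's; as written, the mis-identified density would leave a gap in the Karlin step, since the KKT condition does not put $f_{\|\bsY^\star\|}-\kappa_n$ in the role of the sign-changing $\xi$.
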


The proof of Theorem~\ref{thm:vector Gaussian} benefits from the same technique that is used in the proof of Theorem~\ref{thm:MainResult}. For this reason, its presentation is postponed to Appendix~\ref{app:thm:thm:vector Gaussian}.

 \begin{rem} Note that when the vector channel is of dimension $2$, Theorem~\ref{thm:vector Gaussian} gives an upper bound on the number of shells of the optimal input distribution for the additive complex Gaussian channel with an amplitude constraint. 
 \end{rem}

For the sake of demonstrating the versatility of our novel method, proven next is an upper bound on the support size of the optimal input distribution for the scalar additive Gaussian channel with both an amplitude and a power constraint.

\begin{theorem}\label{thm:AGC amplitude and power constraint}

Consider the amplitude and power constrained scalar additive Gaussian channel $Y= X+ Z $ where the input $X$, satisfying $|X|\le \sfA $ and $\bbE[|X|^2]\le \sfP $, is assumed to be independent from the noise $Z\sim \cN(0,1)$. Assuming $\sfA \ge 1$, let $P_{X^\star}$ be the optimizing input distribution for this channel. Then,   $P_{X^\star}$ is a symmetric discrete distribution with
\begin{align}
\sqrt{1+ \frac{ 2 \min \left\{\sfA^2, 3 \sfP \right\}}{\pi \rme}}  &\le |\supp(P_{X^\star})|  \\
&\le a^{}_{\sfP_2} \sfA_\sfP^2 + a^{}_{\sfP_1} \sfA_\sfP + a^{}_{\sfP_0}  \text{,}
\end{align}
where
\begin{align} 
\sfA_\sfP &= \frac{\sfA \sfP}{\sfP - \log(1+\sfP) 1\left\{\sfP < \sfA^2\right\}} \text{,} \label{eqn:def:A_P}  \\
  a^{}_{\sfP_2} &= (1+2\lambda_\sfP )(9\rme+ 6\sqrt{\rme}+1) + 2(2-\lambda_\sfP)(1-2\lambda_\sfP )  \text{,}  \label{eqn:def:a_{sfP_2}} \\
   a^{}_{\sfP_1} &= (1+2\lambda_\sfP )(6\rme + 2\sqrt{\rme} )   \text{,} \label{eqn:def:a_{sfP_1}} \\
    a^{}_{\sfP_0} &= (1+2\lambda_\sfP )\rme + 2\log\left( \frac{2+4\sqrt{\rme}(1+2\lambda_\sfP ) }{1-2\lambda_\sfP } \right)  +1 \text{,} \label{eqn:def:a_{sfP_0}} \\
    \lambda_\sfP &= \frac{\log(1+ \sfP)}{2\sfP}  \cdot 1\left\{\sfP < \sfA^2\right\} \text{.} \label{eqn:def:lambda_P}
\end{align}
\end{theorem}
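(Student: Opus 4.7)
My plan is to adapt the blueprint of Theorem~\ref{thm:MainResult}, modifying it to accommodate the Lagrangian multiplier arising from the power constraint, and to handle separately the cases where this constraint is inactive versus active. When $\sfP \ge \sfA^2$, the power constraint is automatically slack (since $|X| \le \sfA$ implies $X^2 \le \sfP$), so $\lambda_\sfP = 0$, $\sfA_\sfP = \sfA$, and one checks that $a^{}_{\sfP_i} = a_i$ for $i \in \{0, 1, 2\}$; the bound thus reduces exactly to that of Theorem~\ref{thm:MainResult}. When $\sfP < \sfA^2$, the KKT conditions for the strictly concave maximization yield a multiplier $\lambda^\star > 0$ such that
\[
D(\phi_Z(\cdot - x) \,\|\, f_{Y^\star}) = C(\sfA, \sfP) + \lambda^\star(x^2 - \sfP), \quad x \in \supp(P_{X^\star}),
\]
with the corresponding $\le$ inequality for $|x| \le \sfA$, where $\phi_Z$ denotes the standard Gaussian density. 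To bound $\lambda^\star$, I use the concavity of $\sfP \mapsto C(\sfA, \sfP)$ together with $C(\sfA, 0) = 0$ to obtain $\lambda^\star \le C(\sfA, \sfP)/\sfP$, and then the Gaussian-capacity bound $C(\sfA, \sfP) \le \frac{1}{2}\log(1 + \sfP)$ to conclude $\lambda^\star \le \lambda_\sfP < \frac{1}{2}$.

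For the upper bound, I next perform a Karlin-theoretic reduction. Using $\int \phi_Z(y-x)\, y^2 \,dy = x^2 + 1$, the KKT equality rewrites as
\[
\int \phi_Z(y-x) \log\bigl[f_{Y^\star}(y) \,\rme^{\lambda^\star y^2}\bigr] dy = \log \tilde\kappa_\sfP,
\]
where $\tilde\kappa_\sfP := \rme^{-C(\sfA, \sfP) - h(Z) + \lambda^\star(\sfP + 1)}$; the KKT inequality becomes a corresponding inequality on this convolution. Karlin's oscillation theorem, applied as in Section~\ref{sec:ProofMainResult}, bounds $|\supp(P_{X^\star})|$ by the number of zeros of the map $y \mapsto f_{Y^\star}(y) \rme^{\lambda^\star y^2} - \tilde\kappa_\sfP$ on a suitable interval $[-R_\sfP, R_\sfP]$. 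Completing the square in the Gaussian exponent shows that, up to a positive multiplicative constant, this entire function is a Gaussian mixture with variance $1/(1 - 2\lambda^\star)$ whose atoms lie in $[-\sfA/(1-2\lambda^\star), \sfA/(1-2\lambda^\star)] \subseteq [-\sfA_\sfP, \sfA_\sfP]$. Applying Tijdeman's number-of-zeros lemma exactly as in Section~\ref{sec:BoundOnTheNumberOfOscillations}, with $\sfA$ replaced by the effective amplitude $\sfA_\sfP$, then delivers the claimed $O(\sfA_\sfP^2)$ bound; the factors $(1 + 2\lambda_\sfP)$ appearing in the coefficients $a^{}_{\sfP_i}$ encode the additional contribution of the Gaussian modulation to the disk-ratio $\max|F|/\min|F|$ that drives Tijdeman's estimate.

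For the lower bound, I apply the entropy power inequality to a uniform input on $[-a, a]$ with $a := \min\{\sfA, \sqrt{3\sfP}\}$, which satisfies both constraints (amplitude $a \le \sfA$, variance $a^2/3 \le \sfP$), yielding
\[
C(\sfA, \sfP) \ge \tfrac{1}{2}\log\Bigl(1 + \tfrac{2a^2}{\pi\rme}\Bigr) = \tfrac{1}{2}\log\Bigl(1 + \tfrac{2\min\{\sfA^2, 3\sfP\}}{\pi\rme}\Bigr).
\]
Combining with the standard chain $C(\sfA, \sfP) = I(X^\star; Y^\star) \le H(X^\star) \le \log|\supp(P_{X^\star})|$ and exponentiating then gives the claimed lower bound.

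The main obstacle will be the quantitative Tijdeman estimate on $f_{Y^\star}(y) \rme^{\lambda^\star y^2} - \tilde\kappa_\sfP$: the modulation $\rme^{\lambda^\star y^2}$ simultaneously widens the effective Gaussian scale (to $\sqrt{1/(1-2\lambda^\star)}$) and dilates the effective support (to $\sfA_\sfP$), so careful bookkeeping is needed to simultaneously extract the leading $O(\sfA_\sfP^2)$ behavior and the $(1+2\lambda_\sfP)$ prefactors in $a^{}_{\sfP_i}$, all while exploiting $\lambda^\star < \frac{1}{2}$ to ensure that the relevant entire function has controlled Gaussian-type decay along the real axis and controlled growth on large complex disks.
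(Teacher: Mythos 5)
Your proposal follows the paper's approach essentially step-for-step: Smith's KKT conditions with a Lagrange multiplier $\lambda$, the bound $\lambda \le \lambda_\sfP < \tfrac12$ via $C(\sfA,\sfP) \le \tfrac12\log(1+\sfP)$ (the paper uses the KKT inequality at $x=0$ rather than concavity, but both yield the same bound), the rewriting via $\int\phi_Z(y-x)(y^2-1)\,dy = x^2$ to reduce the support-size question to counting zeros of $e^{\lambda y^2}f_{Y^\star}(y) - \kappa_{\sfA,\sfP}$, Karlin's oscillation theorem plus Rolle's theorem plus Tijdeman's lemma for the upper bound, and the entropy-power inequality with a uniform input on $[-a,a]$, $a=\min\{\sfA,\sqrt{3\sfP}\}$, for the lower bound. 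The only genuinely new touch is your observation that $e^{\lambda y^2}f_{Y^\star}(y)$ is itself a Gaussian mixture with inflated variance $1/(1-2\lambda)$ and means in $[-\sfA/(1-2\lambda),\sfA/(1-2\lambda)]$ — the paper instead estimates $\max|\breve\mfrF'_{\sfA,\sfP}|$ on complex disks directly — and you correctly flag that the $(1+2\lambda_\sfP)$ prefactors require separate bookkeeping rather than a literal substitution of $\sfA_\sfP$ for $\sfA$.
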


With only small alterations, proof of Theorem~\ref{thm:AGC amplitude and power constraint} imitates that of Theorem~\ref{thm:MainResult} and is shown in Appendix~\ref{apdx:AGC amplitude and power constraint}.

\begin{rem}
  In the case when $\sfP \ge \sfA^2 $, the power constraint becomes inactive and Theorem~\ref{thm:AGC amplitude and power constraint} recovers the result of Theorem~\ref{thm:MainResult}.
\end{rem}

\section{ Proof for the First Part of Theorem~\ref{thm:MainResult}} 
\label{sec:ProofMainResult}

This section proves the first part of our main result in Theorem~\ref{thm:MainResult}, namely the bounds in \eqref{eq:BoundViaOutputPDF_lower} and \eqref{eq:BoundViaOutputPDF_upper}.

\subsection{On Equations Characterizing the Support of $P_{X^\star}$  } 

The first ingredient of the proof is  the following characterization of the optimal input distribution shown in \cite[Corollary 1]{smith1971information}.
\begin{lemma}\label{lem:SmithResult} 
 Consider the amplitude constrained scalar additive Gaussian channel $Y = X+Z$ where the input $X$, satisfying $|X|\le \sfA $, is independent from the noise $Z\sim \cN(0,1)$. Then, $P_{X^\star} $ is the capacity-achieving input distribution if and only if the following two equations are satisfied:
\begin{align}
i(x;P_{X^\star}) &= C(\sfA),   \quad   x \in  \supp(P_{X^\star}) \text{,}\\
i(x;P_{X^\star})  &\le  C(\sfA),   \quad   x \in  [-\sfA, \sfA] \text{,}
\end{align} 
where $C(\sfA)$ denotes the capacity of the channel, and 
\begin{align}
i(x;P_{X^\star})=  \int_{\bbR}   \frac{ \rme^{-\frac{(y-x)^2}{2}}}{\sqrt{2 \pi}}  \log  \frac{1}{f_{Y^\star}(y)} \rmd y  - h(Z)\text{,} \label{eqn:lem:def:i(x;P_X^star)}
\end{align} 
with $h(Z) =  \log \sqrt{2\pi \rme} $ denoting the differential entropy of the standard Gaussian distribution, and $f_{Y^\star}(y)$ denoting the output pdf induced by the input $P_{X^\star}$, that is, for $X\sim P_{X^\star} $,
\begin{align}
f_{Y^\star}(y)= \frac{1}{\sqrt{2 \pi}} \bbE \left[ \rme^{-\frac{(y-X)^2}{2}} \right] \text{.} 
\end{align} 
\end{lemma}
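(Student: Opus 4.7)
The plan is to recognize the stated conditions as the first-order (Karush–Kuhn–Tucker) optimality conditions for the concave functional $P_X \mapsto I(P_X)$ on the convex feasible set of probability measures supported on $[-\sfA,\sfA]$. The starting point is the identity $I(P_X) = h(Y) - h(Z)$, which reduces the problem to maximizing $h(Y) = -\int f_Y(y)\log f_Y(y)\,\rmd y$, where $f_Y$ is the output pdf induced by $P_X$. Since $f_Y$ is affine in $P_X$ and $f\mapsto -\int f\log f$ is strictly concave, $I(\cdot)$ is a concave functional of $P_X$; a maximizer therefore exists and the optimality condition reduces to a first-order inequality along admissible directions.

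Next, for any admissible probability measure $Q$ on $[-\sfA,\sfA]$, I would compute the one-sided directional derivative of $I$ at $P_{X^\star}$ along $P_\theta:=(1-\theta)P_{X^\star}+\theta Q$, whose induced output pdf is $f_{Y_\theta}=(1-\theta)f_{Y^\star}+\theta f_{Y_Q}$ with $f_{Y_Q}(y):=\int \tfrac{1}{\sqrt{2\pi}}\rme^{-(y-x)^2/2}\,\rmd Q(x)$. Differentiating under the integral sign, and noting that the linear-in-perturbation terms $-\int(f_{Y_Q}-f_{Y^\star})\,\rmd y$ vanish because both densities integrate to one, one obtains
\begin{align*}
\left.\frac{\rmd}{\rmd\theta}I(P_\theta)\right|_{\theta=0^+}
&= -\int_{\bbR} (f_{Y_Q}(y)-f_{Y^\star}(y))\log f_{Y^\star}(y)\,\rmd y \\
&= \int\left(i(x;P_{X^\star})-C(\sfA)\right)\rmd Q(x) \text{,}
\end{align*}
where the second equality follows by Fubini, the definition of $i(x;P_{X^\star})$ in \eqref{eqn:lem:def:i(x;P_X^star)}, and the identity $h(Y^\star) = C(\sfA)+h(Z)$ that holds at the optimum.

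Finally, concavity of $I$ implies that $P_{X^\star}$ is optimal if and only if this directional derivative is nonpositive for every admissible direction $Q$. Specializing to Dirac masses $Q=\delta_{x_0}$ gives the pointwise bound $i(x;P_{X^\star})\le C(\sfA)$ on $[-\sfA,\sfA]$, while integrating the pointwise bound against an arbitrary $Q$ recovers the general inequality, so the two formulations are equivalent. Using the always-valid identity $\int i(x;P_{X^\star})\,\rmd P_{X^\star}(x)=I(X^\star;Y^\star)$ together with $I(X^\star;Y^\star)=C(\sfA)$ at optimality, the pointwise inequality must be saturated $P_{X^\star}$-almost surely and hence, by continuity of $x\mapsto i(x;P_{X^\star})$, on all of $\supp(P_{X^\star})$. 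Conversely, if the two stated conditions hold, the derivative is nonpositive in every direction and concavity yields $I(Q)\le I(P_{X^\star})$ for every admissible $Q$, establishing optimality.

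The main technical obstacle is the rigorous justification of differentiating under the integral sign and invoking Fubini in the derivation above. Because $P_{X^\star}$ is supported on the bounded interval $[-\sfA,\sfA]$, the induced output pdf $f_{Y^\star}$ admits Gaussian-type two-sided tail bounds, yielding $|\log f_{Y^\star}(y)|=O(y^2+1)$ as $|y|\to\infty$; this polynomial growth is integrable against every Gaussian-convolution density $f_{Y_Q}$ appearing above, thereby legitimating all the exchanges of limits and integrals.
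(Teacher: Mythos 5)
The paper does not prove this lemma; it simply cites \cite[Corollary~1]{smith1971information}. Your reconstruction via the concavity of $P_X \mapsto I(P_X)$ and the nonpositivity of the one-sided directional derivative $\int (i(x;P_{X^\star}) - C(\sfA))\,\rmd Q(x)$ along perturbations $P_\theta = (1-\theta)P_{X^\star} + \theta Q$ is precisely the variational argument Smith uses (and the standard one in this literature), and it is correct, including the use of $\int i(x;P_{X^\star})\,\rmd P_{X^\star}(x) = I(X^\star;Y^\star) = C(\sfA)$ together with the pointwise bound to force equality on the support. The only step you wave at is the \emph{existence} of a maximizer, which requires weak$^*$ compactness of the probability measures on $[-\sfA,\sfA]$ and weak$^*$ continuity of $P_X \mapsto h(Y)$; Smith establishes the latter separately, and it is not immediate, so it deserves at least an explicit pointer rather than an unqualified ``a maximizer therefore exists.''
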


\begin{rem} \label{rem:RemarkOnSmith's_Result}
An immediate consequence of Lemma~\ref{lem:SmithResult} is the fact that 
\begin{align*}
 \supp(P_{X^\star}) 
 &\subseteq  \left\{x :    i(x;P_{X^\star}) - C(\sfA) = 0  \right \} \text{,} 
\end{align*} 
which leads to the following inequalities on the size of the support: 
\begin{align}
  |\supp(P_{X^\star})| &\le \rmN([-\sfA,\sfA], \Xi_\sfA(\cdot ;P_{X^\star})) \label{eq:not_informative} \\
  &\le \rmN(\bbR, \Xi_\sfA(\cdot ;P_{X^\star})) \text{,}
\end{align}
where the function $\Xi(\cdot ;P_{X^\star}) \colon \bbR \to \bbR  $ is defined as
\begin{align}
\Xi_\sfA(x;P_{X^\star})= i(x;P_{X^\star}) -C(\sfA)\text{.} \label{eqn:def:Xi_A(x;P_X)}
\end{align}
Note that, as it stands, the upper bound in \eqref{eq:not_informative} does not yet reveal any information on the discreteness of $P_{X^\star}$ as the right side might just as well be $\infty$.
\end{rem}

\subsection{Connecting the Number of Oscillations of $f_{Y^\star}$ to the Number of Masses in $P_{X^\star}$ }

This section gives an alternative proof that $P_{X^\star}$ is discrete by relating the cardinality of $ \supp(P_{X^\star})$ to the number of zeros of the shifted output pdf $f_{Y^\star} - \rme^{-C(\sfA)-h(Z)} $.  The following definition sets the stage. 

\begin{definition}[Sign Changes of a Function]  The number of sign changes of a function $\xi$ is given by 
\begin{align}
  \scrS(\xi) = \sup_{m\in \bbN } \left\{\sup_{y_1< \cdots< y_m} \scrN \{ \xi (y_i) \}_{i=1}^m\right\} \text{,}
\end{align}
where  $\scrN\{ \xi (y_i) \}_{i=1}^m$ is the number of changes of sign of the sequence $\{ \xi (y_i) \}_{i=1}^m $.

\end{definition} 

 Proven in \cite{karlin1957polya}, the following theorem is the main tool in connecting the number of zeros of a shifted output pdf $f_{Y^\star}$ to the number of mass points of a capacity-achieving input distribution $P_{X^\star}$.

\begin{theorem}[Oscillation Theorem \cite{karlin1957polya}]\label{thm:OscillationThoerem} Given open intervals $\bbI_1 $ and $\bbI_2$, let $p\colon \bbI_1\times \bbI_2  \to \bbR$ be a strictly totally positive kernel.\footnote{A function $f:\bbI_1 \times \bbI_2 \to \bbR$ is said to be strictly totally positive kernel of order $n$ if $\det\left([f(x_i,y_j)]_{i,j = 1}^{m}\right) >0 $ for all $1\le m \le n $, and for all $x_1< \cdots < x_m \in \bbI_1  $, and $y_1< \cdots < y_m \in \bbI_2$. If $f$ is strictly totally positive kernel of order $n$ for all $n\in \bbN$, then $f$ is called a strictly totally positive kernel.} For an arbitrary $y$, suppose $p(\cdot, y)\colon \bbI_1 \to \bbR $ is an $n$-times differentiable function. Assume that $\mu$ is a measure on $\bbI_2 $, and let $\xi \colon \bbI_2 \to \bbR $ be a function with $\scrS(\xi) = n$. For $x\in \bbI_1$, define
\begin{align}
\Xi(x)=  \int  \xi (y) p(x ,y) {\rm d} \mu(y) \text{.}
\end{align}
If $\Xi \colon \bbI_1 \to \bbR$ is an $n$-times differentiable function, then either $\rmN(\bbI_1, \Xi) \le n$, or $\Xi\equiv 0$.  
\end{theorem}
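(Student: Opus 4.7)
My plan for proving Theorem~\ref{thm:OscillationThoerem} is to exploit the variation-diminishing property of strictly totally positive kernels via the basic composition formula, which is the integral analogue of the Cauchy--Binet identity. The proof is by contradiction and fits into three main moves: decompose $\xi$ into sign-definite pieces, encode the hypothesis $\Xi(x_i)=0$ as a kernel relation on a matrix, and then rule that kernel relation out through strict total positivity.

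First, I would assume $\Xi\not\equiv 0$ and that $\Xi$ admits at least $n+1$ distinct zeros $x_0<x_1<\cdots<x_n$ in $\bbI_1$. Because $\scrS(\xi)=n$, after possibly negating $\xi$, I can decompose $\bbI_2$ into $n+1$ disjoint, ordered measurable sets $B_1<B_2<\cdots<B_{n+1}$ with $(-1)^{j-1}\xi(y)\ge 0$ on $B_j$, and with $|\xi|$ strictly positive on a positive-$\mu$-measure subset of each $B_j$. I would then form the $(n+1)\times(n+1)$ matrix
\begin{align*}
A_{i,j} \;=\; \int_{B_j} p(x_{i-1},y)\,|\xi(y)|\,d\mu(y),\qquad 1\le i,j\le n+1,
\end{align*}
and observe that $\Xi(x_{i-1})=\sum_{j=1}^{n+1}(-1)^{j-1}A_{i,j}$. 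The assumption $\Xi(x_{i-1})=0$ for every $i$ therefore says that the nonzero vector $((-1)^{j-1})_{j=1}^{n+1}$ lies in the kernel of $A$, forcing $\det(A)=0$.

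To contradict this, I would apply multilinearity of the determinant in the columns of $A$ to obtain
\begin{align*}
\det(A) \;=\; \int_{B_1\times\cdots\times B_{n+1}} \det\!\bigl([\,p(x_{i-1},y_j)\,]_{i,j=1}^{n+1}\bigr)\,\prod_{j=1}^{n+1}|\xi(y_j)|\,d\mu(y_j).
\end{align*}
On the domain of integration, the ordering of the $B_j$ gives $y_1<y_2<\cdots<y_{n+1}$, so strict total positivity of $p$ makes the inner determinant strictly positive. Combined with the second property of the decomposition, the integrand is non-negative and strictly positive on a set of positive product measure, so $\det(A)>0$, contradicting $\det(A)=0$. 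Hence $\rmN(\bbI_1,\Xi)\le n$ whenever $\Xi\not\equiv 0$.

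The step I expect to be the main obstacle is justifying the sign-definite decomposition of $\xi$ in the first move: one has to verify that the supremum definition of $\scrS(\xi)=n$ really does furnish $n+1$ ordered measurable sets on which $\xi$ obeys the prescribed alternating sign pattern \emph{and} is nontrivially supported, for otherwise the strict positivity of $\det(A)$ can fail. This is essentially a combinatorial/measure-theoretic argument that rules out additional interior sign alternations within any candidate $B_j$ by contradicting the maximality encoded in $\scrS(\xi)=n$. A secondary remark is that the $n$-times differentiability hypothesis is not strictly required for the conclusion as stated, since the paper's $\rmN(\cdot,\cdot)$ is blind to multiplicities; differentiability enters only in Karlin's classical version that tracks zeros with multiplicity and could be dispensed with here.
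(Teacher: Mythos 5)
The paper does not prove this theorem — it is stated and cited directly from Karlin's 1957 paper — so there is no in-paper argument to compare yours against. That said, your reconstruction via the basic composition formula (the integral Cauchy--Binet) together with the $\det(A)=0$ versus $\det(A)>0$ contradiction is precisely Karlin's variation-diminishing argument, and the skeleton is sound.

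Two remarks on the obstacle you flag. First, the \emph{ordered, alternating-sign} decomposition $B_1<\cdots<B_{n+1}$ really does follow from $\scrS(\xi)=n$: fix $y_1<\cdots<y_{n+1}$ realizing $n$ strict sign changes; any point that violates the block structure of signs relative to these $y_j$ would, when inserted into the sequence, produce $n+1$ sign changes, contradicting the definition of $\scrS$. The genuine gap is the second property you impose, that $|\xi|$ has positive $\mu$-measure support inside \emph{each} $B_j$. This does not follow from $\scrS(\xi)=n$, because $\scrS$ is a pointwise notion while $\Xi$ only sees $\xi$ modulo $\mu$-null modifications; if some $B_j$ carries $\xi$ only on a $\mu$-null set, the corresponding column of $A$ vanishes and $\det(A)=0$ holds vacuously, so no contradiction arises. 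The fix is a reduction: discard any $B_j$ with $\mu(B_j\cap\{|\xi|>0\})=0$ and merge its two neighbours (whose sign parities agree, so the merged block is still sign-definite $\mu$-a.e.); this yields an alternating decomposition into $m+1$ $\mu$-essentially-supported blocks with $m\le n$, and running your argument on $m+1$ of the hypothetical $n+1$ zeros gives the even stronger bound $\rmN(\bbI_1,\Xi)\le m\le n$. (Equivalently: replace $\xi$ by a $\mu$-a.e.\ equal $\tilde\xi$ with $\scrS(\tilde\xi)=m\le n$ and apply your proof to $\tilde\xi$.) If every block is $\mu$-null, then $\Xi\equiv 0$, which is the other horn of the dichotomy.

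Your closing observation is also correct: because $\rmN$ in this paper is blind to multiplicities, the determinant argument never uses the $n$-fold differentiability hypothesis, which is inherited from Karlin's formulation where it is needed to count sign changes of $\Xi$ with multiplicities.
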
 

Note that Theorem~\ref{thm:OscillationThoerem} is applicable in our setting as the Gaussian distribution is a strictly totally positive kernel \cite{karlin1957polya}. The following result shows the connection between the support size of $P_{X^\star}$ and the number of zeros of the shifted optimal output pdf $f_{Y^\star}$ and recovers the bounds in \eqref{eq:BoundViaOutputPDF_lower} and \eqref{eq:BoundViaOutputPDF_upper}.

\begin{theorem}\label{thm:BoundingSupportWithOscillations} 
The support set of the capacity-achieving input distribution  $P_{X^\star}$ satisfies
\begin{align}
\frac{1}{2} \rmN\left([-R, R], f_{Y^\star}  -  \kappa_1 \right)    &\le |\supp(P_{X^\star})| \label{lemmaeggregium:lowerbd} \\
&\le \rmN\left([-R, R], f_{Y^\star}  -  \kappa_1 \right) \label{lemmaeggregium:upperbd} \\ 
&<\infty \text{,}  \label{eq:SupporNumberOfZerosOfGaussian}
\end{align} 
where $\kappa_1=\rme^{-C(\sfA)-h(Z)}$ and\footnote{See Remark~\ref{rem:kappa's choice} and observe that $\kappa_1 \in \big(0, \frac1{\sqrt{2\pi}} \big)$.} $R>  \sfA + \log^{\frac{1}{2}} \left(\frac1{2 \pi \kappa_1^2}\right )$.
\end{theorem}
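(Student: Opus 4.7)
The plan is to obtain both inequalities via two applications of Karlin's Oscillation Theorem (Theorem~\ref{thm:OscillationThoerem}), together with a short analyticity observation for finiteness.

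\emph{Upper bound and finiteness.} Using $\int \phi(y-x)\,dy=1$, I would first rewrite Smith's slack function from Lemma~\ref{lem:SmithResult} as the Gaussian transform
$$\Xi_\sfA(x;P_{X^\star}) = \int_{\bbR} \phi(y-x)\,\xi(y)\,dy,\qquad \xi(y):=\log\frac{\kappa_1}{f_{Y^\star}(y)},$$
where $\phi(y-x)$ denotes the standard Gaussian density, a strictly totally positive kernel. Since $|X|\le \sfA$ yields the tail bound $f_{Y^\star}(y)\le \phi(|y|-\sfA)$ for $|y|\ge \sfA$, the choice of $R$ makes $\xi(y)>0$ whenever $|y|>R$, so every sign change of $\xi$ and every zero of $f_{Y^\star}-\kappa_1$ lies in $[-R,R]$. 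Because $f_{Y^\star}=\phi*P_{X^\star}$ is real-analytic on $\bbR$ and cannot equal $\kappa_1$ identically (else $\int f_{Y^\star}=\infty$), $f_{Y^\star}-\kappa_1$ has only finitely many zeros on $[-R,R]$, so $N<\infty$ and hence $\scrS(\xi)\le N<\infty$. Observing that $\Xi_\sfA\not\equiv 0$ (one checks $\Xi_\sfA(x)\to+\infty$ as $|x|\to\infty$ from the Gaussian-tail growth of $\log(1/f_{Y^\star})$ near $y=x$), Theorem~\ref{thm:OscillationThoerem} produces $\rmN(\bbR,\Xi_\sfA)\le \scrS(\xi)\le N$; combined with Remark~\ref{rem:RemarkOnSmith's_Result}'s inclusion $\supp(P_{X^\star})\subseteq\{\Xi_\sfA=0\}$, this delivers $K:=|\supp(P_{X^\star})|\le N<\infty$.

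\emph{Lower bound.} Knowing now that $P_{X^\star}=\sum_{i=1}^K p_i\,\delta_{x_i}$, I would apply Theorem~\ref{thm:OscillationThoerem} a second time, this time with the roles of input and output swapped. Introduce the positive measure $\nu:=P_{X^\star}+\kappa_1\,\lambda$ on $\bbR$, where $\lambda$ is the Lebesgue measure, together with the $\pm 1$-valued function
$$\eta(x):=\mathbf{1}\{x\in\supp(P_{X^\star})\}-\mathbf{1}\{x\notin\supp(P_{X^\star})\}.$$
Using $\lambda(\supp(P_{X^\star}))=0$, a brief calculation verifies
$$\int_{\bbR} \phi(y-x)\,\eta(x)\,d\nu(x)=f_{Y^\star}(y)-\kappa_1.$$
Since $\eta$ equals $+1$ at the $K$ isolated atoms of $\nu$ and $-1$ everywhere else, selecting the three-point windows $x_i-\varepsilon,\,x_i,\,x_i+\varepsilon$ around each atom yields $\scrS(\eta)=2K$. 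A second invocation of Theorem~\ref{thm:OscillationThoerem} applied to the nontrivial transform $f_{Y^\star}-\kappa_1$ then gives $\rmN(\bbR,\,f_{Y^\star}-\kappa_1)\le 2K$; since all zeros are already localized in $[-R,R]$, this reads $N\le 2K$, equivalently $N/2\le |\supp(P_{X^\star})|$.

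The main obstacle I expect is justifying the second application of the Oscillation Theorem, since $\eta$ is discontinuous and $\nu$ mixes atomic and absolutely continuous parts. Two points must be checked carefully: (i) that the paper's pointwise definition of $\scrS$ really returns $2K$ for this $\eta$ (handled by the three-point windows above), and (ii) that the integrability and differentiability hypotheses of Theorem~\ref{thm:OscillationThoerem} accommodate such a mixed measure. If a direct appeal is problematic, the natural workaround is to approximate each atom $\delta_{x_i}$ by a narrow smooth bump, apply the theorem to the resulting absolutely continuous density (which has exactly $2K$ sign changes for small enough bumps), and pass to the limit, relying on the analyticity of $f_{Y^\star}-\kappa_1$ and Hurwitz-type zero-count stability on the compact set $[-R,R]$.
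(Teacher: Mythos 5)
Your upper-bound argument (including the finiteness of $\rmN$) matches the paper's proof essentially line for line: the same rewriting of Smith's slack function as a Gaussian transform, the same appeal to Lemma~\ref{lem:LocationOFZeros_RealCase} to localize the zeros of $f_{Y^\star}-\kappa_1$ to $[-R,R]$, and the same application of the Oscillation Theorem to convert sign changes of $\xi$ into a bound on the zeros of $\Xi_\sfA$.

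Your lower-bound argument, however, takes a genuinely different route. The paper does \emph{not} invoke the Oscillation Theorem a second time. Instead it exploits the now-established discreteness directly: it writes $f_{Y^\star}$ as a finite mixture of $K$ unit-variance Gaussians, shifts so all centers are positive, perturbs the variances by small distinct $\epsilon_i>0$, multiplies through by $\rme^{-y^2/2}$ so that $f_{Y^\star}-\kappa_1$ becomes (in the limit) a combination of $K+1$ Gaussians with pairwise distinct means and variances, and then quotes a zero-count bound for such mixtures (\cite[Proposition 7]{kalai2010efficiently}), giving at most $2K$ zeros. Your alternative — viewing $f_{Y^\star}-\kappa_1$ as the Gaussian transform of the $\pm1$-valued $\eta$ against the mixed measure $\nu=P_{X^\star}+\kappa_1\lambda$, computing $\scrS(\eta)=2K$, and applying Karlin a second time with the roles of input and output swapped — is an elegant and symmetric way to get the same $N\le 2K$. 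Your identity $\int\phi(y-x)\eta(x)\,\rmd\nu(x)=f_{Y^\star}(y)-\kappa_1$ does check out, and you correctly flag the one genuine delicacy, namely whether the Oscillation Theorem (as stated for a general measure $\mu$) really licenses an infinite, mixed $\nu$ and a discontinuous $\xi$. Your proposed mollification workaround is essentially the right fix: replace each $\delta_{x_i}$ by a tall narrow bump so that the resulting density minus $\kappa_1$ is a continuous function with exactly $2K$ sign changes under a finite measure, then pass to the limit. It is worth noting that the paper's own perturbation argument leaves a comparable limit step implicit (passing $\epsilon_i\to 0$ and asserting the zero count is preserved), so neither route is free of this kind of closing step. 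Net effect: same bound, roughly comparable rigor, but your version buys a conceptually symmetric ``Karlin both ways'' picture, while the paper's version buys elementarity (no need to worry about the precise measure hypotheses of Karlin's theorem — only a concrete count of zeros of a finite Gaussian mixture).
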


\begin{proof} 
To see \eqref{lemmaeggregium:upperbd} and \eqref{eq:SupporNumberOfZerosOfGaussian}, observe that $\Xi_\sfA(x; P_{X^\star})$, defined in \eqref{eqn:def:Xi_A(x;P_X)}, can be written as follows:
\begin{align}
\Xi_\sfA(x;P_{X^\star})=   \int_{\bbR}  \frac{\xi_\sfA(y)}{\sqrt{2 \pi}} \rme^{-\frac{(y-x)^2}{2}} \rmd y \text{,}
\end{align} 
where
\begin{align}
\xi_\sfA(y)=\log  \frac{1}{f_{Y^\star}(y)} - C(\sfA) - h(Z). 
\end{align}
   First, observe that  it is impossible for  $\Xi_\sfA(x;P_{X^\star}) =0$ for all $x \in \mathbb{R}$ since  otherwise   $\xi_\sfA(y)$ would be zero for all $y  \in \mathbb{R}$. 
Furthermore,  using the fact that the Gaussian distribution is a strictly totally positive kernel,
\begin{align}
|\supp(P_{X^\star})|  &\le  \rmN(\bbR, \Xi_\sfA(\cdot ;P_{X^\star})) \label{frm:lemma 6} \\
&\le \scrS(\xi_\sfA) \label{eq:ApplicationOfTheOscillationTheorem}\\
&\le \rmN(\bbR,  \xi_\sfA )  \label{sign change vs no of zeros}  \\
&= \rmN\left(\bbR,  f_{Y^\star}  -  \kappa_1   \right) \label{eq:ZerosOf_xi_A(y)}\\
&= \rmN\left([-R, R] ,  f_{Y^\star}  -  \kappa_1   \right) \label{eq:ApplytingConcetrationBoundOFZeros} \\
&<\infty \text{,} \label{eq:ApplytingConcetrationBoundOFZeros2}
\end{align} 
where \eqref{frm:lemma 6} is a consequence of Lemma~\ref{lem:SmithResult} (see Remark~\ref{rem:RemarkOnSmith's_Result}); \eqref{eq:ApplicationOfTheOscillationTheorem} follows from  Theorem~\ref{thm:OscillationThoerem}; \eqref{sign change vs no of zeros} follows because the number of zeros is an upper bound on the number of sign changes; \eqref{eq:ZerosOf_xi_A(y)} follows by observing that $\xi_\sfA(y)=0$ if and only if  $ f_{Y^\star}(y) = \kappa_1$; and finally \eqref{eq:ApplytingConcetrationBoundOFZeros} and \eqref{eq:ApplytingConcetrationBoundOFZeros2} follow from Lemma~\ref{lem:LocationOFZeros_RealCase}  in Section~\ref{sec:BoundOnTheNumberOfOscillations}. 

 To see \eqref{lemmaeggregium:lowerbd}, using the fact that $\supp(P_{X^\star})$ is a finite set, suppose $|\supp(P_{X^\star}) | = n $, let $x_1< \ldots< x_n $ the elements of $\supp(P_{X^\star}) $, and write
\begin{align}
 f_{Y^\star}(y)  &= \frac{1}{\sqrt{2 \pi}}\sum_{i=1}^{n} \hspace{-0.03cm} P_{X^\star}( x_i )     \exp \left( \hspace{-0.03cm}- \frac12 (y-x_i)^2\right)   \text{,}
\end{align}
where, by the definition of $\supp(P_{X^\star}) $, the probabilities satisfy $P_{X^\star}( x_i ) > 0  $ for each $i= 1, \ldots, n$. Observing that the number of zeros of $f_{Y^\star}(y)  -  \kappa_1$ is the same as the number of zeros of the right-shifted function $ f_{Y^\star}(y-|x_1|-1)  -  \kappa_1 $, let
\begin{align}
  f(y) &= f_{Y^\star}(y-|x_1|-1)  -  \kappa_1 \\ 
 &= \sum_{i=1}^{n}  a_i    \exp \left(- \frac12 (y-u_i)^2\right) -a_0 \text{,}
\end{align}
where for $i= 0, 1, \ldots, n $ both $a_i>0 $ and $u_i>0 $ as
\begin{align}
u_i &=  x_i + |x_1| + 1  \quad \text{for } i = 1, \ldots n  \text{,} \\
  a_i  &=\begin{cases}
    \kappa_1 & i=0 \\
    \frac1{\sqrt{2\pi}}P_{X^\star}( x_i )  & i =1, \ldots, n \text{.}
  \end{cases}
\end{align}
Given arbitrary $0< \epsilon_1 <  \cdots <\epsilon_n $, consider the perturbed function
\begin{align}
  \widetilde f(y, \epsilon_1, \ldots, \epsilon_n ) = \sum_{i=1}^n a_i \exp \left(-\frac12(1+\epsilon_i)(y-u_i)^2 \right) - a_0 \text{.}
\end{align}
Note that
  \begin{align}
    \rme^{-\frac12 y^2} \widetilde f(y, \epsilon_1, \ldots, \epsilon_n ) = \sum_{i=0}^n b_i \exp\left(-\frac{(2+\epsilon_i)}{2}(y-v_i)^2 \right) \text{,}
  \end{align}
  where 
\begin{align}
  \epsilon_0 &= -1 \text{,} \\
    b_i &=  
  \begin{cases}
  -a_0  & i=0 \text{,}  \\
  a_i \exp \left(-\frac{1+\epsilon_i}{2(2+\epsilon_i) } u_i^2 \right)  &  i = 1, \ldots, n \text{,}
  \end{cases} \\
   v_i &= \begin{cases}
    0  & i=0 \text{,} \\
    \frac{1+\epsilon_i }{2+\epsilon_i } u_i & i = 1, \ldots, n \text{,}
   \end{cases} 
  \end{align}
is  a linear combination of $n+1$ distinct Gaussians with distinct variances and therefore has at most $2n $ zeros \cite[Proposition 7]{kalai2010efficiently}. Since this holds for any arbitrary choice of $\epsilon_i$'s and since $ \widetilde f(y, \epsilon_1, \ldots, \epsilon_n ) \to f(y) $ as $ ( \epsilon_1, \ldots, \epsilon_n ) \to (0,\ldots, 0)$, it follows that
\begin{align}
2|\supp(P_{X^\star}) | &\ge \rmN(\bbR, f(y)) \\
 &= \rmN(\bbR, f_{Y^\star}(y)  -  \kappa_1  )   \\
 &=  \rmN([R, R], f_{Y^\star}(y)  -  \kappa_1  ) \text{.}
\end{align} 
\end{proof}

\begin{rem}
With a different approach than the one taken in \cite{smith1971information}, observe that Theorem~\ref{thm:BoundingSupportWithOscillations} recovers the result of Smith \cite{smith1971information} showing that the support set of $P_{X^\star}$ is finite; hence, $P_{X^\star}$ is discrete with finitely many mass points. An advantage of the proof presented here is that the Fourier analysis required in the proof provided by \cite{smith1971information} is now completely avoided. Another advantage is that, since the presented proof is of the constructive nature, one can indeed attempt at counting the zeros of $f_{Y^\star}  -  \kappa_1 $, which is the topic of the next section. 
\end{rem}

\begin{rem} Theorem~\ref{thm:BoundingSupportWithOscillations} proves that the number of zeros of the shifted optimal output pdf $f_{Y^\star}  -  \kappa_1 $ gives an order tight upper bound on the support size of the optimal input pmf $|\supp(P_{X^\star})|$. This result can be considered as the main result of this paper.  
\end{rem}

\section{Proof of the Explicit Bounds in \eqref{eq:EPI_lowbd} and \eqref{eq:MainBound}} 
\label{sec:BoundOnTheNumberOfOscillations}

Section~\ref{sec:ProofMainResult} demonstrates that the number of mass points of $P_{X^\star}$ is  within a factor of two of the number of zeros of $f_{Y^\star}  -  \kappa_1 $ where $\kappa_1=\rme^{-C(\sfA)-h(Z) } $. In this section, we first provide an upper bound on the number of zeros of  $f_{Y^\star}  -  \kappa_1 $ and establish \eqref{eq:MainBound}. Additionally, through the use of entropy-power inequality, we also provide a lower bound on the support size of $P_{X^\star}$, yielding \eqref{eq:EPI_lowbd}. 

\begin{rem}
  A critical observation here is that, due to the lack of knowledge of the optimal input distribution $P_{X^\star}$ or the capacity expression $C(A)$, we do not know the optimal output distribution $f_{Y^\star} $ nor the constant $\kappa_1 = \rme^{-C(A)-h(Z)}$. Therefore, we must instead work with generic  $f_Y$ and $\kappa_1$ throughout this section.
\end{rem}

 \subsection{Bounds on the Number of Extreme Points of a Gaussian Convolution}
The aim of this subsection of the paper is to study the following problem: given an \emph{unknown constant} $ 0 \le  \kappa_1 \le \max_{b} f_Y(b) $,   find a worst-case upper bound on the number of zeros of the shifted output pdf
\begin{align*}
f_Y-\kappa_1 \text{,}
\end{align*} 
where $f_Y$ denotes the pdf of the random variable $Y=X+Z$, with $X$ being an \emph{arbitrary} zero mean\footnote{Since the channel is symmetric, the capacity-achieving input is symmetric. Therefore, there is no loss of optimality in restricting attention to zero mean inputs.} random variable at the input of the channel satisfying the amplitude constraint: $|X| \le \sfA$; $Z$ being the standard Gaussian random variable independent from $X$; and $Y$ being the random variable induced by the input $X$ at the output of this additive Gaussian channel.


As a starting point, before chasing after the number of zeros of $f_Y -\kappa_1$, the following lemma shows that the zeros of $f_Y -\kappa_1$ are always contained on an interval that is only ``slightly" larger than $[-\sfA,\sfA]$.

\begin{lemma}[On the Location and Finiteness of Zeros]\label{lem:LocationOFZeros_RealCase} For a fixed $\kappa_1 \in \left(0, \frac{1}{\sqrt{2 \pi}} \right]$ there exists some $\sfB_{\kappa_1} = \sfB_{\kappa_1}(\sfA) <\infty$ such that 
\begin{align}
 \rmN\left(\bbR,f_Y-\kappa_1 \right) &= \rmN\left([-\sfB_{\kappa_1}, \sfB_{\kappa_1}],f_Y-\kappa_1 \right) \\
 &<\infty\text{.}
 \end{align} 
In other words, there are finitely many zeros of $f_Y(y)-\kappa_1$ all of which are  contained within the interval $ [-\sfB_{\kappa_1}, \sfB_{\kappa_1}]$.    Moreover, $\sfB_{\kappa_1}$ can be upper bounded as follows: 
\begin{align}
\sfB_{\kappa_1} \le \sfA+ \log^{\frac{1}{2}} \left(\frac{1}{2 \pi \kappa_1^2} \right)\text{.}\label{eq:BOundONTheNumberOfZeros}\end{align}
\end{lemma}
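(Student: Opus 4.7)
The plan is to first establish the location estimate \eqref{eq:BOundONTheNumberOfZeros} by a direct Gaussian tail argument exploiting the amplitude constraint, and then to deduce finiteness of the zero set from real-analyticity of $f_Y$ on a compact interval.

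For the location bound, write $f_Y(y) = \bbE\!\left[\frac{1}{\sqrt{2\pi}}\exp\!\left(-\tfrac{1}{2}(y-X)^2\right)\right]$ and observe that, since $|X|\le \sfA$ almost surely, whenever $|y|>\sfA$ one has $|y-X|\ge |y|-\sfA$ almost surely; the map $x\mapsto (y-x)^2$ on $[-\sfA,\sfA]$ is minimized at $x=\mathrm{sgn}(y)\sfA$. Monotonicity of the Gaussian density then yields
\begin{align}
  f_Y(y) \le \frac{1}{\sqrt{2\pi}}\exp\!\left(-\tfrac{1}{2}(|y|-\sfA)^2\right), \qquad |y|>\sfA.
\end{align}
Requiring $f_Y(y)\ge \kappa_1$ forces $\tfrac12(|y|-\sfA)^2 \le \log\!\left(\tfrac{1}{\sqrt{2\pi}\,\kappa_1}\right)$, i.e. $|y|\le \sfA+\log^{1/2}\!\left(\tfrac{1}{2\pi\kappa_1^2}\right)$. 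Every zero of $f_Y-\kappa_1$ therefore lies inside $[-\sfB_{\kappa_1},\sfB_{\kappa_1}]$ with $\sfB_{\kappa_1}$ as claimed in \eqref{eq:BOundONTheNumberOfZeros}.

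For finiteness, the key observation is that $f_Y$ is the convolution of the distribution $P_X$ supported on the compact set $[-\sfA,\sfA]$ with the standard Gaussian density. Hence the formula $f_Y(z) = \bbE\!\left[\frac{1}{\sqrt{2\pi}}\exp\!\left(-\tfrac12(z-X)^2\right)\right]$ defines an entire function of $z\in\bbC$ (the exchange of expectation and power-series expansion is justified by the uniform boundedness of $|z-X|$ when $X\in[-\sfA,\sfA]$ and $z$ ranges over any compact disk). In particular, $f_Y$ is real-analytic on $\bbR$, so $f_Y-\kappa_1$ is a real-analytic function on $\bbR$.

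If $f_Y-\kappa_1$ had an accumulation point of zeros in $[-\sfB_{\kappa_1},\sfB_{\kappa_1}]$, real-analyticity (identity theorem) would force $f_Y\equiv \kappa_1$ on all of $\bbR$; but this contradicts $\int_\bbR f_Y(y)\,\rmd y = 1$ (also, the tail bound from Step~1 forces $f_Y(y)\to 0$ as $|y|\to\infty$). Consequently the zeros of $f_Y-\kappa_1$ are isolated, and, being confined to the compact set $[-\sfB_{\kappa_1},\sfB_{\kappa_1}]$, they are finite in number. I do not anticipate any serious obstacle: the only subtlety is to articulate cleanly why compact support of $P_X$ upgrades the convolution to an entire function (and thus yields real-analyticity on $\bbR$), which is what rules out an infinite-level-set pathology.
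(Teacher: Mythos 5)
Your proof is correct and follows essentially the same route as the paper's: bound $f_Y$ by a shifted Gaussian tail to localize the zeros, then use analyticity of the Gaussian convolution (the paper cites \cite[Proposition 8.10]{folland2013real}, you rederive it from compact support of $P_X$) together with the identity theorem to rule out infinitely many zeros on the compact interval. The only cosmetic difference is that you spell out the entire-function argument that the paper delegates to a citation.
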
 
\begin{proof}
Using the monotonicity of $\rme^{-u} $, for all $|y|>\sfA$,
\begin{align}
f_Y(y) &= \frac{1}{\sqrt{2 \pi}}\bbE \left[  \rme^{-\frac{(y-X)^2}{2}} \right]  \\
       &\le  \frac{1}{\sqrt{2 \pi}} \rme^{-\frac{(y-\sfA)^2}{2}}\text{.}  \label{frm:rght is dec}
\end{align} 
Since the right side of \eqref{frm:rght is dec} is a decreasing function for all $|y|>A $, it follows that 
\begin{align}
  f_Y(y) - \kappa_1 < 0 
\end{align}
for all
\begin{align}
  |y| > \sfA+ \log^{\frac{1}{2}} \left(\frac{1}{2 \pi \kappa_1^2} \right) \text{.}
\end{align}
This means that there exists $\sfB_{\kappa_1} $ satisfying \eqref{eq:BOundONTheNumberOfZeros} such that all zeros of $f_Y- \kappa_1$ are located within the interval $[-\sfB_{\kappa_1}, \sfB_{\kappa_1}]$.
 
 To see that there are finitely many zeros, recall the fact that a convolution with a Gaussian distribution preserves analyticity \cite[Proposition 8.10]{folland2013real}; hence $f_Y$ is an analytic function on $\bbR$. Standard methods (e.g., invoking Bolzano-Weierstrass Theorem and the Identity Theorem) yield the fact that analytic functions have finitely many zeros on a compact interval, which is the desired result. 
 \end{proof} 

Since the exact value of the constant $\kappa_1 $ is unknown, in counting the number of zeros of $f_Y - \kappa_1  $, a worst-case approach needs to be taken. In an attempt at doing so, the following elementary result from calculus provides a bound on the number of zeros of a function in terms of the number of its extreme points. As simple as it is, Lemma~\ref{lem:extreme points and oscillations} is one of the key steps in this paper. It states that, to find a bound on the number of zeros of $f_Y -\kappa_1$, it suffices to find a bound on that of $f'_Y$,  eliminating the dependence on the nuisance constant $\kappa_1$.

\begin{lemma}\label{lem:extreme points and oscillations}
Suppose that $f$ is continuous  on $[-R,R]$ and differentiable on $(-R,R)$. If $\rmN([-R,R],f) <\infty$, then 
\begin{align}
  \rmN([-R,R],f) \le   \rmN([-R,R],f ') + 1 \text{,}
\end{align}
where $f'$ denotes the derivative of $f$.  
\end{lemma}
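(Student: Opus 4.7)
The plan is to deduce this lemma directly from Rolle's theorem. Since $\rmN([-R,R],f)$ is finite by hypothesis, I can enumerate the zeros of $f$ in $[-R,R]$ as $x_1 < x_2 < \cdots < x_n$, where $n = \rmN([-R,R],f)$. If $n \le 1$ the desired inequality is trivial (the right-hand side is already at least $1$), so I would assume $n \ge 2$.

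For each $i \in \{1, \ldots, n-1\}$, the function $f$ is continuous on $[x_i, x_{i+1}]$, differentiable on $(x_i, x_{i+1})$, and satisfies $f(x_i) = f(x_{i+1}) = 0$. Rolle's theorem then guarantees a point $c_i \in (x_i, x_{i+1})$ with $f'(c_i) = 0$. Since the intervals $(x_i, x_{i+1})$ are pairwise disjoint, the resulting points $c_1, \ldots, c_{n-1}$ are distinct. Hence $f'$ has at least $n-1$ zeros in $(-R,R) \subseteq [-R,R]$, which gives
\begin{equation}
\rmN([-R,R], f') \ge n - 1 = \rmN([-R,R], f) - 1 \text{,}
\end{equation}
and rearranging produces the claimed bound.

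There is no substantive obstacle: the only subtlety worth flagging is the hypothesis $\rmN([-R,R], f) < \infty$, which is what lets me list the zeros in increasing order and invoke Rolle's theorem on consecutive pairs. Without finiteness, the zero set could have accumulation points and the ordering argument would break down; with finiteness, the proof is a one-line application of Rolle's theorem.
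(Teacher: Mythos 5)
Your proof is correct and is essentially the same as the paper's: both enumerate the finitely many zeros of $f$ in increasing order and apply Rolle's theorem on each consecutive pair to produce a distinct critical point, yielding $\rmN([-R,R],f')\ge \rmN([-R,R],f)-1$.
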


\begin{proof}
Let $x_1< \ldots< x_{\rmn_{0}}$ denote the zeros of $f$. By Rolle's Theorem, each of the intervals $(x_i, x_{i+1}) $ for $i=1, \ldots, \rmn_0-1$ contains at least one extreme point.
\end{proof}

Thanks to Lemma~\ref{lem:extreme points and oscillations}, to upper bound the number of zeros of $f_Y - \kappa_1 $, all that is needed  is to find an upper bound on the number of zeros of the derivative of $f_Y$, namely
\begin{align}
 f'_Y (y) = \frac1{\sqrt{2\pi}} \bbE\left[(X-y)\exp\left(-\frac{(y-X)^2}{2}\right)\right] \text{.} \label{eqn:def:f}
\end{align}

At this point, there are several trajectories that one could follow. For example, using the fact that $f'_Y $ is an analytic function (cf. \cite[Appendix B]{smith1969Thesis}) and letting $\breve f'_Y$ denote its complex analytic extension,
\begin{align}
\rmN([-R,R], f'_Y) &\le  \inf_{\epsilon>0}  \rmN(\cD_{R+\epsilon},\breve f'_Y) \label{frm:f=0,brve f=0} \\
&=   \inf_{\epsilon>0}  \frac{1}{2 \pi i} \oint\limits_{\del \cD_{R+\epsilon} }  \frac{  \breve  f''_Y(z)}{ \breve  f'_Y(z)}  \rmd z  \label{eq:ArgumentPrinciple} \\
&\le  \inf_{\epsilon>0}  (R+\epsilon)  \max_{|z|=R+\epsilon} \left| \frac{  \breve  f''_Y(z)}{\breve  f'_Y(z)} \right| \text{,} \label{frm:ML bd}
\end{align}
where in \eqref{frm:f=0,brve f=0} $\cD_t\subset \bbC $ is an open disc\footnote{In fact, $\cD_R$ can be any open connected set that contains the interval $[-R,R]$. For example, a rectangle of width $2(R+\epsilon)$ and  arbitrary hight $2 \epsilon$ is a typical choice.} of radius $t$ centered at the origin and the inequality follows because  $f'_Y(y)=0 \implies \breve f'_Y(y) = 0 $; in \eqref{eq:ArgumentPrinciple} $\del \cD_t $ denotes the boundary of the disc $\cD_t$ and equality follows from Cauchy's argument principle (e.g., \cite[Corollary 10.9]{bak1982complex}); and finally \eqref{frm:ML bd} follows from the ML inequality for the contour integral \cite[Chapter~4.10]{bak1982complex}.

Unfortunately, due to the implicit definitions of the functions $f''_Y$ and $f'_Y$, the maximization of the ratio $\breve f''_Y/\breve f'_Y$ in the right side of \eqref{frm:ML bd} does not seem to have a tractable explicit solution. Luckily, there are alternative, more tractable methods that yield an explicit upper bound on the number of zeros of $\breve f'_Y $. The method used in this paper is based on Tijdeman's Number of Zeros Lemma, which is presented next.

\begin{lemma}[Tijdeman's Number of Zeros Lemma \cite{Tijdeman1971number}]\label{lem:number of zeros of analytic function}
 Let $R, s, t$ be positive numbers such that $s>1$. For the complex valued function $f\neq  0$ which is analytic on $|z|<(st+s+t)R$, its number of zeros $  \rmN(\cD_R,f)$ within the disk $\cD_R = \{z\colon |z|\le R\} $ satisfies
\begin{align}
 & \rmN(\cD_R,f) \notag\\
  & \le \frac{1}{\log s} \left(\log \max_{|z|\le (st+s+t)R } |f(z)|   -\log \max_{|z|\le tR} |f(z)|\right) \text{.} \label{eq:Tijdeman}
\end{align}
\end{lemma}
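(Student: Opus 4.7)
The plan is to reduce the lemma to the classical Jensen inequality for the zero-counting function of an analytic function, applied after a well-chosen translation. I would first recall Jensen's inequality in the following form: if $g$ is analytic on the closed disk $\{w\colon |w|\le S\}$ with $g(0)\neq 0$, then for any $0<r<S$ the number $N_g(r)$ of zeros of $g$ in $|w|\le r$ (counted with multiplicity) satisfies
\begin{align*}
  N_g(r) \,\log(S/r) \le \log \max_{|w|=S}|g(w)| - \log|g(0)|.
\end{align*}
This is an immediate consequence of Jensen's formula
\begin{align*}
\log|g(0)| + \sum_{|w_k|<S}\log\frac{S}{|w_k|} = \frac{1}{2\pi}\int_0^{2\pi}\log|g(Se^{i\theta})|\,\rmd\theta,
\end{align*}
together with the fact that $\sum_{|w_k|\le r}\log(S/|w_k|) \ge N_g(r)\log(S/r)$, and the fact that the right-hand side is bounded by $\log M_S(g)$.

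The main idea of the proof is then a translation trick that replaces $\log|f(0)|$ in the Jensen bound with $\log M_{tR}(f)$. I would pick $w_0 \in \overline{\cD_{tR}}$ attaining $|f(w_0)| = \max_{|z|\le tR}|f(z)|$, which exists by compactness and continuity of $|f|$. Note that $w_0$ cannot be a zero of $f$, since otherwise $f$ would vanish on the whole inner disk and hence be identically zero by analyticity, contradicting $f\not\equiv 0$. Define $g(w) = f(w+w_0)$. Because $|w_0|\le tR$ and $(st+s+t)R-tR = s(t+1)R$, the function $g$ is analytic on the open disk $|w| < s(t+1)R$ (and, by taking limits, Jensen's inequality applies on the closed version).

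Next I would match up the two sides of the desired bound by tracking the radii. Any zero $z$ of $f$ in $\cD_R$ corresponds to a zero of $g$ at $w = z - w_0$ satisfying $|w|\le |z|+|w_0|\le R+tR = (t+1)R$, so $\rmN(\cD_R,f)\le N_g((t+1)R)$. Conversely, since $s(t+1)R + tR = (st+s+t)R$, every point in $|w|\le s(t+1)R$ maps back to a point in $|z|\le (st+s+t)R$, whence $\max_{|w|\le s(t+1)R}|g(w)| \le \max_{|z|\le (st+s+t)R}|f(z)|$. With the choice $S = s(t+1)R$ and $r = (t+1)R$ we have $S/r = s$, so the Jensen inequality for $g$ gives
\begin{align*}
  \rmN(\cD_R,f)\,\log s \le \log \max_{|z|\le (st+s+t)R}|f(z)| - \log|f(w_0)|,
\end{align*}
and since $|f(w_0)| = \max_{|z|\le tR}|f(z)|$, this is exactly \eqref{eq:Tijdeman}.

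The main obstacle here is bookkeeping: making the shifted disk (i) fit inside the domain of analyticity of $f$, and (ii) large enough after shifting to still cover every zero of $f$ in $\cD_R$. The specific outer radius $(st+s+t)R$ in the statement is the unique value that makes both constraints close cleanly with ratio $S/r = s$, yielding the factor $1/\log s$ in the final bound. A minor technical subtlety is that $f$ is only assumed analytic on the open disk $|z|<(st+s+t)R$, so Jensen's formula must be applied on a slightly smaller closed disk and the desired inequality recovered by letting the radius approach $(st+s+t)R$ from below, using continuity of the maximum modulus.
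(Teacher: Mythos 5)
The paper states this lemma without proof, simply citing Tijdeman's original article, so there is no in-paper argument to compare against. Your reconstruction is correct and is in fact the standard route (and essentially Tijdeman's own): translate by the point $w_0$ where the maximum on $\overline{\cD_{tR}}$ is attained, apply Jensen's zero-counting inequality to $g(w)=f(w+w_0)$ with radii $(t+1)R$ and $s(t+1)R$ so that the ratio is exactly $s$, and use the triangle inequality on $|w+w_0|$ to return to disks centered at the origin with radii $tR$ and $(st+s+t)R$. The bookkeeping of radii is right, the observation that $w_0$ cannot be a zero (else $f\equiv 0$) is needed and correctly supplied, and your remark about applying Jensen on a slightly smaller closed disk and taking a limit is the appropriate fix for $f$ being assumed analytic only on the open disk.
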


The following two lemmas find upper and lower bounds on absolute value of the complex analytic extension\footnote{The fact that the complex extension of $f_Y$, and hence that of $f'_Y $, is analytic on $\bbC$ is proven in \cite[Appendix B]{smith1969Thesis}.} of $f'_Y $ over a disc of finite radius  centered at the origin. 

\begin{lemma}\label{lem:upper bound on maxima over a disk}
Suppose $f'_Y\colon \bbR \to \bbR$ is as in \eqref{eqn:def:f} and let $\breve f'_Y\colon \bbC \to \bbC $ denote its complex extension. Then,
  \begin{align}
\max _{|z|\le \sfB} \left|\breve f'_Y(z)\right| \le \frac1{\sqrt{2\pi}}(\sfA+\sfB)\exp\left(\frac{\sfB^2}2\right) \text{.}
  \end{align}
\end{lemma}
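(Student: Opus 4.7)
The plan is to write down the complex extension of $f'_Y$ explicitly, then apply the triangle inequality inside the expectation and bound the modulus of the complex Gaussian factor by extracting the real part of the exponent.

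First, I would write
\begin{align}
\breve f'_Y(z) = \frac1{\sqrt{2\pi}} \bbE\left[(X-z)\exp\left(-\frac{(z-X)^2}{2}\right)\right], \quad z \in \bbC,
\end{align}
which is well-defined and analytic on $\bbC$ (as noted via the reference to \cite[Appendix B]{smith1969Thesis}). Bringing absolute values inside the expectation via Jensen and using the triangle inequality $|X-z| \le |X| + |z|$ combined with $|X|\le \sfA$ and $|z|\le \sfB$ yields
\begin{align}
\left|\breve f'_Y(z)\right| \le \frac{\sfA+\sfB}{\sqrt{2\pi}}\, \bbE\!\left[\left|\exp\!\left(-\frac{(z-X)^2}{2}\right)\right|\right].
\end{align}

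The remaining step is to bound the modulus of the complex Gaussian factor uniformly in $X \in [-\sfA,\sfA]$ and $|z|\le \sfB$. Writing $z = u+iv$ with $u,v\in\bbR$, a direct expansion gives
\begin{align}
\operatorname{Re}\!\left(-\frac{(z-X)^2}{2}\right) = \frac{v^2 - (u-X)^2}{2} \le \frac{v^2}{2} \le \frac{|z|^2}{2} \le \frac{\sfB^2}{2},
\end{align}
so that $\left|\exp\!\left(-\tfrac{(z-X)^2}{2}\right)\right| = \exp\!\left(\operatorname{Re}\!\left(-\tfrac{(z-X)^2}{2}\right)\right) \le \exp(\sfB^2/2)$. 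Since this bound is uniform in the randomness of $X$, it pulls out of the expectation and gives exactly the claimed bound
\begin{align}
\max_{|z|\le \sfB} \left|\breve f'_Y(z)\right| \le \frac{\sfA+\sfB}{\sqrt{2\pi}} \exp\!\left(\frac{\sfB^2}{2}\right).
\end{align}

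There is no real obstacle here: the only mildly non-trivial point is remembering that on $\bbC$ the Gaussian factor is no longer a decaying exponential (since $(z-X)^2$ can have negative real part when $z$ has a large imaginary component), which is precisely where the $\exp(\sfB^2/2)$ term comes from. Everything else is triangle inequality and the identity $|e^w|=e^{\operatorname{Re}(w)}$.
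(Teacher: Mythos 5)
Your proof is correct and follows essentially the same route as the paper: Jensen/modulus inequality to pull the absolute value inside the expectation, triangle inequality to bound $|X-z|\le\sfA+\sfB$, and the identity $|\rme^w|=\rme^{\operatorname{Re}(w)}$ to reduce the Gaussian factor to $\exp((\eta^2-(\xi-X)^2)/2)\le\exp(\sfB^2/2)$. The only cosmetic difference is that you factor out $\sfA+\sfB$ before bounding the exponential, whereas the paper carries both inside the expectation, but the key estimates are identical.
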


\begin{proof}
Using the standard rectangular representation of a complex number, let $z= \xi + i \eta   \in \bbC$, 
  \begin{align}
& \max_{|z|\le \sfB} \left|\breve f'_Y(z)\right|  \notag\\
  &= \max_{|z|\le \sfB} \left\{\frac1{\sqrt{2\pi}} \left|\bbE\left[(X-z)\exp\left(-\frac{(z-X)^2}{2}\right)\right]\right|\right\} \\
 &\le \max_{|z|\le \sfB} \left\{\frac1{\sqrt{2\pi}} \bbE\left[|X-z|\left|\exp\left(-\frac{(z-X)^2}{2}\right)\right|\right]\right\}  \label{frm:jensen yensen} \\ 
 &= \max_{|z|\le \sfB} \left\{\frac1{\sqrt{2\pi}} \bbE\left[|X-z|\exp\left(\frac{\eta^2-(\xi -X)^2}{2} \right)\right]\right\} \\ 
 &\le \max_{|z|\le \sfB} \left\{\frac1{\sqrt{2\pi}} \bbE\left[(|X|+|z|)\exp\left(\frac{\eta^2}{2} \right)\right]\right\} \label{frm:trian and e to the pow} \\
 &\le \frac1{\sqrt{2\pi}}(\sfA+\sfB)\exp\left(\frac{\sfB^2}2\right) \text{,}  \label{frm:fin in lem}
  \end{align} 
where \eqref{frm:jensen yensen} follows from Jensen's inequality; \eqref{frm:trian and e to the pow} follows from triangle inequality; and finally \eqref{frm:fin in lem} is because $|X| \le \sfA$, and $|z|\le \sfB$ implies $|\eta|\le \sfB$. 
\end{proof}

\begin{lemma}\label{lem:lower bound on maxima over a disk}
Suppose $f'_Y\colon \bbR \to \bbR$ is as in \eqref{eqn:def:f} and let $\breve f'_Y\colon \bbC \to \bbC $ denote its complex extension. For any $ |X| \le \sfA \le \sfB $, 
\begin{align}
    \max_{|z|\le \sfB} \left|\breve f'_Y(z)\right| \ge \frac \sfA{\sqrt{2\pi}} \exp\left(-2\sfA^2\right) \text{.} 
\end{align}

\end{lemma}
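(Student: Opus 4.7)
The plan is to lower bound $\max_{|z|\le \sfB} |\breve f'_Y(z)|$ by evaluating $f'_Y$ itself at a single cleverly chosen real point inside the disk, thereby avoiding any analysis of the complex extension altogether. The natural candidate is $z = -\sfA$: it satisfies $|-\sfA| = \sfA \le \sfB$ so it lies in the prescribed disk, and it forces the factor $X - z = X + \sfA$ to be non-negative throughout the support $[-\sfA,\sfA]$ of $X$, ruling out any sign cancellation inside the expectation. Thus
\begin{align}
  \max_{|z|\le \sfB} \left|\breve f'_Y(z)\right| \ge \left|f'_Y(-\sfA)\right| = \frac{1}{\sqrt{2\pi}}\, \bbE\!\left[(X+\sfA)\exp\!\left(-\tfrac{(X+\sfA)^2}{2}\right)\right] \text{.}
\end{align}

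Next, since $|X|\le \sfA$ implies the elementary bounds $0\le X+\sfA \le 2\sfA$ and hence $(X+\sfA)^2 \le 4\sfA^2$, the Gaussian factor is uniformly bounded below on the support of $X$ by $\exp(-2\sfA^2)$. Pulling this uniform bound outside the expectation reduces the estimate to a linear one in $X$, leaving
\begin{align}
  f'_Y(-\sfA) \ge \frac{\exp(-2\sfA^2)}{\sqrt{2\pi}}\, \bbE[X+\sfA] \text{.}
\end{align}

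The final ingredient is the zero-mean restriction on $X$, justified at no loss of optimality at the start of the section by the symmetry of the channel. This immediately yields $\bbE[X+\sfA] = \sfA$, which when plugged into the previous display produces the claimed bound $\frac{\sfA}{\sqrt{2\pi}}\exp(-2\sfA^2)$. I do not anticipate any real obstacle here; the one subtlety worth flagging is precisely the invocation of the zero-mean reduction, since without it $\bbE[X+\sfA]$ could degenerate to zero (e.g., $X=-\sfA$ almost surely), and the strategy of evaluating at a single real point would fail.
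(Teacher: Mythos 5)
Your proof is correct and is essentially the paper's own argument reflected through the origin: the paper evaluates at $z=+\sfA$ and obtains $\frac{1}{\sqrt{2\pi}}\bbE\left[(\sfA-X)\exp(-2\sfA^2)\right]$, while you evaluate at $z=-\sfA$ and obtain $\frac{1}{\sqrt{2\pi}}\bbE\left[(X+\sfA)\exp(-2\sfA^2)\right]$; by the symmetry of the setup these are the same estimate, and both close with $\bbE[X]=0$. Your flag about the zero-mean reduction is exactly the right subtlety to note, and it matches the paper's footnoted justification.
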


\begin{proof}
  Thanks to the suboptimal choice of $z=\sfA\le \sfB$, 
  \begin{align}
    \max_{|z|\le \sfB} \left|\breve f'_Y(z)\right| &\ge \frac{1}{\sqrt{2\pi}} \left|\bbE\left[(X-\sfA)\exp\left(-\frac{(\sfA-X)^2}{2}\right)\right]\right| \\ 
    &\ge  \frac{1}{\sqrt{2\pi}} \bbE\left[(\sfA-X)\exp\left(-2\sfA^2\right)\right] \label{frm:X is bdd int rv}  \\
    &= \frac \sfA{\sqrt{2\pi}} \exp\left(-2\sfA^2\right) \text{,} \label{frm:mean of X is zero}
  \end{align}
  where \eqref{frm:X is bdd int rv} follows because $|X|\le \sfA $; and \eqref{frm:mean of X is zero} is a consequence of $\bbE[X]=0$. 
\end{proof} 
By assembling the results of Lemmas~\ref{lem:number of zeros of analytic function},~\ref{lem:upper bound on maxima over a disk}~and~\ref{lem:lower bound on maxima over a disk}, Theorem~\ref{thm:NumberOfOscillations} below provides an upper bound on the number of oscillations of  a Gaussian convolution.   
\begin{theorem}[Bound on the Number of Oscillations of $f_Y$]\label{thm:NumberOfOscillations}
Let  $|X|\le \sfA< R$  for some fixed $R$.  
Then, the number of extreme points of $f_Y$, namely the number of zeros of $f'_Y$,  within the interval $[-R, R]$ satisfies
  \begin{align}
 &\rmN([-R,R],f'_Y) \notag\\
 &  \le  \min_{s> 1}\left\{\frac{\left( \frac{((\sfA+R)s+\sfA)^2}2 + 2\sfA^2+ \log\left(2+\frac{(\sfA+R)s}{\sfA}\right) \right)}{\log s} \right\} \text{.}
  \end{align}
\end{theorem}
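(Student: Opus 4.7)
The plan is to combine Tijdeman's Number of Zeros Lemma (Lemma~\ref{lem:number of zeros of analytic function}) with the upper and lower bounds on $|\breve f'_Y|$ obtained in Lemmas~\ref{lem:upper bound on maxima over a disk} and~\ref{lem:lower bound on maxima over a disk}. Since $f'_Y$ admits an entire complex extension $\breve f'_Y$, every real zero of $f'_Y$ lying in $[-R,R]$ is also a zero of $\breve f'_Y$ inside the closed disc $\cD_R = \{z \in \bbC : |z|\le R\}$, so
\begin{align}
  \rmN([-R,R], f'_Y) \le \rmN(\cD_R, \breve f'_Y) \text{.}
\end{align}

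Next, I would invoke Tijdeman's lemma with the parameter choice $t = \sfA/R$, which turns the inner radius into $tR = \sfA$ and the outer radius into $(st+s+t)R = s(\sfA+R)+\sfA$. This pairing is the crucial step, because it places the inner radius exactly at $\sfA$, which is the value of $|z|$ that Lemma~\ref{lem:lower bound on maxima over a disk} was designed to exploit (recall that lemma uses the suboptimal choice $z=\sfA$ and requires $\sfA \le \sfB$, so equality is admissible). Thus Lemma~\ref{lem:lower bound on maxima over a disk} supplies the lower bound
\begin{align}
\max_{|z|\le \sfA}\left|\breve f'_Y(z)\right| \ge \frac{\sfA}{\sqrt{2\pi}} \exp(-2\sfA^2) \text{,}
\end{align}
while Lemma~\ref{lem:upper bound on maxima over a disk} applied with $\sfB = s(\sfA+R)+\sfA$ yields
\begin{align}
\max_{|z|\le \sfB}\left|\breve f'_Y(z)\right| \le \frac{1}{\sqrt{2\pi}}\bigl(\sfA+\sfB\bigr)\exp\!\left(\tfrac{\sfB^2}{2}\right) \text{.}
\end{align}

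Substituting these two estimates into the right-hand side of \eqref{eq:Tijdeman}, the $1/\sqrt{2\pi}$ factors cancel after taking the logarithms and subtracting, producing the expression
\begin{align}
\frac{1}{\log s}\!\left(\frac{(s(\sfA+R)+\sfA)^2}{2} + 2\sfA^2 + \log\!\left(\frac{\sfA+\sfB}{\sfA}\right)\right) \text{,}
\end{align}
which, upon noting that $(\sfA+\sfB)/\sfA = 2 + s(\sfA+R)/\sfA$, matches the claimed bound. Since $s>1$ was arbitrary subject to Tijdeman's hypothesis, taking the infimum over $s>1$ gives the stated minimization. The main obstacle is not the estimation itself (both lemmas are already in hand) but rather the \emph{choice of parameters} $s,t$ in Tijdeman's inequality: one must pick $t$ so that the denominator $\max_{|z|\le tR}|\breve f'_Y|$ is large enough to beat the exponential blow-up of the numerator, and the selection $t=\sfA/R$ is exactly what is needed to align with the only available lower bound. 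Everything else reduces to straightforward algebraic simplification.
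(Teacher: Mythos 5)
Your proof is correct and follows essentially the same route as the paper: pass to the complex extension, apply Tijdeman's lemma, and plug in the two maximum bounds from Lemmas~\ref{lem:upper bound on maxima over a disk} and~\ref{lem:lower bound on maxima over a disk}. The only (cosmetic) difference is that you commit to $t=\sfA/R$ up front, whereas the paper first minimizes over $t\ge \sfA/R$ and then observes that $t=\sfA/R$ is the minimizer; both routes yield the identical final expression.
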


\begin{proof}
Let $\cD_R\subset \bbC$ be a disk of radius $R$ centered at $z_0=0$, and note that
  \begin{align}
& \rmN([-R,R],f'_Y)  \notag\\
 &\le \rmN(\cD_R,\breve f'_Y) \label{frm:zeros in real zeros in complex} \\
     &\le  \min_{s> 1,\, t\ge  \frac{\sfA}{R}} \left\{ \frac{\log  \frac{\max_{|z| \le (st+s+t)R}|\breve f'_Y (z)| }{\max_{|z| \le tR} |\breve f'_Y (z)|}}{\log s}  \right\} \label{frm:lem: bd on the no of zeros of analytic func} \\
     &\le  \min_{s> 1,\, t\ge  \frac{\sfA}{R}}\left\{\frac{\frac{(st+s+t)^2R^2}2 + 2\sfA^2+ \log\left(1+\frac{(st+s+t)R}{\sfA}\right) }{\log s} \right\} \label{frm:upp and low bdsg} \\
     &=  \min_{s> 1}\left\{\frac{ \frac{((\sfA+R)s+\sfA)^2}2 + 2\sfA^2+ \log\left(2+\frac{(\sfA+R)s}{\sfA}\right)}{\log s} \right\} \text{,}\label{frm:opt cho for t}
  \end{align}
where \eqref{frm:zeros in real zeros in complex} follows because zeros of $f'_Y $ are also zeros of its complex extension $\breve f'_Y$;  \eqref{frm:lem: bd on the no of zeros of analytic func} is a consequence of Lemma~\ref{lem:number of zeros of analytic function}; \eqref{frm:upp and low bdsg} follows from Lemmas~\ref{lem:upper bound on maxima over a disk} and~\ref{lem:lower bound on maxima over a disk}; and finally, in \eqref{frm:opt cho for t}, we use the fact that $t= \frac {A}{R}$ is the minimizer in the right side of \eqref{frm:upp and low bdsg}. 
\end{proof}

Finally, combining the results of Lemmas~\ref{lem:LocationOFZeros_RealCase}~and~\ref{lem:extreme points and oscillations}, and Theorem~\ref{thm:NumberOfOscillations}, the following corollary presents the desired result of this section. 

\begin{corollary}\label{cor:bd on the zeros of f_Y - kappa_1}
Given an arbitrary constant $\kappa_1\in \left(0, \frac1{2\pi}\right)$, suppose $R>  \sfA + \log^{\frac{1}{2}} \left(\frac1{2 \pi \kappa_1^2}\right )$. Then, the number of zeros of $f_Y -\kappa_1$ satisfies
\begin{align}
&  \rmN(\bbR, f_Y -\kappa_1)  \notag\\
   &= \rmN([-R, R], f_Y -\kappa_1) \\ 
  &\le 1+  \min_{s> 1}\left\{\frac{ \left( \frac{((\sfA+R)s+\sfA)^2}2 + 2\sfA^2+ \log\left(2+\frac{(\sfA+R)s}{\sfA}\right) \right)}{\log s}\right\} \text{.} \label{eqn:bd on zeros of f_Y- kappa_1}
\end{align} 
\end{corollary}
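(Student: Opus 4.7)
The claim is essentially a one-paragraph assembly of the three preceding results (Lemma~\ref{lem:LocationOFZeros_RealCase}, Lemma~\ref{lem:extreme points and oscillations}, and Theorem~\ref{thm:NumberOfOscillations}), so the plan is to carry out exactly that concatenation rather than introducing any new technique. There is no real obstacle; the only thing to be careful about is checking that the hypotheses of each lemma are satisfied in the setting of the corollary.

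First, I would appeal to Lemma~\ref{lem:LocationOFZeros_RealCase}. Since $\kappa_1 \in \left(0,\tfrac{1}{\sqrt{2\pi}}\right]$, that lemma gives a finite bound $\sfB_{\kappa_1} \le \sfA + \log^{1/2}\!\left(\tfrac{1}{2\pi\kappa_1^{2}}\right)$ outside of which $f_Y - \kappa_1$ is strictly negative, and therefore has no zeros. The hypothesis $R > \sfA + \log^{1/2}\!\left(\tfrac{1}{2\pi\kappa_1^{2}}\right) \ge \sfB_{\kappa_1}$ then immediately yields the equality
\begin{align}
\rmN(\bbR, f_Y - \kappa_1) \;=\; \rmN([-R,R], f_Y - \kappa_1)\text{,}
\end{align}
and, since $f_Y$ is real-analytic (being a Gaussian convolution), this quantity is finite.

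Next, I would apply Lemma~\ref{lem:extreme points and oscillations} to $f = f_Y - \kappa_1$ on $[-R,R]$. The function is continuous and differentiable on this compact interval, and by the previous step it has only finitely many zeros there, so the lemma applies and gives
\begin{align}
\rmN([-R,R], f_Y - \kappa_1) \;\le\; \rmN([-R,R],\, (f_Y - \kappa_1)') + 1 \;=\; \rmN([-R,R], f'_Y) + 1\text{,}
\end{align}
since the constant $\kappa_1$ drops out under differentiation. This is the crucial reduction: the unknown nuisance parameter $\kappa_1$ disappears.

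Finally, I would invoke Theorem~\ref{thm:NumberOfOscillations} verbatim to bound $\rmN([-R,R], f'_Y)$. The hypothesis $|X|\le \sfA < R$ holds by assumption, so the theorem yields
\begin{align}
\rmN([-R,R], f'_Y) \;\le\; \min_{s>1}\left\{\frac{\tfrac{((\sfA+R)s+\sfA)^{2}}{2} + 2\sfA^{2} + \log\!\left(2 + \tfrac{(\sfA+R)s}{\sfA}\right)}{\log s}\right\}\text{.}
\end{align}
Combining the three displays establishes \eqref{eqn:bd on zeros of f_Y- kappa_1} and concludes the proof. The main ``insight'' worth flagging in the writeup is that differentiation kills $\kappa_1$, which is precisely why Lemma~\ref{lem:extreme points and oscillations} is the right bridge between the location lemma and the Tijdeman-based counting theorem.
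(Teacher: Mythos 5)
Your proof is correct and takes exactly the same route the paper intends: the paper itself introduces the corollary with the phrase ``combining the results of Lemmas~\ref{lem:LocationOFZeros_RealCase}~and~\ref{lem:extreme points and oscillations}, and Theorem~\ref{thm:NumberOfOscillations}'' and does not spell the steps out, so your assembly is precisely the argument the authors have in mind. The only nit is a cosmetic one: the corollary's hypothesis is $\kappa_1\in\left(0,\frac1{2\pi}\right)$, which you should note lies inside the interval $\left(0,\frac1{\sqrt{2\pi}}\right]$ required by Lemma~\ref{lem:LocationOFZeros_RealCase}, rather than asserting the latter membership directly.
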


\begin{rem}\label{rem:kappa's choice}
Observe that in presenting the main result of this section, a ``worst-case scenario" approach is taken. Indeed, the result in \eqref{eqn:bd on zeros of f_Y- kappa_1} is independent of the choice of $\kappa_1 $. If $\kappa_1 \approx 0 $, then $\rmN(\bbR, f_Y -\kappa_1) \le 2$ and the bound above may be quite loose. In applying Corollary~\ref{cor:bd on the zeros of f_Y - kappa_1} in the next section, we let 
\begin{align}
  \kappa_1 = \frac{\rme^{-C(\sfA)}}{\sqrt{2\pi\rme}} 
\end{align}
 where $C(\sfA) $ denotes the capacity of the amplitude constrained additive Gaussian channel. In that case, 
 it can be shown that 
 \begin{align}
 \left( 2\pi\rme \left(1+\sfA^2\right)\right)^{-\frac12} \le \kappa_1 \le \left(2\pi\rme +4\sfA^2\right)^{-\frac12} \text{,} \label{eq:BoundsONKappa}
 \end{align}
and the result presented above is more relevant.   
\end{rem}

\begin{rem}
 We  believe that the bound in Theorem~\ref{thm:NumberOfOscillations}, and hence the one in Corollary~\ref{cor:bd on the zeros of f_Y - kappa_1}, can be further tightened.  In fact, we conjecture that\footnote{ Let $f(x)$ and $g(x)$ be two nonnegative valued functions.   Then,  $f$ is  $\opTheta(g(x))$ if and only if   $c_1 g(x) \le f(x) \le c_2 g(x)$  for some $c_1,c_2>0$  and all $x>x_0$.  }
\begin{align}
\max_{X \in [-\sfA,\sfA]}\rmN(\mathbb{R},f_{Y}'  )= \opTheta(\sfA). \label{eq:ConjectureOnNumberOfZeros}
\end{align}
 Fig.~\ref{fig:NumberOfZeros} demonstrates a result of an extensive computer search that supports the claim in \eqref{eq:ConjectureOnNumberOfZeros} and compares it to the current best upper bound in Corollary~\ref{cor:bd on the zeros of f_Y - kappa_1}. 
  \begin{figure}[h!]
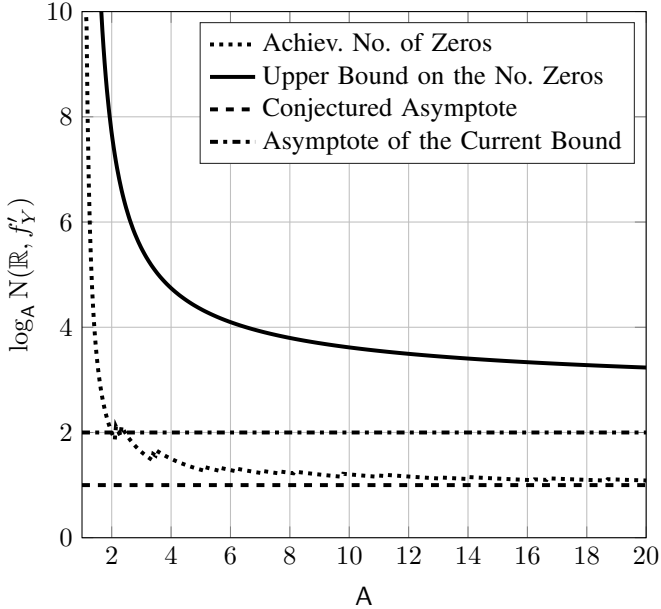

\center
%
%
%
\caption{Plot of the logarithm (in base $\sfA$) of the number of zeros of $f_Y'$. The solid black line uses the upper bound on the number of zeros in Corollary~\ref{cor:bd on the zeros of f_Y - kappa_1} with the bound on $\kappa_1$ in \eqref{eq:BoundsONKappa} and the dashed-dotted line is the asymptote of this upper bound.  The dashed line is the conjectured asymptote in \eqref{eq:ConjectureOnNumberOfZeros}.   The dotted line is the number of zeros found through a worst-case exhaustive numerical search.   }
\label{fig:NumberOfZeros}
\end{figure}

A possible bottleneck in our proof above is the bound in \eqref{frm:zeros in real zeros in complex}, where the function is extended to the complex plane, and the number of zeros are counted over a disk rather than over a finite interval. Doing so is effectively doubling the dimension of the problem. In other words, the produced order $\sfA^2$ bound follows because extending $f'_Y $ to the complex domain potentially creates a ton of zeros  that our analysis counts even though the function in the real domain cannot possibly realize those zeros. To work around this issue, one might consider using another version of Tijdeman's Lemma \cite[Lemma 1]{il1996counting} which works with arbitrary open sets, unlike the bound in Lemma~\ref{lem:number of zeros of analytic function} which works only over discs in the complex plane. Improvement is left for the future work. 
\end{rem}

\subsection{Proof of the Upper Bound in \eqref{eq:MainBound} } 

We begin by simplifying the previously provided upper bound on $\sfB_{\kappa_1}$. Note that an amplitude constraint $|X| \le \sfA$ induces a second moment constraint $\bbE[X^2]\le \sfA^2$, and therefore
\begin{align}
C(\sfA) &= \max_{\substack{ |X|\le \sfA\\ \bbE[X^2]\le \sfA^2 }} I(X;Y)     \\
 &\le \frac{1}{2} \log \left( 1+\sfA^2 \right)\text{.}  \label{akuaku}
\end{align} 
Since the differential entropy of a standard normal distribution is $h(Z) = \frac12 \log (2\pi \rme) $, \eqref{akuaku} implies that
\begin{align}
\frac1{\kappa_1} &= \exp(C(\sfA) + h(Z))  
\\ &\le  \sqrt{2 \pi \rme \left(1+\sfA^2\right)}\text{.} \label{eq:lowerBOundOnKappaALT}
\end{align} 
Capitalizing on the bound in \eqref{eq:BOundONTheNumberOfZeros},
\begin{align}
\sfB_{\kappa_1} &\le  \sfA+\sqrt{1+\log(1+\sfA^2)} \\
& \le 2\sfA+1 \text{,} \label{eqn:required later in the amp and pow const case}
	\end{align}
where the last inequality follows because  $\log(1+x) \le x$ and  $\sqrt{a+b} \le \sqrt{a}+\sqrt{b}$. 


As the finalizing step, letting $R \leftarrow (2\sfA + 1) $  in Theorem~\ref{thm:BoundingSupportWithOscillations} above, an application of Corollary~\ref{cor:bd on the zeros of f_Y - kappa_1} in Section~\ref{sec:BoundOnTheNumberOfOscillations} yields
\begin{align}
&\rmN([-\sfB_{\kappa_1},\sfB_{\kappa_1}], f_{Y^\star}  - \kappa_1 ) -1 \notag\\
&=\rmN([-2\sfA-1,2\sfA+1], f_{Y^\star}  - \kappa_1 )-1 \\
&\le  \min_{s> 1}\left\{\frac{\left( \frac{ \left(  (3\sfA+1)s+ \sfA \right)^2}2 + 2\sfA^2+ \log\left(2+\frac{( 3\sfA+1)s}{\sfA}\right) \right)}{\log s} \right\}\\
&\le  \min_{s> 1}\left\{\frac1{\log s} \left( \frac{ \left((3s+1)\sfA+ s \right)^2}2 + 2\sfA^2+ \log\left(2+4s\right) \right)\right\} \label{eq:BOundOnLogorithmiterms}\\
& \le  \left( \hspace{-0.03cm} \rme+ \hspace{-0.03cm} 2 \hspace{-0.04cm}\log \hspace{-0.04cm}\left(4 \sqrt{\rme}+2\right) + \hspace{-0.03cm} (4\rme+2\sqrt{\rme}) \sfA+ \hspace{-0.03cm}(5+ 4 \sqrt{\rme}+4 \rme)\sfA^2 \right) \label{eq:BoundWithS}\\
& = a_2 \sfA^2 +a_1 \sfA+ a_0 - 1 \text{,} \label{eq:BoundingLinearTerm}
\end{align}
where  \eqref{eq:BOundOnLogorithmiterms} follows because $3\sfA+1 \le 4\sfA$ for\footnote{The unessential assumption that $\sfA \ge 1 $ is just for simplifying the presentation. In the case when $\sfA\le 1 $, the optimality of $P_{X^\star}$ that is equiprobable on $\cX = \{-\sfA, \sfA \} $ is known  \cite{sharma2010transition}.} $\sfA \ge 1$;  \eqref{eq:BoundWithS} follows by choosing a suboptimal value $s= \sqrt{\rme}$ in the minimization; and  \eqref{eq:BoundingLinearTerm} follows by letting $a_2=9\rme+ 6\sqrt{\rme}+5 $,  $a_1=6\rme+2\sqrt{\rme}$ and  $a_0=\rme+2\log\left(4 \sqrt{\rme}+2\right)+1$.   

 \hfill \qedsymbol

\begin{rem} A more careful optimization of \eqref{eq:BoundWithS}  over the parameter $s$ would lead to better absolute constants $a_0 $, $a_1$ and $a_2$. However, note that the order $\sfA^2$  in \eqref{eq:BoundWithS} would not change.   
\end{rem}

\subsection{Proof of the Lower Bound in \eqref{eq:EPI_lowbd}}
Using the fact that  the optimizing input distribution is discrete with finitely many points and denoting by  $ H(P_{X^\star}) $ the entropy of the optimizing input distribution $P_{X^\star}$, it follows that
\begin{align}
\frac{1}{2} \log \left(1+\frac{2\sfA^2}{\pi \rme }  \right) &\le \max_{X \colon |X| \le \sfA} I(X;Y) \text{.}  \label{eq:ShannonLoweBound} \\
 &\le  H( P_{X^\star})  \label{entropy lanbutr} \\ 
 &\le \log \left( |\supp(P_{X^\star})|   \right) \text{,} \label{eq:EntropyBound}
\end{align}
where \eqref{eq:ShannonLoweBound} is a lower bound due to Shannon  \cite[Section 25]{Shannon:1948}.\hfill \qedsymbol

\section{Concluding Remarks}
\label{sec:Conclusion}

This paper has introduced several new tools for studying the capacity of the amplitude constrained additive Gaussian channels. Not only are the introduced tools strong enough to show that the optimal input distribution is discrete with finite support, but they are also able to provide concrete upper bounds on the number of elements in that support. The main result of this paper is that the number of zeros of the downward shifted optimal output density provides an implicit upper bound on the support size of the capacity-achieving input distribution. While this upper bound has been shown to be tight within a factor of one half, it can also be used as a means to get an explicit upper bound on the support size. 
  
 As a note on its flexibility, the novel method that is described in this paper has been demonstrated to be easily generalizable to other settings such as a scalar additive Gaussian channel with both peak and average power constraints.
 In addition to the scalar case, the method is shown to work for a vector Gaussian channel with an amplitude constraint $\sfA$.  In particular, for an optimal input  $\bsX^\star$, it has been shown that  its magnitude $\| \bsX^\star \|$ is  a discrete random variable with at most $ \rmO(\sfA^2)$ mass points for any fixed  dimension $n$.   

An interesting direction for further work in this area would be to sharpen the explicit bounds on the number of mass points.  Indeed, it has been conjectured with sufficient supporting arguments that the correct order on the number of points  should be $ \rmO(\sfA)$  rather than $ \rmO(\sfA^2)$. Although, finding a better explicit upper bound will ultimately still be related to finding the maximum number of oscillations of a Gaussian convolution within a bounded region.

As has been argued by Smith in  \cite[p.~40]{smith1969Thesis}, showing discreteness of the input distribution without providing bounds on the number of mass points does not reduce the maximization of the mutual information  from an infinite dimensional optimization (i.e., over the space of  all distributions) to the finite dimensional optimization (i.e., over $\mathbb{R}^n$ for some fixed $n$).    This issue has also been pointed out in \cite[p.~2346]{huang2005characterization}.  The results of this work, in fact, achieves this objective and reduce the infinite dimensional optimization over probability spaces to that in{\footnote{Considering the symmetries and properties of a distribution, the dimension of the search space can be reduced to $\bbR^{n}$.}} $\mathbb{R}^{2n}$ where $n=\rmO(\sfA^2)$, and  where $\mathbf{v} \in \mathbb{R}^{2n}$ consists $\mathbf{v}=[p_1,\ldots, p_n, x_1,\ldots, x_n]$ where $p_i$ is the probability mass of  the location $x_i$.   This dimensionality reduction potentially enables applications of efficient optimization algorithms with convergence guarantees such as the gradient descent and is the topic of our current investigation. 

It is highly likely that the presented approach generalizes to other (possibly non-additive) channels where channel transition probability is given by a strictly totally positive kernel (e.g., Poisson channel); the interested reader is referred to \cite{WCNC2019kernels} for a preliminary work on utilizing the techniques of this paper to non-additive settings.   The optimization technique used in this paper can also be adapted to other functionals over probability distributions such as the Bayesian minimum mean squared error; the interested reader is referred to \cite{ITW2018mmse} for this extension.

Finally, it would interesting to see if the results of this paper can be extended to  multiuser channels such as a multiple access channel with an amplitude constraint on the inputs where it is known that the discrete inputs are sum-capacity optimal \cite{mamandipoor2014capacity}, yet there are no bounds on the number of mass points of the optimal inputs.

\begin{appendices}
\section{Proof of Theorem~\ref{thm:vector Gaussian}}\label{app:thm:thm:vector Gaussian}

The starting point is the following sufficient and necessary conditions that can be found in\footnote{The most general result is shown in \cite{rassouli2016capacity}. However, a pleasing formulation such as the one in Lemma~\ref{lem:vector case necessary and sufficient cond.} is hidden behind the heavy notation of \cite{rassouli2016capacity}. We apply change of variables to provide much simpler presentation.} \cite{ShamQuadrat, rassouli2016capacity}. 
  
\begin{lemma}\label{lem:vector case necessary and sufficient cond.}
Consider the amplitude constrained vector additive Gaussian channel $\bsY = \bsX + \bsZ$ where the input $\bsX$, satisfying $\|\bsX \|\le \sfA $, is independent from the white Gaussian noise $\bsZ \sim \cN(\boldsymbol{0},\bfI_{n}) $. If $\bsX^\star$ is an optimal input, the distribution of its magnitude, namely  $P_{R^\star} = P_{\|\bsX^\star \|} $, satisfies
\begin{align}
i_{n}(r; P_{R^\star})  &  =   C_n(\sfA) + \nu_n \text{,} \quad   r \in \supp(P_{R^\star}) \text{,}\\
i_{n}(r; P_{R^\star})  & \le  C_n(\sfA) + \nu_n\text{,} \quad r \in [0, \sfA] \text{,} 
\end{align}
where $C_n(\sfA)$ denotes the capacity of the channel, and
\begin{align}
i_{n}(r; P_{R^\star}) &=  \int_0^\infty f_{\chi^2_n}(x|r)  \log  \frac{1}{g_n(x;P_{R^\star})} {\rmd}x \text{,} \\
f_{\chi^2_n}(x|r) &=  \frac12 \exp \left(- \frac{ x +r^2}{2} \right) \left(\frac{\sqrt{x}}{r}\right)^{\frac{n}{2}-1} \rmI_{\frac{n}{2}-1} \big(r \sqrt{x}\big)  \text{,} \label{eqn:def:noncentral chi squared}  \\
g_n(x;P_{R^\star})&= \int_0^A  \frac{2f_{\chi^2_n}(x|r)}{x^{\frac n2 -1} }  \rmd P_{R^\star}(r)\text{,} \label{eq:def:g_n(x;P_R)} \\
\nu_n &= \frac{n}{2} + \log \left(2^{\frac n2-1} \opGamma\left(\frac n2\right)\right) \label{eq:def:nu-VectorChannelCase}
 \text{,}
\end{align} 
with $\rmI_n(x)$ denoting the modified Bessel function of the first kind of order $n$. 
\end{lemma}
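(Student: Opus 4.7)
My proof plan for Lemma~\ref{lem:vector case necessary and sufficient cond.} has three ingredients: a symmetrization step that reduces the search to scalar distributions on $[0,\sfA]$, a change-of-variables computation that rewrites the mutual information in terms of $g_n$ and the non-central chi-squared kernel, and a standard KKT/variational argument that produces the stated equalities and inequalities.

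First, since the channel law $\bsY = \bsX + \bsZ$ is rotationally invariant (because $\bsZ \sim \cN(\boldsymbol{0},\bfI_n)$) and the constraint set $\{\|\bsX\|\le \sfA\}$ is rotationally invariant, one can symmetrize any candidate $P_\bsX$ by averaging over the orthogonal group. Mutual information is concave in the input distribution, so by Jensen's inequality the symmetrized (radially symmetric) input does at least as well. Consequently, it suffices to optimize over $\bsX^\star = R^\star \bsU$, where $\bsU$ is uniform on $\bbS^{n-1}$ and is independent of $R^\star \in [0,\sfA]$, reducing the problem to a choice of $P_{R^\star}$.

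Second, I would carry out the following change-of-variables. Given $R^\star = r$, one computes
\begin{align}
f_{\bsY\mid R^\star}(\bsy\mid r) = \frac{\rme^{-(\|\bsy\|^2 + r^2)/2}}{(2\pi)^{n/2}} \, \bbE_\bsU\!\left[\rme^{r\bsU^\top \bsy}\right],
\end{align}
and the inner expectation is radial in $\bsy$ and equals $\Gamma(n/2)\bigl(\tfrac{2}{r\|\bsy\|}\bigr)^{n/2-1} \rmI_{n/2-1}(r\|\bsy\|)$. Setting $w=\|\bsy\|^2$, the surface-area factor $S_{n-1} = 2\pi^{n/2}/\Gamma(n/2)$ collects, and the resulting density of $W=\|\bsY\|^2$ given $R^\star = r$ is exactly the non-central chi-squared kernel $f_{\chi^2_n}(\cdot\mid r)$ in \eqref{eqn:def:noncentral chi squared}. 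Moreover, one verifies the key identity $f_\bsY(\bsy) = g_n(\|\bsy\|^2; P_{R^\star})/S_{n-1}$, with $g_n$ exactly as defined in \eqref{eq:def:g_n(x;P_R)}. Using this and the Jacobian of $\bsy \mapsto (w,\bsu)$, the output differential entropy becomes
\begin{align}
h(\bsY) = -\bbE[\log g_n(W;P_{R^\star})] + \log S_{n-1} = \bbE_{R^\star}\bigl[i_{n}(R^\star; P_{R^\star})\bigr] + \log S_{n-1}.
\end{align}
Subtracting $h(\bsZ) = \tfrac{n}{2}\log(2\pi\rme)$ and simplifying produces
\begin{align}
I(\bsX^\star;\bsY) = \bbE_{R^\star}\bigl[i_{n}(R^\star; P_{R^\star})\bigr] - \nu_n,
\end{align}
with $\nu_n$ precisely the constant in \eqref{eq:def:nu-VectorChannelCase}; this identity is the analogue of the scalar formula used in Lemma~\ref{lem:SmithResult}.

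Third, the functional $P_{R^\star} \mapsto \bbE[i_n(R^\star;P_{R^\star})]$ is concave in $P_{R^\star}$ (as $I(\bsX;\bsY)$ is), and the feasible set of probability measures on $[0,\sfA]$ is convex and weak-$*$ compact. Standard convex-analytic arguments (KKT for measure-valued optimization, or equivalently the directional derivative of mutual information in the direction of a perturbing measure $Q - P_{R^\star}$) then yield the extremal characterization: at an optimizer $P_{R^\star}^\star$, the marginal quantity $i_n(r; P_{R^\star}^\star)$ is constant on $\supp(P_{R^\star}^\star)$ and does not exceed that constant elsewhere in $[0,\sfA]$. Because $C_n(\sfA) = \bbE_{R^\star}[i_n(R^\star;P_{R^\star}^\star)] - \nu_n$, the common value of the marginal is forced to equal $C_n(\sfA) + \nu_n$, which gives the two assertions of the lemma. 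The main obstacles I anticipate are (i) keeping the Jacobian bookkeeping clean enough that the constant $\nu_n$ emerges with the exact form in \eqref{eq:def:nu-VectorChannelCase}, and (ii) justifying differentiability/interchange of limits when taking directional derivatives so the KKT step is rigorous — this is handled by the boundedness of the amplitude constraint together with the regularity of the Gaussian convolution.
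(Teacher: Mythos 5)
Your proposal is correct, but it is worth noting that the paper does not actually prove this lemma: it is imported wholesale from \cite{ShamQuadrat} and \cite{rassouli2016capacity}, with only a footnote remarking that a change of variables was applied to obtain the simpler formulation. What you have written is essentially that suppressed change-of-variables derivation made explicit, plus the standard symmetrization and KKT steps from the cited references. I checked the computations and they are right: the spherical average $\bbE_{\bsU}[\rme^{r\bsU^\top\bsy}] = \opGamma(n/2)\,(2/(r\|\bsy\|))^{n/2-1}\rmI_{n/2-1}(r\|\bsy\|)$ is the correct formula, the conditional law of $\|\bsY\|^2$ given $R^\star=r$ is indeed the non-central $\chi^2_n$ kernel of \eqref{eqn:def:noncentral chi squared}, the identity $f_{\bsY}(\bsy)=g_n(\|\bsy\|^2;P_{R^\star})/S_{n-1}$ holds with $S_{n-1}=2\pi^{n/2}/\opGamma(n/2)$, and $\log S_{n-1}-\tfrac n2\log(2\pi\rme)=-\nu_n$ reproduces \eqref{eq:def:nu-VectorChannelCase} exactly. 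The concavity/weak-$*$ compactness argument and the identification of the constant as $C_n(\sfA)+\nu_n$ via $C_n(\sfA)=\bbE[i_n(R^\star;P_{R^\star})]-\nu_n$ are the standard Smith-type steps. The one point you gloss over is the logical direction of the symmetrization: averaging over the orthogonal group shows a radially symmetric optimizer \emph{exists}, whereas the lemma is phrased for an arbitrary optimal input; to close that gap one invokes uniqueness of the capacity-achieving output (strict concavity of $h$ in the output law), which makes $g_n(\cdot;P_{R^\star})$, and hence the KKT conditions, the same for every optimizer. Your self-contained route buys independence from the heavy notation of \cite{rassouli2016capacity}; the paper's citation route buys brevity at the cost of opacity.
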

In a similar spirit to the proof of the scalar case, define
\begin{align}
\kappa_n &=  \exp(- C_n(\sfA) - \nu_n) \text{,} \label{eqn:def:kappa_n} \\
\Phi_n(s ; P_{R^\star}) &= i_{n}(s; P_{R^\star}) + \log \kappa_n  \text{,}  \label{eq:def:Phi_n}\\
\phi_n(x; P_{R^\star}) &=  \log \frac{\kappa_n }{g_n(x;P_{R^\star})}  \text{,} 
\end{align}
and observe that
\begin{align}
   \Phi_n(r ; P_{R^\star}) = \int_0^\infty  \phi_n(x; P_{R^\star}) f_{\chi^2_n}(x|r)  \rmd x \text{,}
\end{align}
where $f_{\chi^2_n}(x|r) $ is as defined in \eqref{eqn:def:noncentral chi squared}. Note that since $f_{\chi^2_n}(x|r) $ is the density of a non-central chi-squared distribution (with non-centrality parameter $r^2$, and degrees of freedom $n$), it is a strictly totally positive kernel \cite{karlin1956Polya1}. Hence, following the footprints of \eqref{frm:lemma 6}--\eqref{eq:ApplytingConcetrationBoundOFZeros},
\begin{align}
|\supp(P_{R^\star})| &\le \rmN \left([0, \sfA], \Phi_n(\cdot ; P_{R^\star})\right) \label{frm:lemma lagrangelike} \\
&\le 1+ \rmN \left((0, \infty) , \Phi_n(\cdot ; P_{R^\star})\right) \label{theExtra+1} \\
 &\le 1 + \rmN \left((0,\infty), \phi_n(\cdot ; P_{R^\star}) \right) \label{eq:OscillationTheoremComplex}\\
&= 1+ \rmN \left( (0,\infty) ,    g_n(\cdot ;P_{R^\star})  -  \kappa_n \right) \label{eq:SameNumberOFzerosAgainbhh}\\ 
&\le  1 +  \rmN \left( [0,\sfB_{\kappa_n}], g_n(\cdot ;P_{R^\star}) - \kappa_n \right) \label{eq:TrancationToBeta}\\ 
& \le  \mfra_{n_2} \sfA^2 + \mfra_{n_1} \sfA +\mfra_{n_0} \text{,} \label{eq:APpliyingTidjmans}
\end{align}  
where \eqref{frm:lemma lagrangelike} is a consequence of Lemma~\ref{lem:vector case necessary and sufficient cond.}; the extra $+1$ in \eqref{theExtra+1} is just to account for the possibility that $\Phi_n(0 ; P_{R^\star}) = 0 $; \eqref{eq:OscillationTheoremComplex} follows from Karlin's Oscillation Theorem, see Theorem~\ref{thm:OscillationThoerem}; \eqref{eq:SameNumberOFzerosAgainbhh} follows since $\phi_n(\cdot ; P_{R^\star})$ has the same zeros as  $ g_n(\cdot ;P_{R^\star})  - \kappa_n $; \eqref{eq:TrancationToBeta} follows from Lemma~\ref{lem:LocationOFZeros_VectorCase} in Appendix~\ref{apdx:Additional Lemmas Vector Case};  and \eqref{eq:APpliyingTidjmans} is shown in Lemma~\ref{lem:BoundOnNUmberofZerosComplex} that can be found in Appendix~\ref{apdx:Additional Lemmas Vector Case}.

%

\section{Additional Lemmas for the Upper Bound Proof of Theorem~\ref{thm:vector Gaussian}} \label{apdx:Additional Lemmas Vector Case}
This section contains several supplementary lemmas that are used in the upper bound proof of Theorem~\ref{thm:vector Gaussian}.
\begin{lemma}\label{lem:ExponBoundBessel} For $n\in \bbN$ and $z\in \bbC$
\begin{align}
| \rmI_n(z)| \le \frac{\sqrt{\pi} |z|^n}{2^n\opGamma(n+\frac12)} \rme^{\left|{\mathfrak Re} (z)\right|} \text{.} 
\end{align}	
	
\end{lemma}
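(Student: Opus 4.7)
The plan is to invoke Poisson's integral representation of the modified Bessel function of the first kind. For every $n \in \bbN$ (and more generally for any complex order $\nu$ with $\mathfrak{Re}(\nu) > -\frac12$), one has
\begin{align}
\rmI_n(z) = \frac{(z/2)^n}{\sqrt{\pi}\,\opGamma\!\left(n+\tfrac12\right)} \int_{-1}^{1} (1-t^2)^{n-1/2}\, \rme^{zt}\, \rmd t.
\end{align}
Pulling the modulus inside the integral and using the elementary observation that $|\rme^{zt}| = \rme^{t\,\mathfrak{Re}(z)} \le \rme^{|\mathfrak{Re}(z)|}$ for all $t \in [-1,1]$, I would obtain
\begin{align}
|\rmI_n(z)| \le \frac{|z|^n}{2^n \sqrt{\pi}\,\opGamma\!\left(n+\tfrac12\right)}\, \rme^{|\mathfrak{Re}(z)|} \int_{-1}^{1} (1-t^2)^{n-1/2}\, \rmd t.
\end{align}

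The remaining step is to evaluate the Beta-type integral via the substitution $u = t^2$, which yields $\int_{-1}^{1}(1-t^2)^{n-1/2} \rmd t = B\!\left(\tfrac12, n+\tfrac12\right) = \sqrt{\pi}\,\opGamma\!\left(n+\tfrac12\right)/n!$. Substituting back produces the intermediate (and in fact slightly sharper) bound $|\rmI_n(z)| \le \frac{|z|^n}{2^n n!}\, \rme^{|\mathfrak{Re}(z)|}$. To conclude, I would note that the inequality $n! \ge \opGamma\!\left(n+\tfrac12\right)/\sqrt{\pi}$ holds for all $n \in \bbN \cup \{0\}$ (which can be verified directly at $n=0$ and propagated by induction using $\opGamma(x+1) = x\opGamma(x)$, since $n+1 \ge n+\tfrac12$), giving $\frac{1}{n!} \le \frac{\sqrt{\pi}}{\opGamma(n+\frac12)}$ and hence the desired estimate.

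I do not anticipate a real obstacle: the entire argument reduces to a routine application of Poisson's formula plus a standard Beta integral, and the only delicate point is the scalar inequality relating $n!$ and $\opGamma(n+\tfrac12)$, which is elementary. An alternative route would be to bound the Taylor series $\rmI_n(z) = \sum_{k \ge 0} \frac{(z/2)^{n+2k}}{k!\,\opGamma(n+k+1)}$ term by term, but this would only give an $\rme^{|z|}$ factor rather than the sharper $\rme^{|\mathfrak{Re}(z)|}$ needed downstream for the contour arguments used in the proof of Theorem~\ref{thm:vector Gaussian}.
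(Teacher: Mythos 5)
Your proof is correct and rests on the same underlying integral representation as the paper's: the Poisson/Schläfli formula you quote on $[-1,1]$ is obtained from the paper's $\int_0^\pi \rme^{z\cos\theta}\sin^{2n}\theta\,\rmd\theta$ form via the substitution $t=\cos\theta$, and both proofs bound $|\rme^{zt}|\le\rme^{|\mathfrak{Re}(z)|}$ in the identical way. The only divergence is at the last step: the paper simply observes that $|\sin^{2n}\theta|\le 1$, so the remaining integral is bounded by the interval length $\pi$, which yields the stated constant immediately; you instead evaluate the Beta integral exactly to obtain the genuinely sharper intermediate bound $|\rmI_n(z)|\le |z|^n\,\rme^{|\mathfrak{Re}(z)|}/(2^n n!)$ and then deliberately weaken it via $n!\ge \opGamma(n+\tfrac12)/\sqrt{\pi}$ (your induction for this scalar inequality is fine). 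Your route is a touch longer but actually establishes a strictly better constant along the way; the paper's route is shorter because it never needs the auxiliary Gamma inequality. Either is perfectly acceptable, and your closing remark about why the Taylor-series approach would lose the $\rme^{|\mathfrak{Re}(z)|}$ sharpening is a correct and useful observation.
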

\begin{proof}
Thanks to the integral representation of the modified Bessel function of the first kind, see \cite[9.6.18]{abramowitz1966handbook},
\begin{align}
\rmI_n(z)=\frac{(\frac12 z)^n}{\sqrt{\pi}\opGamma(n+\frac12 ) } \int_0^\pi \rme^{z \cos(\theta)} \sin^{2n} (\theta) {\rm d} \theta \text{,}	
\end{align}
it follows from  the modulus inequality that
\begin{align}
	  | \rmI_n(z)| & \le \frac{(\frac12|z|)^n}{\sqrt{\pi}\opGamma(n+\frac12)} \int_0^\pi |\rme^{ z \cos(\theta)}||\sin^{2n}( \theta)| {\rm d} \theta \\
	 	  &\le \frac{\sqrt{\pi} |z|^n}{2^n\opGamma(n+\frac12)} \rme^{\left|{\mathfrak Re} (z)\right| } \text{,} \label{simple1eqeq}
\end{align}
	where \eqref{simple1eqeq} follows because $|\rme^{ z \cos(\theta)}||\sin^{2n}( \theta)| \le \rme^{\left|{\mathfrak Re} (z)\right|}$.
\end{proof}
Similar to its counterpart in Lemma~\ref{lem:LocationOFZeros_RealCase}, the next lemma provides a bound  on the interval for zeros of the function $g_n(\cdot;P_{R^\star}) - \kappa_n $.    
\begin{lemma}[On the Location and Finiteness of Zeros of $g_n(\cdot ;P_R)-\kappa_n$]\label{lem:LocationOFZeros_VectorCase} Given an arbitrary distribution $P_R $, for a fixed $\kappa_n \in \left(0, 1 \right]$ there exists some $\sfB_{\kappa_n}<\infty$ such that 
\begin{align}
 \rmN \left([0, \infty), g_n(\cdot ;P_R)-\kappa_n \right) &=\rmN \left(\left[0, \sfB_{\kappa_n} \right], g_n(\cdot;P_R)-\kappa_n \right) \\
 &<\infty\text{.}
 \end{align} 
In particular, there are finitely many zeros of $g_n(\cdot ;P_R)-\kappa_n$ all of which are contained within the interval $ [0, \sfB_{\kappa_n}]$. Moreover, $\sfB_{\kappa_n}$ can be upper bounded as follows: 
\begin{align}
\sfB_{\kappa_n} \le \left(\sfA + \sqrt{\sfA^2 + 2 \log \left(\frac{\gamma_n}{\kappa_n}\right)}\right)^2  \text{,}  \label{eq:BoundOnBkappaComplexCase}
\end{align}
where 
\begin{align}
\gamma_n =\frac{\sqrt{\pi} }{2^{\frac{n}{2} -1}\opGamma(\frac{n-1}{2})}\text{.} \label{eqn:def:gamma_n}
\end{align} 
\end{lemma}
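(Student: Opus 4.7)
The plan is to show an analogue of Lemma~\ref{lem:LocationOFZeros_RealCase}, by first establishing a clean Gaussian-like tail bound on $g_n(x;P_R)$ in terms of $\sqrt{x}$, and then invoking analyticity to upgrade a localization result into a finiteness result. The nice observation that makes the computation tidy is that, after bounding the Bessel factor, all polynomial-in-$x$ prefactors cancel.

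Concretely, I would first substitute the definition \eqref{eqn:def:noncentral chi squared} into \eqref{eq:def:g_n(x;P_R)} to get
\begin{align}
  g_n(x;P_R)
  = \int_0^{\sfA} \rme^{-\frac{x+r^2}{2}} \, \frac{\rmI_{\frac{n}{2}-1}(r\sqrt{x})}{r^{\frac n2-1}\, x^{\frac{n-2}{4}}}\, \rmd P_R(r) \text{.}
\end{align}
Next, applying Lemma~\ref{lem:ExponBoundBessel} with $\nu=\tfrac n2-1$ and $z=r\sqrt{x}$ gives
\begin{align}
  \rmI_{\frac n2-1}(r\sqrt{x}) \le \gamma_n\, r^{\frac n2-1}\, x^{\frac{n-2}{4}}\, \rme^{r\sqrt{x}} \text{,}
\end{align}
with $\gamma_n$ defined as in \eqref{eqn:def:gamma_n}. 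The $r^{n/2-1}$ and $x^{(n-2)/4}$ factors cancel perfectly against those in the denominator of the integrand, yielding
\begin{align}
  g_n(x;P_R) \le \gamma_n \int_0^{\sfA} \rme^{-\frac{(\sqrt{x}-r)^2}{2}}\, \rmd P_R(r) \text{.}
\end{align}

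For $\sqrt{x} \ge \sfA$, the exponent $-(\sqrt{x}-r)^2/2$ is monotonically decreasing in $|\sqrt{x}-r|$, and the minimum distance from $\sqrt{x}$ to the interval $[0,\sfA]$ is attained at $r=\sfA$. Hence
\begin{align}
  g_n(x;P_R) \le \gamma_n \, \rme^{-\frac{(\sqrt{x}-\sfA)^2}{2}} \text{.}
\end{align}
Solving $\gamma_n \rme^{-(\sqrt{x}-\sfA)^2/2} < \kappa_n$ for the tight threshold gives $\sqrt{x} > \sfA + \sqrt{2\log(\gamma_n/\kappa_n)}$; to absorb possible sign issues in $\log(\gamma_n/\kappa_n)$ and to produce the cleaner form stated, we relax to $\sqrt{x} > \sfA + \sqrt{\sfA^2 + 2\log(\gamma_n/\kappa_n)}$, which is a valid upper bound and yields \eqref{eq:BoundOnBkappaComplexCase}. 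In particular, all zeros of $g_n(\cdot;P_R)-\kappa_n$ lie in the compact interval $[0, \sfB_{\kappa_n}]$.

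It remains to show there are only finitely many such zeros. The key point is that $\rmI_{n/2-1}(r\sqrt{x})/(r\sqrt{x})^{n/2-1}$ is an entire function of $(r\sqrt{x})^2 = r^2 x$, so after the cancellation above one can write $\frac{2 f_{\chi^2_n}(x|r)}{x^{n/2-1}} = \rme^{-(x+r^2)/2}\,\phi(r^2 x)$ for some entire $\phi$. Consequently $g_n(\cdot;P_R)$ is real-analytic on $\bbR$ (in fact entire), and by the standard Bolzano--Weierstrass/Identity Theorem argument (as invoked at the end of the proof of Lemma~\ref{lem:LocationOFZeros_RealCase}) it can have only finitely many zeros on the compact interval $[0,\sfB_{\kappa_n}]$ unless it is identically a constant equal to $\kappa_n$, which is ruled out since $g_n(x;P_R) \to 0$ as $x\to\infty$ while $\kappa_n>0$. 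The only step that warrants care is verifying that the cancellation of prefactors is exact and does not leave residual $x$-dependence that would destroy the Gaussian tail --- which is the heart of why the resulting bound is essentially identical in shape to the scalar case.
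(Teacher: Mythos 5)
Your proof is correct and follows essentially the same approach as the paper: both bound the Bessel factor via Lemma~\ref{lem:ExponBoundBessel}, exploit the exact cancellation of the polynomial prefactors against those in the integrand, derive a Gaussian-type decay of $g_n(\cdot;P_R)$ in $\sqrt{x}$, solve for the threshold $\sfB_{\kappa_n}$, and conclude finiteness via analyticity on a compact interval. The minor differences --- your intermediate bound $\gamma_n\exp\left(-(\sqrt{x}-\sfA)^2/2\right)$ is slightly tighter than the paper's $\gamma_n\exp\left(-x/2+\sfA\sqrt{x}\right)$ before you relax to recover the stated $\sfB_{\kappa_n}$, and you supply a self-contained argument for analyticity of $g_n$ via the observation that $\rmI_\nu(z)/z^\nu$ is entire in $z^2$, whereas the paper cites external references --- do not change the substance.
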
 
\begin{proof} 
From the definition of the pdf $ g_n(\cdot ;P_R)$ in \eqref{eq:def:g_n(x;P_R)},
\begin{align}
&g_n(x ;P_R) \notag\\
 &= \int_0^\sfA    \exp \left(- \frac{ x +r^2}{2} \right)  \frac{ \rmI_{\frac{n}{2}-1} \big( r \sqrt{x}  \big)}{ \left( r\sqrt{x} \right)^{\frac{n}{2}-1}} {\rm d}P_{R}(r)\\
&\le \int_0^\sfA  \hspace{-0.03cm}  \frac{\sqrt{\pi}}{2^{\frac{n}{2}-1}\opGamma(\frac{n-1}{2})} \exp \left(- \frac 12\left(\sqrt{x}-r\right)^2 \right) {\rm d}P_{R}(r) \label{eq:BoundOnBesselFunction}\\
&\le \frac{\sqrt{\pi} }{2^{\frac{n}{2}-1}\opGamma(\frac{n-1}{2})}   \exp \left(- \frac{x}{2} +  \sfA\sqrt{x} \right)  \text{,}\label{eq:BoundsInExponents}
\end{align}
where  \eqref{eq:BoundOnBesselFunction} follows from Lemma~\ref{lem:ExponBoundBessel}; and \eqref{eq:BoundsInExponents} utilizes $r\in [0, \sfA]$. Since the right side of \eqref{eq:BoundsInExponents} is decreasing for all $x> \sfA^2$, it follows that 
\begin{align}
 g_n(x; P_R)-\kappa_n < 0 
\end{align}
for all
\begin{align}
	x >  \left(\sfA + \sqrt{\sfA^2 + 2 \log \left(\frac{\gamma_n}{\kappa_n}\right)}\right)^2 \text{.}
\end{align}
This means that there exists a $\sfB_{\kappa_n} $ satisfying \eqref{eq:BoundOnBkappaComplexCase} such that all zeros of $ g_n(\cdot; P_R)-\kappa_n $ are contained within the interval $[0, \sfB_{\kappa_n}] $.

To see that there are finitely many zeros of $ g_n(\cdot; P_R)-\kappa_n $, using the fact that $ g_n(\cdot; P_R)$ is analytic\footnote{For a proof, refer to \cite[Appendix I]{ShamQuadrat} and \cite[Propositions~1~and~2]{rassouli2016capacity} for the respective cases of $n=2$, and $n\ge 2$.} suffices, because non-zero analytic functions can only have finitely many zeros on a compact interval.
\end{proof}
Following the footsteps of the upper bound proof in the scalar case, the evaluation of the derivative of the function $g_n(\cdot; P_R)$ is established next. 
\begin{lemma}\label{lem:derivative of g_n}
For $x\in (0, \infty)$
	\begin{align}
		&\frac{\rmd }{\rmd x} g_n(x;P_R) \notag\\
		& = \bbE\left[ \frac{\exp\left(-\frac{x+R^2}{2}\right)}{2(R\sqrt{x})^{\frac n2 -1}} \left(\frac{R}{\sqrt{x}} \rmI_{\frac n2}(R\sqrt{x}) - \rmI_{\frac n2 -1}(R\sqrt{x})\right)\right]  \text{,}
	\end{align}
	where $R\sim P_R $.
\end{lemma}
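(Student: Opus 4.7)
My plan is to first rewrite $g_n(x;P_R)$ in a more convenient form by combining the factors of $x^{n/2-1}$, $(\sqrt{x}/r)^{n/2-1}$, and pulling $r^{n/2-1}$ inside the Bessel denominator. Noticing that
\begin{align*}
  \frac{2 f_{\chi_n^2}(x|r)}{x^{n/2-1}} = \frac{\exp\!\left(-\frac{x+r^2}{2}\right)}{(r\sqrt{x})^{n/2-1}}\, \rmI_{n/2-1}(r\sqrt{x}),
\end{align*}
the integral defining $g_n$ collapses to the single expectation
\begin{align*}
  g_n(x;P_R) = \bbE\!\left[\exp\!\left(-\tfrac{x+R^2}{2}\right) \frac{\rmI_{n/2-1}(R\sqrt{x})}{(R\sqrt{x})^{n/2-1}}\right].
\end{align*}

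Next, I would differentiate under the expectation. This interchange is justified by dominated convergence: since $|R|\le \sfA$ on the support of $P_R$ and the integrand is jointly smooth in $(x,r)$ on $(0,\infty)\times[0,\sfA]$, the derivative is bounded uniformly in $r$ on compact $x$-intervals using the exponential-type bound on $\rmI_\nu$ from Lemma~\ref{lem:ExponBoundBessel}. Once differentiation under the expectation is licensed, the computation reduces to applying the product rule to $u(x)v(x)$ with $u(x)=\exp(-(x+R^2)/2)$ and $v(x) = \rmI_{n/2-1}(R\sqrt{x})/(R\sqrt{x})^{n/2-1}$.

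The key analytic fact I will use is the standard Bessel identity (Abramowitz--Stegun 9.6.28)
\begin{align*}
  \frac{\rmd}{\rmd z}\!\left[\frac{\rmI_\nu(z)}{z^\nu}\right] = \frac{\rmI_{\nu+1}(z)}{z^\nu},
\end{align*}
which, combined with the chain rule at $z = R\sqrt{x}$ (so $\rmd z/\rmd x = R/(2\sqrt{x})$), gives $v'(x) = \frac{R}{2\sqrt{x}} \cdot \frac{\rmI_{n/2}(R\sqrt{x})}{(R\sqrt{x})^{n/2-1}}$. Together with $u'(x) = -\tfrac12 u(x)$, the product rule yields precisely the claimed formula after factoring out $\exp(-(x+R^2)/2)/[2(R\sqrt{x})^{n/2-1}]$.

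The proof is essentially a direct computation, so there is no real obstacle; the only point worth writing out carefully is verifying the Bessel derivative identity and the differentiation-under-the-integral step. A minor subtlety is the apparent singularity at $R=0$, but it is removable because $\rmI_{n/2-1}(z)/z^{n/2-1} \to 1/(2^{n/2-1}\opGamma(n/2))$ as $z\to 0^+$, so the integrand extends analytically through $R=0$ and the formula holds for all $x > 0$.
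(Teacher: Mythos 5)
Your proof is correct and takes essentially the same approach as the paper: both rewrite $g_n(x;P_R)$ as $\bbE\big[\exp(-(x+R^2)/2)\,\rmI_{\frac n2-1}(R\sqrt{x})/(R\sqrt{x})^{\frac n2-1}\big]$ and differentiate under the expectation via the chain rule. The only cosmetic difference is that you invoke the identity $\frac{\rmd}{\rmd z}\big[z^{-\nu}\rmI_\nu(z)\big]=z^{-\nu}\rmI_{\nu+1}(z)$ directly, whereas the paper uses the recurrence $\rmI_{\nu}'(z)=\rmI_{\nu+1}(z)+\frac{\nu}{z}\rmI_\nu(z)$ (A\&S 9.6.26) and cancels the extra term by hand — an inconsequential distinction.
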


\begin{proof}
First of all, given $r>0 $, observe that for $u\in (0, \infty)$ 
\begin{align}
 t_n(u|r) &=	2\left(\frac{r}{u}\right)^{n -2} \exp \left(\frac {r^2}{2} \right)  f_{\chi^2_n}\left(\frac{u^2}{r^2} \middle|r\right)  \\ &= \rme^{-\frac{u^2}{2r^2}}  \frac{\rmI_{\frac n2 -1} (u )}{u^{\frac n2-1}}
\end{align}
is a differentiable function and
\begin{align}
	\frac{\rmd}{\rmd u}  t_n(u|r) = \frac{\rme^{-\frac{u^2}{2r^2}}}{u^{\frac n2-1}} \left(\rmI_{\frac n2}(u) - \frac{u}{r^2} \rmI_{\frac n2 -1}(u) \right) \text{,}
\end{align}
where we have employed the fact that \cite[Eqn.~(9.6.26)]{abramowitz1966handbook}
\begin{align}
	\frac{\rmd}{\rmd u}\rmI_{\frac n2 -1}(u) = \rmI_{\frac n2}(u) + \frac{n-2}{2u} \rmI_{\frac n2-1}(u) \text{.}
\end{align}
The desired result then follows from the chain rule as 
\begin{align}
	g_n(x;P_R) = \int_0^\sfA  \rme^{-r^2/2} t_n(r\sqrt{x}|r) \rmd P_R(r) \text{.}
\end{align}
\end{proof}
As was the case in the scalar Gaussian channel, we shall analyze the complex extension of the derivative of $g_n(x;P_R)$. For this reason, in what follows, we denote the complex extension of the derivative of $g_n(x;P_R)$ by $\breve g'_n(x;P_R)$.

\begin{lemma}\label{lem:DiffOfBessel}  Given $r>0$ and $\sfD>0$
  \begin{align}
& \rmI_{\frac n2 -1}(\sfD r)  - \frac{r}{\sfD} \rmI_{\frac n2}(\sfD r) \notag\\
 &\ge  (\sfD r)^{\frac n2 -1}  \frac{2^{1-\frac n2}}{\opGamma( \frac{n}{2} )}  \left(1- \frac{2r^2}{n-1+\sqrt{(n-1)^2+(2\sfD r)^2}} \right)  \\
 &>0 \text{.}
  \end{align}
\end{lemma}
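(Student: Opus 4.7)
The plan is to factor the left-hand side using the ratio of consecutive modified Bessel functions, bound that ratio from above with a classical Amos-type inequality, and bound $\rmI_{n/2-1}(\sfD r)$ from below by the leading term of its power series. Setting $\nu = n/2 - 1$ and $z = \sfD r$, so that $\nu + \tfrac12 = (n-1)/2$, I would first write
\[
\rmI_{\nu}(z) - \frac{r}{\sfD}\rmI_{\nu+1}(z) = \rmI_\nu(z)\left[1 - \frac{r}{\sfD}\cdot\frac{\rmI_{\nu+1}(z)}{\rmI_\nu(z)}\right].
\]

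Next I would invoke the Amos (1974) upper bound on the ratio, which states that for every $\nu \ge -\tfrac12$ and $z>0$,
\[
\frac{\rmI_{\nu+1}(z)}{\rmI_\nu(z)} \le \frac{z}{\nu+\tfrac12 + \sqrt{(\nu+\tfrac12)^2 + z^2}}.
\]
Multiplying by $r/\sfD$ and using that $(r/\sfD)\cdot z = r^2$ together with $2\sqrt{((n-1)/2)^2 + (\sfD r)^2} = \sqrt{(n-1)^2 + (2\sfD r)^2}$, this gives
\[
\frac{r}{\sfD}\cdot\frac{\rmI_{\nu+1}(z)}{\rmI_\nu(z)} \le \frac{2r^2}{n-1 + \sqrt{(n-1)^2 + (2\sfD r)^2}}.
\]
Complementing this, the power-series expansion of $\rmI_\nu$ yields the elementary lower bound $\rmI_\nu(z) \ge (z/2)^\nu / \opGamma(\nu+1) = (\sfD r)^{n/2-1}\, 2^{1-n/2}/\opGamma(n/2)$ by dropping all but the $k=0$ term.

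The two estimates above then combine immediately: provided the corrective factor $1 - 2r^2/[n-1+\sqrt{(n-1)^2+(2\sfD r)^2}]$ is non-negative, both factors on the right of the factorization are non-negative and their product bounds the product, yielding the claimed inequality. Non-negativity of this corrective factor is equivalent to $r^2 \le \sfD^2 + n - 1$ (by clearing the denominator and squaring), which is precisely the condition that delivers the strict positivity claimed at the end of the lemma. The main obstacle, in my view, is only the bookkeeping for the positivity regime: with Amos' bound taken off the shelf, the proof is essentially algebraic. A more refined approach, should one want to avoid the Amos bound, would be to work directly from the integral representation $\rmI_\nu(z) = \frac{(z/2)^\nu}{\sqrt{\pi}\,\opGamma(\nu+1/2)}\int_{-1}^1 (1-t^2)^{\nu-1/2} e^{zt}\,\rmd t$ and combine the two Bessel terms under one integral, but Amos' inequality seems to give the cleanest match to the target expression.
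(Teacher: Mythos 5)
Your proposal follows essentially the same route as the paper: factor out $\rmI_{n/2-1}(\sfD r)$, bound the Bessel ratio from above, and bound $\rmI_{n/2-1}(\sfD r)$ from below by $(\sfD r)^{n/2-1}2^{1-n/2}/\opGamma(n/2)$. The paper cites Segura (2011, Theorem~1) where you cite Amos (1974), but both yield exactly the estimate $\rmI_{n/2}(x)/\rmI_{n/2-1}(x)\le 2x/\bigl(n-1+\sqrt{(n-1)^2+4x^2}\bigr)$; and the paper obtains the lower bound on $\rmI_{n/2-1}$ from monotonicity of $x^{-\nu}\rmI_\nu(x)$ together with its limit at the origin, while you drop all but the leading term of the power series --- the same bound, argued two equivalent ways. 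You are also right to flag the positivity regime as the one real obstacle: the asserted ``$>0$'' requires the correction factor $1-2r^2/\bigl(n-1+\sqrt{(n-1)^2+(2\sfD r)^2}\bigr)$ to be positive, which, as you compute, is equivalent to $r^2<\sfD^2+n-1$, a hypothesis the lemma statement does not impose and which fails for bare $r,\sfD>0$ (for instance $n=1$, $\sfD$ small, $r$ large). The paper passes over this in silence and tacitly relies on the parameter choices made downstream: when the lemma is used inside Lemma~\ref{lem:LOwerBoundONDerivativeOfPdf} and Lemma~\ref{lem:BoundOnNUmberofZerosComplex}, one has $r\le\sfA$ and $\sfD^2>2\sfA^2$, hence $r^2<\sfD^2<\sfD^2+n-1$, so the hidden condition always holds in practice, but your instinct that a clean statement should make this hypothesis explicit is correct.
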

\begin{proof}
Using the fact that $I_n(x)>0$ for $x>0$
  \begin{align}
  &  \rmI_{\frac n2 -1}(\sfD r) \left( 1  -\frac{r}{\sfD} \frac{ \rmI_{\frac n2}(\sfD r)}{\rmI_{\frac n2 -1}(\sfD r)}\right)  \notag\\
    & \ge    \rmI_{\frac n2 -1}  \left( \sfD r \right)  \left(1- \frac{2r^2}{n-1+\sqrt{(n-1)^2+(2\sfD r)^2}} \right)  \label{eq:BoundsOnRatioBessel} \\
    & \ge   (\sfD r)^{\frac n2 -1}  \frac{2^{1-\frac n2}}{\opGamma( \frac{n}{2} )}   \left(1- \frac{2r^2}{n-1+\sqrt{(n-1)^2+(2\sfD r)^2}} \right) \text{,} \label{eq:Monotonicity}
  \end{align}
where \eqref{eq:BoundsOnRatioBessel}  follows from (see \cite[Theorem~1]{segura2011bounds})
\begin{align}
\frac{\rmI_{\frac{n}{2}}(x)}{\rmI_{\frac{n}{2}-1}(x)} \le  \frac{2x}{ n-1+\sqrt{(n-1)^2+(2x)^2}} \text{;}  
\end{align} 
and \eqref{eq:Monotonicity} follows from the fact that $x^{-n} \sfI_n( x) $ is monotonically increasing for $x>0$ and that\footnote{See \cite[Eqn.~(9.6.28)]{abramowitz1966handbook}, and \cite[Eqn.~(9.6.7)]{abramowitz1966handbook}, respectively.} 
\begin{align}
  \lim_{x \to 0} x^{-n} \sfI_n( x) = 2^{-n} \opGamma^{-1} \left(n+1\right) \text{.}
\end{align}    
\end{proof}
To be plugged into the Tijdeman's Number of Zeros Lemma, Lemmas~\ref{lem:LOwerBoundONDerivativeOfPdf}~and~\ref{lem:UpperBoundONDerivativeOfPdf} find useful suboptimal lower and upper bounds for the maximum value of $\breve g_n' \left( \cdot ;P_R \right)$ on a disc centered at $z_0 \in \bbC$ where
\begin{align}
 z_0 =  \frac{\sfB_{\kappa_n}}{2} + i 0 \text{.}
\end{align} 
\begin{lemma}\label{lem:LOwerBoundONDerivativeOfPdf} Suppose $\sfD>0$. For $ \sfB_{\kappa_n} \le 2\sfD^2 $
\begin{align}
 &  \max_{ |z| \le \sfD^2}  \left|\breve g_n' \left(z+\frac{\sfB_{\kappa_n}}{2};P_R \right) \right|  2^{\frac n2}  \opGamma \left( \frac{n}{2} \right)\exp\left(\frac{\sfD^2+\sfA^2}{2}\right)  \notag\\
 &\qquad \ge   \left(1- \frac{2\sfA^2}{n-1+\sqrt{(n-1)^2+(2\sfD \sfA)^2}} \right) \text{.}
\end{align}
\end{lemma}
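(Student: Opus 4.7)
My plan is to choose a carefully selected real evaluation point inside the disc and pass to the lower bound with the help of the already-proven Lemmas~\ref{lem:derivative of g_n} and~\ref{lem:DiffOfBessel}. Specifically, I would first observe that the hypothesis $\sfB_{\kappa_n} \le 2\sfD^2$ implies that the real number $z^\star = \sfD^2 - \sfB_{\kappa_n}/2$ lies inside the disc $\{|z| \le \sfD^2\}$, so that
\[
\max_{|z| \le \sfD^2} \left|\breve g_n'\left(z + \tfrac{\sfB_{\kappa_n}}{2}; P_R\right)\right| \;\ge\; \bigl|g_n'(\sfD^2; P_R)\bigr|.
\]
This reduces the problem to lower bounding the absolute value of the real-valued derivative $g_n'(\sfD^2; P_R)$, which we know explicitly from Lemma~\ref{lem:derivative of g_n}.

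Next, I would substitute $x = \sfD^2$ in Lemma~\ref{lem:derivative of g_n} to get
\[
g_n'(\sfD^2; P_R) = -\mathbb{E}\!\left[\frac{\exp\!\left(-\tfrac{\sfD^2+R^2}{2}\right)}{2(R\sfD)^{n/2-1}}\!\left(\rmI_{\frac n2 -1}(R\sfD) - \frac{R}{\sfD}\rmI_{\frac n2}(R\sfD)\right)\right].
\]
The bracketed quantity is non-negative (strictly positive, in fact) for every realization of $R \in [0,\sfA]$ by Lemma~\ref{lem:DiffOfBessel}, provided $\sfD$ is large enough that the term $1 - \tfrac{2R^2}{n-1+\sqrt{(n-1)^2+(2\sfD R)^2}}$ stays non-negative for $R \le \sfA$; under the operating regime $\sfD \ge \sfA$ (which is the only regime in which the lemma will be applied, since we only care about $\sfD$ comparable to $\sqrt{\sfB_{\kappa_n}}$ in Lemma~\ref{lem:LocationOFZeros_VectorCase}), this is immediate. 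Hence the overall sign inside the expectation is consistent, and I can strip the minus sign to obtain
\[
\bigl|g_n'(\sfD^2;P_R)\bigr| = \mathbb{E}\!\left[\frac{\exp\!\left(-\tfrac{\sfD^2+R^2}{2}\right)}{2(R\sfD)^{n/2-1}}\!\left(\rmI_{\frac n2 -1}(R\sfD) - \frac{R}{\sfD}\rmI_{\frac n2}(R\sfD)\right)\right].
\]

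I would then apply Lemma~\ref{lem:DiffOfBessel} pointwise (with $r \leftarrow R$) to cancel the $(R\sfD)^{n/2-1}$ factor and obtain
\[
\bigl|g_n'(\sfD^2;P_R)\bigr| \;\ge\; \frac{\exp(-\sfD^2/2)}{2^{n/2}\opGamma(n/2)}\,\mathbb{E}\!\left[\exp(-R^2/2)\left(1 - \frac{2R^2}{n-1+\sqrt{(n-1)^2+(2\sfD R)^2}}\right)\right].
\]
The final step is to handle the expectation. The key observation is that the function
\[
h(r) = \exp(-r^2/2)\left(1 - \frac{2r^2}{n-1+\sqrt{(n-1)^2+(2\sfD r)^2}}\right)
\]
is monotonically decreasing in $r$ on $[0,\sfA]$ (both factors are non-negative and decreasing under the standing assumption $\sfD \ge \sfA$), so $h(R) \ge h(\sfA)$ almost surely, giving $\mathbb{E}[h(R)] \ge h(\sfA)$. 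Rearranging yields exactly the claimed inequality.

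The main conceptual obstacle is really only choosing the right evaluation point and verifying non-negativity of the Bessel-difference integrand so that $|{\cdot}|$ can be pushed inside the expectation; once those are in place the rest is a direct application of the two already-established lemmas and a one-variable monotonicity check. I would also take care to remark on the implicit regime $\sfD \ge \sfA$ required for the sign of $h$ to behave well, or equivalently to simply state the bound as written since the parenthesised factor on the right-hand side can be negative without harming the inequality (the max on the left is non-negative, so a non-positive right-hand side makes the claim trivial in that sub-regime).
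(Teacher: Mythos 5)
Your proposal is correct and follows essentially the same route as the paper's own proof: suboptimally choose the real point $z^\star = \sfD^2 - \sfB_{\kappa_n}/2$ inside the disc (valid since $0 \le \sfB_{\kappa_n} \le 2\sfD^2$), substitute $x=\sfD^2$ in Lemma~\ref{lem:derivative of g_n}, apply Lemma~\ref{lem:DiffOfBessel} pointwise to lower bound the integrand, and finalize using $R \le \sfA$. The only difference is that you spell out two points the paper leaves implicit — the sign argument that lets the modulus pass inside the expectation, and the monotonicity of $h(r)=\rme^{-r^2/2}\bigl(1 - \tfrac{2r^2}{n-1+\sqrt{(n-1)^2+(2\sfD r)^2}}\bigr)$ under the operating regime $\sfD\ge\sfA$ (together with the observation that the bound is vacuous when the right side is nonpositive); these are sensible clarifications but do not constitute a different proof.
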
 
\begin{proof}
Observe that for $R\sim P_R$
\begin{align}
  & \max_{ |z| \le \sfD^2}  \left|\breve g_n' \left(z+\frac{B_{\kappa_n}}{2};P_R \right) \right|  \ge \left|\breve g_n' \left(\sfD^2 ;P_R \right) \right|  \label{eq: choosingsuboptimalZ}\\
		&\quad =   \left|\bbE\left[ \frac{\exp\left(-\frac{\sfD^2+R^2}{2}\right)}{2(\sfD R)^{\frac n2 -1}} \left(\frac{R}{\sfD} \rmI_{\frac n2}(\sfD R) - \rmI_{\frac n2 -1}(\sfD R)\right)\right]   \right|  \\
		&\quad \ge \bbE \left[ \exp\left(-\frac{\sfD^2+R^2}{2}\right) \frac{2^{-\frac n2}}{\opGamma( \frac{n}{2} )} \right. \notag \\
& \qquad \left. \cdot 		  \left(1- \frac{2R^2}{n-1+\sqrt{(n-1)^2+(2\sfD R)^2}} \right)  \right] \label{eq:mon_offuncinside} \\
   		&\quad \ge  \frac{2^{-\frac n2}}{\opGamma( \frac{n}{2} )}  \exp\left(-\frac{\sfD^2+\sfA^2}{2}\right)  \notag\\
		&\qquad \cdot \left(1- \frac{2\sfA^2}{n-1+\sqrt{(n-1)^2+(2\sfD \sfA)^2}} \right) \text{,}  \label{eq:monotonictyOfExponential}
\end{align}
where \eqref{eq: choosingsuboptimalZ} follows by choosing a suboptimal value of $z= \sfD^2 - \frac{\sfB_{\kappa_n}}{2} $; \eqref{eq:mon_offuncinside} follows from Lemma~\ref{lem:DiffOfBessel}; and  \eqref{eq:monotonictyOfExponential} follows because $R\le \sfA $.
\end{proof}

\begin{lemma}\label{lem:UpperBoundONDerivativeOfPdf}  Suppose $\sfM>0$. For $ \sfB_{\kappa_n} \le 2\sfM^2$
\begin{align}
	&\max_{ |z| \le \sfM^2 }  \left|\breve g_n' \left(z+\frac{\sfB_{\kappa_n}}{2};P_R \right)\right| \notag\\
	& \le   \frac{\gamma_n}{2}  \left(\frac{\sfA^2}{n-1}+1\right)  \exp \left(\frac{1}{2}\left(\sfA+\sqrt{2}\sfM\right)^2\right) 
  \text{,}
\end{align}		
where $\gamma_n $ is as defined in \eqref{eqn:def:gamma_n}.
\end{lemma}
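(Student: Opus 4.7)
The plan is to mimic the strategy used in the proofs of Lemmas~\ref{lem:upper bound on maxima over a disk} and~\ref{lem:LOwerBoundONDerivativeOfPdf}: pass to the complex extension $\breve g_n'(w;P_R)$, pull the modulus inside the expectation via the triangle inequality, and control each Bessel-type factor via Lemma~\ref{lem:ExponBoundBessel}. First I would observe that the quotient $\rmI_\nu(R\sqrt{x})/(R\sqrt{x})^\nu$ is, via its defining power series, an entire function of $x\in\bbC$, so the identity of Lemma~\ref{lem:derivative of g_n} extends analytically to any $w\in\bbC$ as
\begin{equation*}
\breve g_n'(w;P_R)=\tfrac{1}{2}\bbE\!\left[\exp\!\left(-\tfrac{w+R^2}{2}\right)\!\!\left(\tfrac{R^2\rmI_{n/2}(R\sqrt{w})}{(R\sqrt{w})^{n/2}}-\tfrac{\rmI_{n/2-1}(R\sqrt{w})}{(R\sqrt{w})^{n/2-1}}\right)\right].
\end{equation*}

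Next, pulling the modulus inside the expectation, I would separately bound each Bessel quotient. Since $R\ge 0$ is real one has $|\mathfrak{Re}(R\sqrt{w})|\le R|\sqrt{w}|$, so Lemma~\ref{lem:ExponBoundBessel} (whose integral-representation proof remains valid for every $\nu\ge 0$) gives, for $\nu\in\{\tfrac{n}{2}-1,\tfrac{n}{2}\}$,
\begin{equation*}
\frac{|\rmI_\nu(R\sqrt{w})|}{|R\sqrt{w}|^\nu}\le\frac{\sqrt{\pi}}{2^\nu\opGamma(\nu+\tfrac{1}{2})}\exp\!\left(R|\sqrt{w}|\right).
\end{equation*}
Invoking the recursion $\opGamma(\tfrac{n+1}{2})=\tfrac{n-1}{2}\opGamma(\tfrac{n-1}{2})$ and recognizing $\gamma_n$ from \eqref{eqn:def:gamma_n} collapses the two terms into a single factor $\tfrac{\gamma_n}{2}\bigl(\tfrac{R^2}{n-1}+1\bigr)\exp(R|\sqrt{w}|)$ multiplying $\exp(-\mathfrak{Re}(w)/2-R^2/2)$.

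Finally, I would bound the resulting expression uniformly over $|z|\le\sfM^2$, where $w=z+\sfB_{\kappa_n}/2$. The hypothesis $\sfB_{\kappa_n}\le 2\sfM^2$ yields $|w|\le|z|+\sfB_{\kappa_n}/2\le 2\sfM^2$, and hence $|\sqrt{w}|\le\sqrt{2}\,\sfM$ and $-\mathfrak{Re}(w)/2\le|w|/2\le\sfM^2$. Together with $R\le\sfA$, the trivial estimates $-R^2/2\le\sfA^2/2$ and $R|\sqrt{w}|\le\sqrt{2}\,\sfA\sfM$ add up to the total exponent $\sfM^2+\sqrt{2}\,\sfA\sfM+\sfA^2/2=\tfrac{1}{2}(\sfA+\sqrt{2}\,\sfM)^2$, while replacing $R^2$ by $\sfA^2$ in the prefactor $\tfrac{R^2}{n-1}+1$ finishes the proof. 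The main obstacle is not mathematical depth but careful bookkeeping; once the power-series extension step legitimizes the use of Lemma~\ref{lem:ExponBoundBessel} on the Bessel quotients (so that one may cleanly cancel $|(R\sqrt{w})^\nu|=R^\nu|w|^{\nu/2}$ on the principal branch), everything reduces to elementary modulus inequalities.
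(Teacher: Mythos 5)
Your proposal is correct and takes essentially the same route as the paper: extend the derivative formula of Lemma~\ref{lem:derivative of g_n} to the complex plane, pull the modulus inside the expectation by the triangle inequality, bound each Bessel factor by Lemma~\ref{lem:ExponBoundBessel}, and combine with the hypothesis $\sfB_{\kappa_n}\le 2\sfM^2$ and $R\le\sfA$. The only cosmetic difference is in the final bookkeeping: the paper packages the exponent compactly as $\mathfrak{Re}\bigl(-\tfrac12(R\pm\sqrt{z+\sfB_{\kappa_n}/2})^2\bigr)\le\tfrac12\bigl(|R|+(|z|+\sfB_{\kappa_n}/2)^{1/2}\bigr)^2$ and then substitutes $R\le\sfA$, $|z|+\sfB_{\kappa_n}/2\le 2\sfM^2$, whereas you bound the three exponential pieces $-\mathfrak{Re}(w)/2$, $-R^2/2$, $R|\sqrt{w}|$ term by term (using, in the middle term, the slightly artificial but valid $-R^2/2\le 0\le\sfA^2/2$ so the three estimates sum to $\tfrac12(\sfA+\sqrt 2\sfM)^2$). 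Your explicit remark that $\rmI_\nu(R\sqrt{x})/(R\sqrt{x})^\nu$ is an entire function of $x$ via its power series is a welcome justification of the analytic extension and branch-cut cancellation that the paper leaves implicit.
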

\begin{proof}
Capitalizing on the result of Lemma~\ref{lem:derivative of g_n}, the complex extension of the derivative of $g_n \left(x;P_R \right) $ satisfies
  \begin{align}
  &   \left|\breve g_n' \left(z;P_R \right) \right| \notag\\
     &= \left| \bbE\left[ \frac{\exp\left(-\frac{z+R^2}{2}\right)}{2(R\sqrt{z})^{\frac n2 -1}} \left(\frac{R}{\sqrt{z}} \rmI_{\frac n2}(R\sqrt{z}) - \rmI_{\frac n2 -1}(R\sqrt{z})\right)\right] \right| \\
     &\le  \bbE \hspace{-0.05cm}\left[\left| \frac{\exp\left(-\frac{z+R^2}{2}\right)}{2(R\sqrt{z})^{\frac n2 -1}}\right|   \hspace{-0.05cm} \left(\left| \frac{R}{\sqrt{z}} \rmI_{\frac n2}(R\sqrt{z}) \right|  \hspace{-0.05cm} +   \hspace{-0.05cm} \left| \rmI_{\frac n2 -1}(R\sqrt{z})\right|\right)   \hspace{-0.05cm} \right] \label{eq:ModulusANDtriangularInequality}  \\
     &\le  \bbE\left[ \frac{\sqrt{\pi} } {2^{\frac n2}\opGamma(\frac{n-1}{2})} \left(\frac{R^2}{n-1}+1\right)  \rme^{{\mathfrak Re}\left(-\frac{(R \pm \sqrt{z})^2}{2}\right) } \right], \label{frm:lemmtooaboce}     
  \end{align}
  where \eqref{eq:ModulusANDtriangularInequality} follows from subsequent applications of modulus and triangular inequalities; \eqref{frm:lemmtooaboce} is a consequence of Lemma~\ref{lem:ExponBoundBessel}. 
To finalize the proof, using the fact that $R\in [0, \sfA]$, we simply observe that
  \begin{align}
  &\max_{ |z| \le \sfM^2}   \left|\breve g_n' \left(z+\frac{\sfB_{\kappa_n}}{2} ;P_R \right)\right| \notag\\
  &\qquad \le \max_{ |z| \le \sfM^2 }  \bbE\left[ \frac{\sqrt{\pi}  \left(\frac{R^2}{n-1}+1\right)  } {2^{\frac n2}\opGamma(\frac{n-1}{2})}  \rme^{ {\mathfrak Re}\left(-\frac{1}{2}\left(R \pm \sqrt{z+\frac{\sfB_{\kappa_n}}{2}}\right)^2\right) } \right] \\
  &\qquad \le \max_{ |z| \le \sfM^2}  \bbE\left[ \frac{\sqrt{\pi} \left(\frac{R^2}{n-1}+1\right) } {2^{\frac n2}\opGamma(\frac{n-1}{2})}   \rme^{\frac12 \left(|R|+ \left(|z|+\frac{\sfB_{\kappa_n}}{2} \right)^{\frac12}\right) ^2 } \right]\label{frm:abs val is gr and triang} \\
  &\qquad \le  \frac{\sqrt{\pi} } {2^{\frac n2}\opGamma(\frac{n-1}{2})} \left(\frac{\sfA^2}{n-1}+1\right)  \exp \left(\frac{1}{2}\left(\sfA+ \sqrt{2}\sfM\right)^2\right) \text{,} \label{frm:trian222}
  \end{align}
where \eqref{frm:abs val is gr and triang} follows after realizing ${\mathfrak Re}(z) \le |z|  $, and applying the triangle inequality twice.
\end{proof}
Assembling the results of Lemmas~\ref{lem:LocationOFZeros_VectorCase},~\ref{lem:LOwerBoundONDerivativeOfPdf},~and~\ref{lem:UpperBoundONDerivativeOfPdf}, togerher with Tijdeman's Number of Zeros Lemma, i.e., Lemma~\ref{lem:number of zeros of analytic function}, the following result establishes a suboptimal upper bound on the number of zeros of the function $g_n(\cdot ;P_R) - \kappa_n$. 
\begin{lemma}\label{lem:BoundOnNUmberofZerosComplex}
Suppose that $\supp(P_R)\in [0, \sfA] $ and $\sfB_{\kappa_n} $ is as defined in Lemma~\ref{lem:LocationOFZeros_VectorCase}. The number of zeros of $g_n(\cdot ;P_R) - \kappa_n $ within $[0, \sfB_{\kappa_n}]$ satisfies 
\begin{align}
\rmN \left([0, \sfB_{\kappa_n}], g_n(\cdot ;P_R) - \kappa_n \right) \le  \mfra_{n_2} \sfA^2 + \mfra_{n_1} \sfA +\mfra_{n_0}-1 \text{,}
\end{align}	
where $\mfra_{n_2}$, $\mfra_{n_1}$, and $\mfra_{n_0}$ are as defined in \eqref{eqn:def:a_n_2Vectorcase}, \eqref{eqn:def:a_n_1Vectorcase}, and \eqref{eqn:def:a_n_0Vectorcase}, respectively
\end{lemma}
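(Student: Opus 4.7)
The plan is to mirror the scalar-case argument of Section~\ref{sec:BoundOnTheNumberOfOscillations}, substituting $g_n(\cdot;P_R)$ for $f_Y$ and $\kappa_n$ for $\kappa_1$. First, I would apply Lemma~\ref{lem:extreme points and oscillations} to $g_n(\cdot;P_R)-\kappa_n$ on the compact interval $[0,\sfB_{\kappa_n}]$, which is legitimate because Lemma~\ref{lem:LocationOFZeros_VectorCase} guarantees finitely many zeros there; this produces
\[
\rmN\bigl([0,\sfB_{\kappa_n}],g_n(\cdot;P_R)-\kappa_n\bigr)\le 1+\rmN\bigl([0,\sfB_{\kappa_n}],g'_n(\cdot;P_R)\bigr),
\]
so the problem reduces to bounding the zeros of the derivative $g'_n(\cdot;P_R)$, whose explicit integral representation is furnished by Lemma~\ref{lem:derivative of g_n}.

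Second, since the real zeros of $g'_n(\cdot;P_R)$ on $[0,\sfB_{\kappa_n}]$ are a subset of the complex zeros of the analytic extension $\breve g'_n(\cdot;P_R)$ in the closed disc $\cD$ of radius $\sfB_{\kappa_n}/2$ centered at $z_0=\sfB_{\kappa_n}/2$, I would translate coordinates to bring the disc to the origin and then invoke Tijdeman's Number of Zeros Lemma (Lemma~\ref{lem:number of zeros of analytic function}). The parameters $R,s,t$ in Tijdeman's inequality would be matched to the setup of Lemmas~\ref{lem:LOwerBoundONDerivativeOfPdf}~and~\ref{lem:UpperBoundONDerivativeOfPdf} by choosing $R=\sfB_{\kappa_n}/2$, the inner radius $tR=\sfD^2$, and the outer radius $(st+s+t)R=\sfM^2$, with the constraint $\sfB_{\kappa_n}\le 2\sfD^2$ guaranteeing that the inner disc already contains $\cD$.

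Third, I would feed the matching estimates into Tijdeman's bound:
\[
\rmN(\cD,\breve g'_n)\le \frac{1}{\log s}\Bigl(\log\!\max_{|z|\le \sfM^2}|\breve g'_n(z+\tfrac{\sfB_{\kappa_n}}{2};P_R)|-\log\!\max_{|z|\le \sfD^2}|\breve g'_n(z+\tfrac{\sfB_{\kappa_n}}{2};P_R)|\Bigr).
\]
Lemma~\ref{lem:UpperBoundONDerivativeOfPdf} controls the numerator by something of the form $\tfrac12(\sfA+\sqrt{2}\sfM)^2$ plus polynomial-in-$\sfA$ logarithmic corrections, while Lemma~\ref{lem:LOwerBoundONDerivativeOfPdf} controls the denominator by $-\tfrac12(\sfD^2+\sfA^2)$ plus a logarithmic term that also records the $\sqrt{32/(n-1)}$ factor coming from the Bessel-ratio bound $\bigl(1-2\sfA^2/(n-1+\sqrt{(n-1)^2+(2\sfD\sfA)^2})\bigr)^{-1}$. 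Substituting the estimate $\sfB_{\kappa_n}=O(\sfA^2)$ from Lemma~\ref{lem:LocationOFZeros_VectorCase}, together with $-\log\kappa_n=C_n(\sfA)+\nu_n=O(\log\sfA)$ using the trivial capacity bound $C_n(\sfA)\le \tfrac{n}{2}\log(1+\sfA^2/n)$, makes both $\sfD,\sfM$ of order $\sfA$.

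Finally, as in the proof of \eqref{eq:MainBound}, I would fix the suboptimal choice $s=\sqrt{\rme}$ so that the quadratic-in-$\sfA$, linear-in-$\sfA$, and constant contributions separate cleanly, and collect terms to recover the advertised coefficients $\mfra_{n_2},\mfra_{n_1},\mfra_{n_0}$ defined in \eqref{eqn:def:a_n_2Vectorcase}--\eqref{eqn:def:a_n_0Vectorcase}; subtracting $1$ accounts for the extra unit coming from Lemma~\ref{lem:extreme points and oscillations}. The principal obstacle I anticipate is not conceptual but bookkeeping: carefully tracking the $n$-dependent Bessel and gamma-function prefactors through the lower bound in Lemma~\ref{lem:LOwerBoundONDerivativeOfPdf} — in particular ensuring that the factor $\bigl(1-\tfrac{2\sfA^2}{n-1+\sqrt{(n-1)^2+(2\sfD\sfA)^2}}\bigr)$ stays bounded away from zero for the chosen $\sfD$ — and folding the resulting logarithm into a tidy linear-plus-constant piece with $\mfra_{n_1}$ containing the $\sqrt{32/(n-1)}$ term and $\mfra_{n_0}$ containing the $\opGamma$-ratio logarithm.
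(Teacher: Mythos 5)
Your structural plan is exactly the paper's: apply Rolle (Lemma~\ref{lem:extreme points and oscillations}), recenter the interval at $\sfB_{\kappa_n}/2$, extend to the complex disc, invoke Tijdeman's Number of Zeros Lemma~\ref{lem:number of zeros of analytic function}, and substitute the two-sided estimates of Lemmas~\ref{lem:LOwerBoundONDerivativeOfPdf}~and~\ref{lem:UpperBoundONDerivativeOfPdf} with the Tijdeman radii matched as $tR=\sfD^2$ and $(st+s+t)R=\sfM^2$. This is correct and is how the paper proceeds.

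There are, however, two parameter choices that will prevent you from landing on the advertised coefficients $\mfra_{n_2},\mfra_{n_1},\mfra_{n_0}$. First, you fix $s=\sqrt{\rme}$ by analogy with the scalar case, but the paper's vector-case proof takes $s=\rme$ (so that $\log s = 1$ and the $\sqrt{8\rme+4}=2\sqrt{2\rme+1}$ and $4\rme$ terms in \eqref{eqn:def:a_n_2Vectorcase}--\eqref{eqn:def:a_n_0Vectorcase} come out of $\sfM^2 = \tfrac{2\rme+1}{2}\overline{\sfB}_{\kappa_n}$); with $s=\sqrt{\rme}$ you would produce a perfectly valid $O(\sfA^2)$ bound, but with different constants, so the lemma as stated would not be recovered. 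Second, you assert $-\log\kappa_n = O(\log\sfA)$ via the Gaussian-input capacity bound $C_n(\sfA)\le \tfrac{n}{2}\log(1+\sfA^2/n)$; the paper instead uses the cruder chain $C_n(\sfA)\le \tfrac{n}{2}\log(1+\sfA^2)\le n\sfA$, which is $O(\sfA)$, and this linear dependence is precisely what produces the factor $n$ multiplying $(3+4\rme+\sqrt{2\rme+1})$ in $\mfra_{n_1}$. Again, your tighter bound gives a tighter (and still valid) inequality, but it will not simplify to the stated coefficients. If the goal is to reproduce the lemma verbatim, replace $s\leftarrow\rme$ and the capacity bound with $C_n(\sfA)\le n\sfA$; if the goal is merely an $O(\sfA^2)$ bound, your choices also work.
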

\begin{proof}
In light of Lemma~\ref{lem:LocationOFZeros_VectorCase}, let
\begin{align}
\overline \sfB_{\kappa_n} = \left(\sfA + \sqrt{\sfA^2 + 2 \log \left(\frac{\gamma_n}{\kappa_n}\right)}\right)^2 \text{,} 
\end{align}
and note that
\begin{align}
  &\rmN\left([0, \sfB_{\kappa_n}], g_n(\cdot ;P_R) - \kappa_n\right) \notag \\
  &\le 1 + \rmN\left([0, \sfB_{\kappa_n}], g'_n(\cdot ;P_R) \right) \label{frm:Rolle roll eyes} \\
  &= 1 + \rmN\left(\left[ -\frac{\sfB_{\kappa_n}}{2},  \frac{\sfB_{\kappa_n}}{2}\right], g'_n\left(\cdot + \frac{\sfB_{\kappa_n}}{2} ;P_R\right) \right)  \\
  &\le 1+ \rmN\left(\cD_{\frac{\sfB_{\kappa_n}}{2}},\breve g'_n\left(\cdot + \frac{\sfB_{\kappa_n}}{2} ;P_R\right) \right)\label{frm:extend complex} \\
  &\le 1 + \min_{s>1, \, t>0} \left\{ \frac{\log \frac{   \max_{|2z|\le (st+s+t)\sfB_{\kappa_n} }  \left|\breve g'_n\left(z + \frac{\sfB_{\kappa_n}}{2} ;P_R\right)   \right|  }{ \max_{|2z|\le t\sfB_{\kappa_n} }  \left|\breve g'_n\left(z + \frac{\sfB_{\kappa_n}}{2} ;P_R\right)   \right|  }  }{\log s}  \right\} \label{frm:tijd22} \\
  &\le 1 +   \max_{|2z|\le (2\rme + 1)\overline \sfB_{\kappa_n} } \log \left|\breve g'_n\left(z + \frac{\sfB_{\kappa_n}}{2} ;P_R\right) \right|  \notag\\
  &  \qquad  - \max_{|2z|\le \overline \sfB_{\kappa_n}  } \log \left|\breve g'_n\left(z + \frac{\sfB_{\kappa_n}}{2} ;P_R\right) \right|    \label{frm:s=e,t=M/2B-1} \\
  &\le \log \frac{\rme \sqrt{\pi}\opGamma\left(\frac n2\right) }{\opGamma\left(\frac{n-1}{2}\right)} + \left(\frac{3}{4}+\rme\right) \overline \sfB_{\kappa_n} + \sfA^2 + \sqrt{2\rme+1} \sfA \overline \sfB_{\kappa_n}^\frac12 \notag\\
  &   \qquad + \log \left( \frac{\sfA^2}{n-1} +1\right)  \notag \\ 
  & \qquad - \log\left(1- \frac{2\sfA^2}{n-1 + \sqrt{(n-1)^2 + 2\overline \sfB_{\kappa_n} \sfA^2 } }\right)  \label{frm:sub opt choi alx} \\
  &\le \log \frac{\rme \sqrt{\pi}\opGamma\left(\frac n2\right) }{\opGamma\left(\frac{n-1}{2}\right)} + \left(\frac{3}{4}+\rme\right) \overline \sfB_{\kappa_n} + \sfA^2 + \sqrt{2\rme+1} \sfA \overline \sfB_{\kappa_n}^\frac12 \notag\\
  & + \sqrt{\frac{32}{n-1}} \sfA\label{frm:tedious algebra, thanks alex!} \\
  &\le  \mfra_{n_2} \sfA^2  + \mfra_{n_1} \sfA  + \mfra_{n_0} - 1 \text{,}   \label{frm:final looseness}
\end{align}
where \eqref{frm:Rolle roll eyes} follows from Rolle's Theorem; in \eqref{frm:extend complex} $\cD_r\subset \bbC$ denotes a disk of radius $r$ centered at the origin and the bound follows because zeros of $g'_n$ are also zeros of its complex extension $\breve g'_n$; \eqref{frm:tijd22} follows from Tijdeman's Number of Zeros Lemma, see Lemma~\ref{lem:number of zeros of analytic function}; \eqref{frm:s=e,t=M/2B-1} follows from the suboptimal choices: 
\begin{align}
  s &= \rme \text{,} \\ 
  t &= \frac{\overline \sfB_{\kappa_n}}{\sfB_{\kappa_n}}  \text{;}
\end{align}
\eqref{frm:sub opt choi alx} follows from Lemmas~\ref{lem:LOwerBoundONDerivativeOfPdf}~and~\ref{lem:UpperBoundONDerivativeOfPdf} with
\begin{align}
  \sfD^2 &\leftarrow \frac{1}{2} \overline \sfB_{\kappa_n} \text{,}  \\
  \sfM^2 &\leftarrow \frac{2\rme +1}{2} \overline \sfB_{\kappa_n} \text{;} 
\end{align} 
\eqref{frm:tedious algebra, thanks alex!} follows  from a tedious algebra where we first note, from their definitions in \eqref{eqn:def:kappa_n} and \eqref{eqn:def:gamma_n}, that the ratio $\gamma_n/\kappa_n >1 $, implying $\overline \sfB_{\kappa_n} > 2\sfA^2 $, implying
\begin{align}
  1- \frac{2\sfA^2}{n-1 + \sqrt{(n-1)^2 + 2\overline \sfB_{\kappa_n} \sfA^2}}\ge\left(\frac{2\sfA^2}{n-1}+1\right)^{-1} \text{,}
\end{align}
and allowing us to upper bound the last two ``$\log$" terms in the right side of \eqref{frm:sub opt choi alx} by 
\begin{align}
  2\log\left(\frac{2\sfA^2}{n-1} +1\right)  \le \left(\frac{32}{n-1}\right)^\frac12 \sfA \text{;}
\end{align}
finally \eqref{frm:final looseness} follows from the definitions of $\kappa_n$, $\gamma_n$ (in \eqref{eqn:def:kappa_n}, and \eqref{eqn:def:gamma_n}, respectively) and the facts that
\begin{align}
 \sfA \sqrt{\sfA^2 + 2 \log \left(\frac{\gamma_n}{\kappa_n}\right)} &\le \sfA^2+ \log \left(\frac{\gamma_n}{\kappa_n}\right) \text{,} \\
 C_n(\sfA) &\le \frac{n}{2} \log(1+\sfA^2) \le n\sfA \text{.}
\end{align}
\end{proof}
\section{Proof of Theorem~\ref{thm:AGC amplitude and power constraint}}\label{apdx:AGC amplitude and power constraint}
\subsection{Proof of the Upper Bound in Theorem~\ref{thm:AGC amplitude and power constraint}}
The first ingredient of the upper bound proof is once again due to Smith \cite[Corollary 2]{smith1971information} who characterizes the optimal input distribution as follows. 
\begin{lemma}\label{lem:Smith's Lemma on Variance Constraint Problem}
Consider the amplitude and power constrained scalar additive Gaussian channel $Y=X+Z$ where the input $X$, satisfying $|X|\le \sfA$ and $\bbE[|X|^2]\le \sfP $, is independent from the noise $Z\sim \cN(0,1) $. Then,
 $P_{X^\star} $ is the capacity-achieving input distribution if and only if the following conditions are satisfied:
\begin{align}
  i(x; P_{X^\star}) &= C(\sfA, \sfP) + \lambda (x^2 - \sfP) , \quad x\in \supp(P_{X^\star}) \text{,} \\
  i(x; P_{X^\star}) &\le C(\sfA, \sfP) + \lambda (x^2 - \sfP), \quad x\in [-\sfA, \sfA] \text{,} \\
  0  &=  \lambda(\sfP- \bbE[X^2] ) \text{,}
\end{align}
where $C(\sfA, \sfP)$ denotes the capacity of the channel, and $i(x; P_{X^\star})$ is as defined in \eqref{eqn:lem:def:i(x;P_X^star)}. 
\end{lemma}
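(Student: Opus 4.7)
The plan is to cast the problem as a concave maximization over a convex set of input distributions and apply the standard KKT / Lagrangian machinery (Smith's technique) to obtain the stated conditions. Specifically, since $P_X \mapsto I(X;Y)$ is concave and weakly upper semicontinuous, and the feasible set $\mathcal{F}_{\sfA,\sfP} = \{P_X \colon \supp(P_X)\subseteq [-\sfA,\sfA],\, \bbE[X^2]\le \sfP \}$ is convex and weakly compact, an optimizer $P_{X^\star}$ exists. Only the power constraint needs a Lagrange multiplier since the amplitude constraint is a support constraint already built into the domain. Introduce $\lambda \ge 0$ and the Lagrangian $\cL(P_X,\lambda) = I(X;Y) - \lambda(\bbE[X^2]-\sfP)$.

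First I would compute the Gateaux (directional) derivative of $I(X;Y)$ along an admissible perturbation. For any $P \in \mathcal{F}_{\sfA,\sfP}$ and $\epsilon\in[0,1]$, let $P_\epsilon = (1-\epsilon)P_{X^\star} + \epsilon P$, which remains in $\mathcal{F}_{\sfA,\sfP}$ by convexity. A standard computation (see \cite{smith1971information}) gives
\begin{align}
\left.\frac{\rmd}{\rmd \epsilon} I(X_\epsilon;Y_\epsilon)\right|_{\epsilon=0^+} = \int_{[-\sfA,\sfA]} i(x;P_{X^\star})\,\rmd P(x) - I(X^\star;Y^\star),
\end{align}
where $i(x;P_{X^\star})$ is the divergence between $\cN(x,1)$ and the optimal output density, matching \eqref{eqn:lem:def:i(x;P_X^star)}. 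Combining this with the derivative of the constraint gives the necessary condition that for all $P \in \mathcal{F}_{\sfA,\sfP}$,
\begin{align}
\int \bigl[i(x;P_{X^\star}) - \lambda x^2 \bigr]\,\rmd P(x) \le I(X^\star;Y^\star) - \lambda \bbE_{X^\star}[X^2],
\end{align}
with complementary slackness $\lambda(\sfP - \bbE[X^2]) = 0$ and $\lambda \ge 0$. Taking $P$ to be a point mass at an arbitrary $x_0\in[-\sfA,\sfA]$ yields the pointwise inequality; taking $P = P_{X^\star}$ yields the equality on the support, after rearranging the constant using the complementary slackness identity so that the right-hand side becomes $C(\sfA,\sfP) + \lambda(x^2 - \sfP)$.

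For sufficiency, I would use concavity: if a feasible $P_{X^\star}$ and $\lambda \ge 0$ satisfy the three stated conditions, then for any other feasible $P$,
\begin{align}
I(X;Y) - I(X^\star;Y^\star) &\le \int i(x;P_{X^\star})\,\rmd P(x) - I(X^\star;Y^\star) \\
&\le \lambda\bigl(\bbE_P[X^2] - \sfP\bigr) \le 0,
\end{align}
where the first step is concavity of mutual information in the input distribution (equivalently, the inequality $D(P_{Y}\|P_{Y^\star}) \ge 0$), the second uses the pointwise inequality together with $\lambda(\bbE_{X^\star}[X^2]-\sfP)=0$, and the last uses feasibility $\bbE_P[X^2] \le \sfP$ and $\lambda\ge 0$.

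The main technical obstacle is justifying the interchange of limit and integration in the Gateaux derivative computation, and handling the boundary behavior of $\lambda$. Concretely, one must verify that $i(x;P_{X^\star})$ is finite and continuous on the compact interval $[-\sfA,\sfA]$ (which follows from strict positivity of the Gaussian-smoothed output density on all of $\bbR$) and that $\lambda$ is indeed nonnegative, which follows from the observation that tightening the constraint $\sfP$ cannot increase the capacity, making the optimal value a nonincreasing function of $-\sfP$. Once these are in place, the three stated conditions emerge as the KKT conditions of the convex program, and the proof is complete by essentially recording Smith's argument \cite{smith1971information} with the added quadratic Lagrangian term.
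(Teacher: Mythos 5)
Your argument is correct and is essentially the standard one: the paper does not prove this lemma but imports it verbatim as Corollary~2 of Smith \cite{smith1971information}, whose proof is precisely the concave-program/KKT derivation you outline (weak compactness for existence, the Gateaux derivative of mutual information, point-mass perturbations for the pointwise inequality, complementary slackness for the multiplier, and concavity of $I$ in $P_X$ for sufficiency). The only step worth making explicit is the constraint qualification guaranteeing that the multiplier $\lambda$ exists (e.g., the point mass at the origin is strictly feasible for the power constraint whenever $\sfP>0$), but this is immediate.
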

\begin{rem}
Hidden in our notation for typographic reasons, the Lagrange multiplier $\lambda $ in fact depends on amplitude and power constraints, namely $\sfA$ and $\sfP$. Indeed, since $|X| \le \sfA$, if $\sfP>\sfA^2$, the power constraint is inactive, implying $\lambda = 0$. In this case, the problem reduces to additive Gaussian channel with only amplitude constraint, and we recover Lemma~\ref{lem:SmithResult}. 
\end{rem}

As a corollary to above lemma, note that if $x $ is a point of support of $P_{X^\star} $ (i.e., $x\in \supp(P_{X^\star})$), then $x$ is a zero of the function 
\begin{align}
  \Xi_{\sfA,\, \sfP}(x;P_{X^\star}) = i(x;P_{X^\star} ) -  C(\sfA, \sfP ) - \lambda (x^2 - \sfP) \text{.}
\end{align}
In other words, 
\begin{align}
  |\supp(P_{X^\star}) | &\le \rmN([-\sfA, \sfA], \Xi_{\sfA,\, \sfP}(\cdot; P_{X^\star})) \\
  &\le \rmN(\bbR, \Xi_{\sfA,\, \sfP}(\cdot; P_{X^\star})) \text{.} \label{eqn:initialize in power and amp case}
\end{align}
Observe that,  since
\begin{align}
  x^2  = \int_\bbR \frac{y^2-1}{\sqrt{2\pi}} \rme^{-\frac{(y-x)^2}{2}} \rmd y \text{,}
\end{align}
we can write
\begin{align}
   \Xi_{\sfA,\, \sfP}(x;P_{X^\star}) = \int_{\bbR}  \frac{\xi_{\sfA,\, \sfP}(y)}{\sqrt{2\pi}} \rme^{-\frac{(y-x)^2}{2}} \rmd y \text{,}
\end{align}
where
\begin{align}
 \xi_{\sfA,\, \sfP}(y) =  \log\frac1{f_{Y^\star}(y)} -h(Z) - C(\sfA, \sfP) + \lambda \sfP  - \lambda (y^2 -1) \text{.} \label{eq:de:xi_AP}
\end{align}
Keeping the steps \eqref{frm:lemma 6}--\eqref{eq:ApplytingConcetrationBoundOFZeros} in mind, define 
\begin{align}
\mfrF_{\sfA,\, \sfP}^\star (y)  &= \rme^{\lambda y^2} f_{Y^\star}(y) - \kappa_{\sfA,\, \sfP} \text{,} \label{eq:de:mfrF_{A,P}} 
\end{align}
with\footnote{Note that $0\le i(0; P_{X^\star}) \le C(\sfA,\sfP)-\lambda\sfP$, and $\lambda < 1/2 $, cf. Lemma~\ref{lem:lambda value}. This implies that $\kappa_{\sfA, \, \sfP} < 1/\sqrt{2\pi} $.}
\begin{align}
  \kappa_{\sfA,\, \sfP}=\exp(-h(Z)-C(\sfA,\sfP)+\lambda(\sfP +1)) \text{.} \label{eq:def:kappa_{A,P}}
\end{align}
Using the fact that the Gaussian distribution is a strictly totally positive kernel, and resuming from \eqref{eqn:initialize in power and amp case}
\begin{align}
 |\supp(P_{X^\star}) | &\le  \rmN\left(\bbR,  \Xi_{\sfA,\, \sfP}(\cdot;P_{X^\star})\right) \\ &\le \rmN\left(\bbR, \xi_{\sfA,\, \sfP}\right) \label{frm:Karlin osc pow} \\
  &= \rmN\left(\bbR, \mfrF_{\sfA,\, \sfP}^\star\right) \label{eqn:same no of zeros} \\
  &= \rmN\left([-\sfB_{\kappa_{\sfA,\, \sfP}}, \sfB_{\kappa_{\sfA,\, \sfP}}], \mfrF_{\sfA,\, \sfP}^\star\right) \label{frm:loca of zero power} \\
  &\le  a^{}_{\sfP_2} \sfA_\sfP^2 + a^{}_{\sfP_1} \sfA_\sfP + a^{}_{\sfP_0} \text{,} \label{frm:lem:bound using tijdeman in power case}
\end{align}
where \eqref{frm:Karlin osc pow} follows from Karlin's Oscillation Theorem, see Theorem~\ref{thm:OscillationThoerem}; \eqref{eqn:same no of zeros} follows because $\xi_{\sfA,\, \sfP}(y)= 0$ if and only if $ \mfrF_{\sfA,\, \sfP}^\star (y) = 0$; \eqref{frm:loca of zero power} is a consequence Lemma~\ref{lem:location of zeros power constraint} in Appendix~\ref{apdx:AGC amplitude-power supplement}; finally \eqref{frm:lem:bound using tijdeman in power case} follows from Lemma~\ref{lem:bound using tijdeman power case}  in Appendix~\ref{apdx:AGC amplitude-power supplement}.
\qed 

\subsection{Proof of the Lower Bound in Theorem~\ref{thm:AGC amplitude and power constraint}}

Invoking entropy-power inequality,
\begin{align}
I(X;Y)&=  h(X+Z)-h(Z)\\
& \ge \frac12 \log \left( \rme^{2h(X)} +\rme^{2h(Z)} \right) -h(Z)\\
&= \frac12 \log \left(  \frac{1}{2 \pi \rme } \rme^{2h(X)} + 1 \right) \text{.}
\end{align}
Therefore,
\begin{align}
\log \left|\supp(P_{X^\star} )\right| &\ge H(P_{X^\star})  \\ 
&\ge  \max_{X:  |X| \le \sfA,\, \bbE[X^2] \le \sfP} I(X;Y) \\ 
&\ge  \max_{X:  |X| \le \sfA,\, \bbE[X^2] \le \sfP} \frac12 \log \left(  \frac{\rme^{2h(X)}}{2 \pi \rme} + 1 \right)\\
& \ge \max_{  |a| \le \sfA,\,  \frac{a^2}{3} \le \sfP} \frac12 \log \left(  \frac{2 a^2}{ \pi \rme}  + 1 \right) \label{eq:choosingXuniform}\\
&= \frac12 \log \left(\frac{ 2 \min \left\{\sfA^2, 3 \sfP \right\}}{\pi \rme} + 1 \right) \text{,}
\end{align} 
where  \eqref{eq:choosingXuniform} follows by sub-optimally choosing $X$ to be uniform on $[-a,a]$. 
\qed

\section{Additional Lemmas for the Upper Bound Proof of Theorem~\ref{thm:AGC amplitude and power constraint}}\label{apdx:AGC amplitude-power supplement}
  Crucial to the proofs that follow, the next lemma provides a bound on the value of the Lagrange multiplier $\lambda$ in Smith's result \cite[Corollary 2]{smith1971information}.
\begin{lemma}[Bound on the Value of $\lambda$]\label{lem:lambda value}
The Lagrange multiplier $\lambda$ that appears in Lemma~\ref{lem:Smith's Lemma on Variance Constraint Problem} satisfies  
  \begin{align}
    \lambda &\le \frac{\log(1+ \sfP)}{2\sfP} \cdot 1\left\{\sfP < \sfA^2\right\}  \text{.}
  \end{align}
\end{lemma}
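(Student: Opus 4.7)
Proof proposal.

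The plan is to split into two cases based on whether the power constraint is active. When $\sfP\ge \sfA^2$, the amplitude constraint $|X|\le \sfA$ already implies $\bbE[X^2]\le \sfA^2\le \sfP$, so the power constraint is inactive and, by complementary slackness in the third equation of Lemma~\ref{lem:Smith's Lemma on Variance Constraint Problem}, one may take $\lambda=0$. In this case the indicator $1\{\sfP<\sfA^2\}$ vanishes and the claimed inequality is trivially satisfied (with equality $0\le 0$).

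The interesting case is $\sfP<\sfA^2$. Here I would evaluate the necessary condition at the admissible point $x=0\in[-\sfA,\sfA]$, which gives
\begin{align}
i(0;P_{X^\star}) \;\le\; C(\sfA,\sfP)+\lambda(0-\sfP)\;=\;C(\sfA,\sfP)-\lambda \sfP.
\end{align}
The key observation is that $i(x;P_{X^\star})$ is nothing but a relative entropy: rewriting the definition in \eqref{eqn:lem:def:i(x;P_X^star)} as
\begin{align}
i(x;P_{X^\star})=\int f_{Y|X=x}(y)\log\frac{f_{Y|X=x}(y)}{f_{Y^\star}(y)}\,\rmd y = D\bigl(f_{Y|X=x}\,\|\,f_{Y^\star}\bigr)\ge 0,
\end{align}
so in particular $i(0;P_{X^\star})\ge 0$. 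Combining these two facts yields $\lambda \sfP \le C(\sfA,\sfP)$.

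Finally, the capacity is bounded above by the classical Gaussian capacity under a pure power constraint,
\begin{align}
C(\sfA,\sfP)\;\le\;\max_{\bbE[X^2]\le \sfP} I(X;Y)\;=\;\tfrac12\log(1+\sfP),
\end{align}
(dropping the amplitude constraint only enlarges the feasible set). Dividing through by $\sfP>0$ gives $\lambda\le \frac{\log(1+\sfP)}{2\sfP}$, which combined with the first case gives the desired bound. The only place that requires any care is the nonnegativity of $i(0;P_{X^\star})$; once one recognizes it as a KL divergence, the rest is bookkeeping, so I do not anticipate a substantive obstacle.
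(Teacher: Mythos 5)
Your proof is correct and follows essentially the same route as the paper's: evaluate the necessary condition at $x=0$, recognize $i(0;P_{X^\star})$ as the nonnegative relative entropy $D(Z\|Y^\star)$ to drop it from the bound, and then upper-bound $C(\sfA,\sfP)$ by the power-constrained Gaussian capacity $\tfrac12\log(1+\sfP)$. The only cosmetic difference is that you derive the identity $i(x;P_{X^\star})=D(f_{Y|X=x}\|f_{Y^\star})$ for general $x$ before specializing, whereas the paper states $i(0;P_{X^\star})=D(Z\|Y)$ directly.
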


\begin{proof}
 If $\sfP \ge \sfA^2 $, the power constraint in \eqref{eq:channel power} is not active, implying that the Lagrange multiplier $\lambda = 0$. Suppose $\sfP < \sfA^2$. It follows from Lemma~\ref{lem:Smith's Lemma on Variance Constraint Problem} that
  \begin{align}
       \lambda  \sfP &\le C(\sfA, \sfP) - i(0; P_{X^\star})  \\ 
       &\le \frac12 \log(1+ \sfP) \text{,} \label{frm:cap of no const}
  \end{align}
where \eqref{frm:cap of no const} is because $ C(\sfA, \sfP) \le  C(\infty , \sfP) =\frac12 \log(1+ \sfP)$, and $i(0; P_{X^\star})  =  D(Z\|Y) \ge 0$.  
\end{proof}
  
Similar to its counterpart in Lemma~\ref{lem:LocationOFZeros_RealCase}, the next lemma provides a bound on the interval for the zeros of the function $\mfrF_{\sfA,\, \sfP}^\star $.
  
\begin{lemma}[Location and Finiteness of Zeros of $\rme^{\lambda y^2} f_Y(y) - \kappa_{\sfA,\, \sfP} $]\label{lem:location of zeros power constraint} 
For a fixed $\kappa_{\sfA,\, \sfP} \in \big(0, \frac{1}{\sqrt{2\pi}}\big] $, there exists some $\sfB_{\kappa_{\sfA,\, \sfP}} = \sfB_{\kappa_{\sfA,\, \sfP}}(\sfA, \sfP) < \infty $ such that 
\begin{align}
&  \rmN\left(\bbR, \rme^{\lambda y^2} f_Y(y) - \kappa_{\sfA,\, \sfP} \right) \notag\\
   &= \rmN\left([- \sfB_{\kappa_{\sfA,\, \sfP}}, \sfB_{\kappa_{\sfA,\, \sfP}}],\rme^{\lambda y^2} f_Y(y) - \kappa_{\sfA,\, \sfP} \right) \\
  &<\infty \text{.}
\end{align}
  In other words, there are finitely many zeros of $\rme^{\lambda y^2} f_Y(y) - \kappa_{\sfA,\, \sfP}$ which are contained within the interval $[-\sfB_{\kappa_{\sfA,\, \sfP}}, \sfB_{\kappa_{\sfA,\, \sfP}}]$. Moreover, 
\begin{align}
  \sfB_{\kappa_{\sfA,\, \sfP}} &\le \frac{\sfA}{1-2\lambda} + \left(\frac{1}{1-2\lambda} \log \frac{1}{2\pi\kappa_{\sfA,\, \sfP}^2} + \frac{2\lambda \sfA^2}{\left(1-2\lambda\right)^2} \right)^\frac12 \text{.}  \label{eq:upper bound on B_kappa-power const case} 
  \end{align}
\end{lemma}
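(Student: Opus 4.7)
The plan is to mimic the argument used in the amplitude-only case (Lemma on location and finiteness of zeros in the scalar case) while carefully tracking the effect of the extra Gaussian factor $\rme^{\lambda y^2}$. The key is that, thanks to the preceding lemma on the value of $\lambda$, we have $\lambda < 1/2$ strictly, so $1-2\lambda > 0$ and the quadratic exponent $\lambda y^2 - (|y|-\sfA)^2/2$ is eventually driven to $-\infty$ by the term $-(1-2\lambda)y^2/2$.

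First, I would bound $f_Y(y)$ for $|y| > \sfA$ exactly as in the amplitude-only proof: using $|X|\le \sfA$ and the monotonicity of $u\mapsto \rme^{-u^2/2}$,
\begin{align}
  f_Y(y) \le \frac{1}{\sqrt{2\pi}} \rme^{-(|y|-\sfA)^2/2}, \quad |y|>\sfA.
\end{align}
Multiplying by $\rme^{\lambda y^2}$ gives, for $|y|>\sfA$,
\begin{align}
 \rme^{\lambda y^2} f_Y(y) - \kappa_{\sfA,\sfP} \le \frac{1}{\sqrt{2\pi}}\exp\left(\lambda y^2 - \tfrac{1}{2}(|y|-\sfA)^2 \right)-\kappa_{\sfA,\sfP}.
\end{align}

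Next, I would force the right-hand side to be strictly negative. Setting $u=|y|$ and expanding the exponent yields the quadratic $\tfrac{1-2\lambda}{2}u^2 - \sfA u + \tfrac{\sfA^2}{2}$, which I complete the square on to obtain
\begin{align}
 \tfrac{1-2\lambda}{2}\Bigl(u- \tfrac{\sfA}{1-2\lambda}\Bigr)^{\!2} - \tfrac{\lambda \sfA^2}{1-2\lambda}.
\end{align}
Because Lemma on the value of $\lambda$ gives $1-2\lambda>0$, requiring this quantity to exceed $\tfrac12\log\tfrac{1}{2\pi\kappa_{\sfA,\sfP}^2}$ yields the explicit threshold
\begin{align}
 u > \frac{\sfA}{1-2\lambda} + \left(\frac{1}{1-2\lambda}\log\frac{1}{2\pi\kappa_{\sfA,\sfP}^2} + \frac{2\lambda\sfA^2}{(1-2\lambda)^2}\right)^{\!\!1/2},
\end{align}
which is exactly the bound \eqref{eq:upper bound on B_kappa-power const case}. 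Hence all zeros of $\rme^{\lambda y^2}f_Y(y)-\kappa_{\sfA,\sfP}$ lie inside $[-\sfB_{\kappa_{\sfA,\sfP}},\sfB_{\kappa_{\sfA,\sfP}}]$ with $\sfB_{\kappa_{\sfA,\sfP}}$ satisfying this bound.

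Finally, for finiteness of the number of zeros on this compact interval, I would invoke analyticity: since convolution with a Gaussian preserves analyticity (as cited in Lemma on Location and Finiteness for the amplitude-only case), $f_Y$ is analytic on $\bbR$; multiplying by the entire function $\rme^{\lambda y^2}$ keeps it analytic; subtracting the constant $\kappa_{\sfA,\sfP}$ still leaves an analytic, non-identically-zero function (it approaches $-\kappa_{\sfA,\sfP}<0$ as $|y|\to \infty$). By Bolzano–Weierstrass together with the Identity Theorem, a non-zero analytic function has only finitely many zeros on any compact interval, completing the proof. The only non-routine step is verifying $1-2\lambda>0$ so that the quadratic completing-the-square argument is valid, and this is exactly what the preceding lemma on $\lambda$ supplies.
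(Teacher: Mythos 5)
Your proof is correct and takes essentially the same route as the paper: bound $f_Y(y)$ for $|y|>\sfA$ using $|X|\le\sfA$, multiply by $\rme^{\lambda y^2}$, complete the square in the exponent with $1-2\lambda>0$ guaranteed by the preceding lemma on $\lambda$, and invoke analyticity (preserved under Gaussian convolution) for finiteness on the compact interval. The only cosmetic difference is that you solve the inequality directly for the threshold in $u=|y|$, whereas the paper notes the completed-square upper bound is decreasing past $\sfA/(1-2\lambda)$ and reads off the same $\sfB_{\kappa_{\sfA,\,\sfP}}$; your explicit remark that the function tends to $-\kappa_{\sfA,\,\sfP}<0$ (hence is not identically zero) is a small but welcome addition.
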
  
  
\begin{proof}
  Using the monotonicity of $\rme^{-u} $, for $|y| > \sfA $,
\begin{align}
   & \rme^{\lambda y^2} f_Y(y)  \notag\\
    &= \frac{\rme^{\lambda y^2} }{\sqrt{2\pi}} \bbE\left[\exp\left(-\frac{(y-X)^2}{2}\right) \right] \\
    &\le \frac{\rme^{\lambda y^2}}{\sqrt{2\pi}} \exp\left(-\frac{(y-\sfA)^2 }{2}\right) \\
    &= \frac{1}{\sqrt{2\pi}}  \exp\left(-\frac{1-2\lambda}{2} \left(y- \frac{\sfA}{1-2\lambda}\right)^2 + \frac{\lambda \sfA^2}{1-2\lambda}\right) \text{.} \label{frm:algebra for var case} 
  \end{align}
Since $\lambda \in [0, 1/2)$, cf. Lemma~\ref{lem:lambda value}, the right side of \eqref{frm:algebra for var case} is a decreasing function for all $|y| > \frac{A}{1-2\lambda}$ and we have
\begin{align}
   \rme^{\lambda y^2} f_Y(y) - \kappa_{\sfA,\, \sfP} < 0 
\end{align}
  for all 
\begin{align}
    |y|> \frac{\sfA}{1-2\lambda} + \left(\frac{2}{1-2\lambda} \log \frac{1}{\kappa_{\sfA,\, \sfP} \sqrt{2\pi}} + \frac{2\lambda \sfA^2}{\left(1-2\lambda\right)^2} \right)^\frac12 \text{.} \label{frm:equate to zero} 
  \end{align}
 This means that there exists $\sfB_{\kappa_{\sfA,\, \sfP}}$ satisfying \eqref{eq:upper bound on B_kappa-power const case} such that all zeros of $\mfrF_{\sfA,\, \sfP} (y) = \rme^{\lambda y^2} f_Y(y) - \kappa_{\sfA,\, \sfP}$ are contained within the interval $[-\sfB_{\kappa_{\sfA,\, \sfP}}, \sfB_{\kappa_{\sfA,\, \sfP}}]$. 
  
 To see the finiteness of the number of zeros of $\mfrF_{\sfA,\, \sfP} (y) $, it suffices to show that $\mfrF_{\sfA,\, \sfP}$ is analytic on $\bbR$ as analytic functions have finitely many zeros on a compact interval. However, it is easy to see that $\mfrF_{\sfA,\, \sfP}$ is analytic because convolution with a Gaussian preserves analyticity \cite[Proposition 8.10]{folland2013real}.
\end{proof}
\begin{lemma}\label{lem:looser bound on B_kappa}
For $ \kappa_{\sfA,\, \sfP}$ as defined in \eqref{eq:def:kappa_{A,P}}, the bound on the location of the zeros in Lemma~\ref{lem:location of zeros power constraint} can be loosened as 
  \begin{align}
    \sfB_{\kappa_{\sfA,\, \sfP}} < 2\sfA_\sfP + 1 \text{,}
  \end{align}
where 
 \begin{align}
   \sfA_\sfP = \frac{\sfA \sfP}{\sfP - \log(1+\sfP) \cdot 1\{\sfP < \sfA^2  \}} \text{.}
 \end{align} 
\end{lemma}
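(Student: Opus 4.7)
The plan is to specialize Lemma~\ref{lem:location of zeros power constraint} by substituting the definition of $\kappa_{\sfA,\sfP}$ from \eqref{eq:def:kappa_{A,P}}. Using $h(Z)=\tfrac{1}{2}\log(2\pi\rme)$ gives the clean identity
\begin{align}
\log\frac{1}{2\pi \kappa_{\sfA,\sfP}^2} = 1 + 2C(\sfA,\sfP) - 2\lambda(\sfP+1),
\end{align}
after which the argument naturally splits according to whether the power constraint is active.

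In the inactive case $\sfP\ge \sfA^2$, Lemma~\ref{lem:lambda value} forces $\lambda=0$ and $\sfA_\sfP=\sfA$, and the bound reduces to $\sfA + \sqrt{1+2C(\sfA)} \le \sfA + \sqrt{1+\log(1+\sfA^2)} < 2\sfA+1$, exactly the estimate already displayed in \eqref{eqn:required later in the amp and pow const case}; strictness comes from $\log(1+\sfA^2)<\sfA^2$ together with $\sqrt{1+x}<1+\sqrt{x}$ for $x>0$.

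In the active case $\sfP<\sfA^2$, I would estimate the two summands in the bound of Lemma~\ref{lem:location of zeros power constraint} separately. For the first, the monotonicity of $\lambda\mapsto 1/(1-2\lambda)$, combined with Lemma~\ref{lem:lambda value} and the elementary identity $1-2\lambda_\sfP=(\sfP-\log(1+\sfP))/\sfP$, yields $\sfA/(1-2\lambda)\le \sfA_\sfP$ by inspection. The key step for the radicand is the algebraic observation that, using $C(\sfA,\sfP)\le \tfrac{1}{2}\log(1+\sfP)\le \sfP/2$,
\begin{align}
\frac{1+2C(\sfA,\sfP)-2\lambda(\sfP+1)}{1-2\lambda}\le \frac{(1+\sfP)(1-2\lambda)}{1-2\lambda}=1+\sfP,
\end{align}
so the awkward $(1-2\lambda)$ denominator disappears. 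The remaining radicand is $1+\sfP+2\lambda\sfA^2/(1-2\lambda)^2$, which is monotone increasing in $\lambda$ and therefore upper bounded, at $\lambda=\lambda_\sfP$, by $1+\sfP+\sfA_\sfP^2\log(1+\sfP)/\sfP$.

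The last step is to verify that $\sfA_\sfP+\sqrt{1+\sfP+\sfA_\sfP^2\log(1+\sfP)/\sfP}<2\sfA_\sfP+1$. Squaring and simplifying via the defining identity $\sfA_\sfP(\sfP-\log(1+\sfP))=\sfA\sfP$ reduces the task to the crisp inequality $\sfP<(\sfA+2)\sfA_\sfP$, which is immediate since $\sfA_\sfP\ge \sfA$ and $\sfP<\sfA^2$ in this regime. I expect the main obstacle to be not any deep analytic difficulty but the careful bookkeeping that ensures every inequality is \emph{strict} (as demanded by the ``$<$'' in the statement) and the recognition of the cancellation $(1+\sfP)(1-2\lambda)/(1-2\lambda)$ that makes the whole radicand tractable in closed form.
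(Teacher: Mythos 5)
Your proof is correct. Both you and the paper specialize Lemma~\ref{lem:location of zeros power constraint} by plugging in $\kappa_{\sfA,\sfP}$, invoke Lemma~\ref{lem:lambda value} to control $\lambda$, and exploit monotonicity in $\lambda$, so the skeleton is the same. The difference is in how the radicand is handled. The paper evaluates it \emph{exactly} at $\lambda=\lambda_\sfP$, which produces the clean cancellation $1+\tfrac{\log(1+\sfP)}{1-2\lambda_\sfP}+\tfrac{2\lambda_\sfP}{1-2\lambda_\sfP}\big(\tfrac{\sfA^2}{1-2\lambda_\sfP}-\sfP\big)=1+2\lambda_\sfP\sfA_\sfP^2$; from there $<2\sfA_\sfP+1$ is immediate since $\lambda_\sfP<\tfrac12$ gives $\sqrt{1+2\lambda_\sfP\sfA_\sfP^2}<1+\sfA_\sfP$, with no further use of $\sfP<\sfA^2$. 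You instead bound the first radicand summand by the $\lambda$-independent constant $1+\sfP$ (via $2C(\sfA,\sfP)\le\sfP$) and only apply monotonicity to the $\tfrac{2\lambda\sfA^2}{(1-2\lambda)^2}$ term, arriving at the looser radicand $1+\sfP+2\lambda_\sfP\sfA_\sfP^2$; this forces an extra squaring step and the additional observation $\sfP<\sfA^2\le\sfA\sfA_\sfP<(\sfA+2)\sfA_\sfP$ to close the gap. Both routes give the strict inequality; the paper's exact cancellation keeps the radicand tighter and the final step simpler, while your decomposition trades a little slack in the radicand for a more mechanical verification. Your handling of the inactive regime $\sfP\ge\sfA^2$ matches what the paper relegates to \eqref{eqn:required later in the amp and pow const case}, and your remark about tracking strictness is well placed since that is what the ``$<$'' in the statement requires.
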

\begin{proof}
We may assume that $\sfP<\sfA^2 $, otherwise see \eqref{eqn:required later in the amp and pow const case}. In that case, observe that 
  \begin{align}
    C(\sfA, \sfP) \le \frac12 \log(1 + \sfP) \text{,}
  \end{align}
  and hence,
  \begin{align}
    \kappa_{\sfA,\, \sfP} &= \exp(-h(Z)-C(\sfA,\sfP)+\lambda(\sfP +1)) \\
    &\ge \frac{\exp(\lambda(\sfP +1))}{\sqrt{2\pi\rme (1+ \sfP)}} \text{.} \label{comb:bd on kappa AP}
  \end{align}
Combining \eqref{eq:upper bound on B_kappa-power const case} and \eqref{comb:bd on kappa AP} 
\begin{align}
    &\sfB_{\kappa_{\sfA,\, \sfP}}  \notag\\
    &\le  \frac{\sfA}{1-2\lambda} + \left(1+ \frac{\log(1+\sfP)}{1-2\lambda} + \frac{2\lambda}{1-2\lambda}\left(\frac{\sfA^2}{1-2\lambda} -\sfP\right)  \right)^\frac12  \label{bc:leftside} \\
    &\le \sfA_\sfP + (1+2\lambda\sfA_\sfP^2 )^{\frac12} \label{frm:lemma on val of lambd} \\
    &< 2\sfA_\sfP + 1 \text{,}  \label{frm:lambda <1/2}
\end{align}
where \eqref{frm:lemma on val of lambd} follows from Lemma~\ref{lem:lambda value} as the right side of \eqref{bc:leftside} is increasing in $\lambda \le \frac{\log(1+ \sfP)}{2\sfP} $; and \eqref{frm:lambda <1/2} follows because $\lambda< \frac12 $, cf. Lemma~\ref{lem:lambda value}. 
\end{proof}
\begin{lemma} \label{lem:upper bound power constraint}
  Suppose $\mfrF_{\sfA,\, \sfP} \colon \bbR \to \bbR $ is such that $\mfrF_{\sfA,\, \sfP} (y)  = \rme^{\lambda y^2} f_Y (y) - \kappa_{\sfA,\, \sfP} $. The complex extension of its derivative $\breve \mfrF_{\sfA,\, \sfP}' \colon \bbC \to \bbC $ satisfies
   \begin{align}
     &\max_{|z|\le \sfB} \left|\breve \mfrF_{\sfA,\, \sfP}' (z)\right|  \notag\\
     &\le \frac{1}{\sqrt{2\pi}} (\sfA+(1+2\lambda_\sfP)\sfB)\exp\left(\frac{(1+2\lambda_\sfP)\sfB^2}2\right)  \label{eqn:lem:lambdaP}  \\
     &< \frac{1}{\sqrt{2\pi}} (\sfA+2\sfB) \exp\left(\sfB^2\right)   \text{,} 
  \end{align}
  where in \eqref{eqn:lem:lambdaP} $\lambda_\sfP = \frac{\log(1+\sfP)}{2\sfP} \cdot 1\{\sfP < \sfA^2\}$.
\end{lemma}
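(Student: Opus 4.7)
\medskip

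\noindent\textit{Proof plan for Lemma~\ref{lem:upper bound power constraint}.} The plan is to mimic the strategy of Lemma~\ref{lem:upper bound on maxima over a disk}, only now with the extra weight $\rme^{\lambda y^2}$ carried through the computation. First, I would differentiate $\mfrF_{\sfA,\sfP}(y) = \rme^{\lambda y^2} f_Y(y) - \kappa_{\sfA,\sfP}$, recalling from \eqref{eqn:def:f} the representation of $f_Y'$, to get the explicit expression
\begin{align}
 \mfrF_{\sfA,\sfP}'(y) = \frac{1}{\sqrt{2\pi}} \bbE\!\left[\bigl((2\lambda-1)y+X\bigr)\exp\!\left(\lambda y^2 - \tfrac{(y-X)^2}{2}\right)\right] \text{,}
\end{align}
which extends verbatim to $z\in\bbC$ as $\breve\mfrF_{\sfA,\sfP}'(z)$ (analyticity follows from the same argument that was used for $f_Y$, i.e., Gaussian convolution preserves analyticity, combined with the fact that $\rme^{\lambda z^2}$ is entire).

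Next, I would apply Jensen's inequality and the triangle inequality to pull the absolute value inside the expectation, splitting the task into bounding (i) the polynomial-in-$z$ factor and (ii) the exponential factor. For (i), writing $|(2\lambda-1)z+X| \le (1-2\lambda)|z| + |X| \le (1-2\lambda)\sfB + \sfA$, and then invoking Lemma~\ref{lem:lambda value} to replace $\lambda$ by its upper bound $\lambda_\sfP$ (crudely, using $1-2\lambda \le 1 \le 1+2\lambda_\sfP$), yields the prefactor $\sfA + (1+2\lambda_\sfP)\sfB$ that appears in \eqref{eqn:lem:lambdaP}.

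For (ii), parameterize $z = \xi + i\eta$ with $\xi^2+\eta^2 \le \sfB^2$ and compute
\begin{align}
 \mathfrak{Re}\!\left(\lambda z^2 - \tfrac{(z-X)^2}{2}\right) = \lambda(\xi^2-\eta^2) + \tfrac{\eta^2 - (\xi-X)^2}{2} \le \lambda\xi^2 + \tfrac{(1-2\lambda)\eta^2}{2} \text{,}
\end{align}
after dropping the nonpositive term $-(\xi-X)^2/2$. Bounding $\xi^2,\eta^2 \le \sfB^2$ and once again using $\lambda \le \lambda_\sfP$ (monotonicity of the resulting expression in $\lambda$, after replacing the tight bound $\sfB^2/2$ by the looser $(1+2\lambda_\sfP)\sfB^2/2$), gives the exponent in \eqref{eqn:lem:lambdaP}. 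Combining (i) and (ii) completes the first inequality. The second inequality then falls out by observing that $\lambda_\sfP < \tfrac12$, which is immediate from its definition in \eqref{eqn:def:lambda_P} and the elementary inequality $\log(1+\sfP) < \sfP$.

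No step is expected to be a serious obstacle — the computation is essentially routine once one realizes that the extra factor $\rme^{\lambda y^2}$ only shifts the exponent by $\lambda(\xi^2-\eta^2)$ in the complex plane and the prefactor by an additive $2\lambda z$. The only mild subtlety is bookkeeping the sign of $1-2\lambda$ versus $1+2\lambda_\sfP$ so that the final bound, although loose, matches the clean form stated in the lemma and is compatible with the subsequent Tijdeman-style counting argument.
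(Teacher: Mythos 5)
Your proof is correct. Both arguments start from the explicit representation of $\breve\mfrF_{\sfA,\sfP}'(z)$ and bound via modulus/triangle inequalities, but your decomposition is genuinely different from the paper's. The paper factors the derivative as $\rme^{\lambda z^2}\left(\breve f_Y'(z) + 2\lambda z\breve f_Y(z)\right)$, bounds $|\rme^{\lambda z^2}|\le\rme^{\lambda\sfB^2}$ \emph{separately}, splits the inner bracket with the triangle inequality, and reuses Lemma~\ref{lem:upper bound on maxima over a disk} for the $\breve f_Y'$ term. You instead keep everything as a single expectation $\frac1{\sqrt{2\pi}}\bbE\left[\left((2\lambda-1)z+X\right)\rme^{\lambda z^2-(z-X)^2/2}\right]$ and compute the real part of the full exponent. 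The payoff is that the $-\lambda\eta^2$ coming from $\mathfrak{Re}(\lambda z^2)$ partially cancels the $+\eta^2/2$ coming from $\mathfrak{Re}\left(-(z-X)^2/2\right)$, since
\begin{align}
\lambda(\xi^2-\eta^2)+\tfrac{\eta^2}{2}-\tfrac{(\xi-X)^2}{2}\le \lambda\xi^2+\left(\tfrac12-\lambda\right)\eta^2\le \tfrac{\sfB^2}{2}\text{,}
\end{align}
whereas the paper's route, by bounding $\rme^{\lambda z^2}$ on its own, ends up with the exponent $(1+2\lambda)\sfB^2/2$ before loosening. So your intermediate bound $\frac1{\sqrt{2\pi}}\bigl(\sfA+(1-2\lambda)\sfB\bigr)\exp(\sfB^2/2)$ is strictly tighter than the paper's $\frac1{\sqrt{2\pi}}\bigl(\sfA+(1+2\lambda)\sfB\bigr)\exp\bigl((1+2\lambda)\sfB^2/2\bigr)$, and both are then loosened to the stated $\lambda_\sfP$ form. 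One small note: you still need Lemma~\ref{lem:lambda value} to justify $\lambda<\tfrac12$ so that $|2\lambda-1|=1-2\lambda$, but you do not need the sharper relation $\lambda\le\lambda_\sfP$ that the paper uses in its last step — the coarse bounds $0\le\lambda<\tfrac12$ and $\lambda_\sfP\ge 0$ already suffice to reach the stated form.
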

\begin{proof}
  Denote by $\breve f_Y  $ and $\breve f_Y' $ the analytic complex extensions of the probability density function $f_Y $ and its derivative $f_Y' $, respectively. Then, 
    \begin{align}
     & \max_{|z|\le \sfB} \left|\breve \mfrF_{\sfA,\, \sfP}' (z)\right|  \notag\\
      &=  \max_{|z|\le \sfB} \left| \rme^{\lambda z^2} \left(\breve f'_Y(z) + 2 \lambda z \breve f_Y(z)\right) \right|  \label{frm:defs of mfrF and f_Y} \\
  &\le\rme^{\lambda \sfB^2 } \max_{|z|\le \sfB} \left|\breve f'_Y(z) + 2 \lambda z\breve f_Y(z)\right|  \label{frm:|z|<sfB} \\
   &\le\rme^{\lambda \sfB^2 } \left(\max_{|z|\le \sfB} \left|\breve f'_Y(z)\right| + \max_{|z|\le \sfB}\left|2 \lambda z\breve f_Y(z)\right|\right) \label{frm:triangle ineq and max max} \\
   &\le\frac{\rme^{\lambda \sfB^2 }}{\sqrt{2\pi}} \left((\sfA+\sfB)\exp\left(\frac{\sfB^2}2\right) \right. \\
   & \left. \quad + \max_{|z|\le \sfB}\left|2\lambda z \bbE\left[\exp\left(-\frac{(z-X)^2}{2}\right)\right]\right|  \right) \label{frm:prev lemma in amp case} \\
   &\le\frac{\rme^{\lambda \sfB^2 }}{\sqrt{2\pi}} \left((\sfA+\sfB)\exp\left(\frac{\sfB^2}2\right)  \right. \\
   & \left. \quad + 2\lambda \sfB \max_{|z|\le \sfB}\bbE\left[\left|\exp\left(-\frac{(z-X)^2}{2}\right)\right|\right]  \right) \label{frm:alx says modulus ineq} \\
   &\le\frac{\rme^{\lambda \sfB^2 }}{\sqrt{2\pi}} \left((\sfA+\sfB)\exp\left(\frac{\sfB^2}2\right) + 2\lambda \sfB \max_{|z|\le \sfB} \exp\left(\frac{\mathfrak{Im}^2(z)}{2}\right) \right) \label{frm:prv/ampl case alne} \\  
   &\le \frac{1}{\sqrt{2\pi}} (\sfA+(1+2\lambda)\sfB)\exp\left(\frac{(1+2\lambda)\sfB^2}2\right)   \text{,}
\end{align}
 where \eqref{frm:defs of mfrF and f_Y} follows from definitions of the functions involved; \eqref{frm:|z|<sfB} is because $|z| \le \sfB $ implies $\big|\rme^{\lambda z^2 }\big| \le \rme^{\lambda \sfB^2 } $; \eqref{frm:triangle ineq and max max} follows from the triangle inequality; \eqref{frm:prev lemma in amp case} follows from Lemma~\ref{lem:upper bound on maxima over a disk}; \eqref{frm:alx says modulus ineq} follows from the modulus inequality.
 
 The desired result is a consequence of the fact that the Lagrange multiplier satisfies $\lambda\le \lambda_\sfP < 1/2 $, cf. Lemma~\ref{lem:lambda value}. 
\end{proof}
 
 \begin{lemma} \label{lem:lower bound power constraint}
Let $\sfB \ge \sfA/(1-2\lambda) $.  Suppose $\mfrF_{\sfA,\, \sfP} \colon \bbR \to \bbR $ is such that $\mfrF_{\sfA,\, \sfP} (y)  = \rme^{\lambda y^2} f_Y (y) - \kappa_{\sfA,\, \sfP} $. The complex extension of its derivative $\breve \mfrF_{\sfA,\, \sfP}' \colon \bbC \to \bbC $ satisfies
  \begin{align}
    \max_{|z|\le \sfB} \left|\breve \mfrF_{\sfA,\, \sfP}'(z) \right| &\ge \frac{\sfA}{\sqrt{2\pi}} \exp\left(- \frac{2- \lambda_\sfP}{1-2\lambda_\sfP}\, \sfA^2\right)     \text{,}
  \end{align}
  where $\lambda_\sfP = \frac{\log(1+\sfP)}{2\sfP} \cdot 1\{\sfP < \sfA^2 \} $. 
\end{lemma}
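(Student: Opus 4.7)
The plan is to mimic the proof of Lemma~\ref{lem:lower bound on maxima over a disk}, choosing a carefully selected real suboptimal point and exploiting the constraint $\bbE[X]=0$ together with $|X|\le \sfA$.

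First I would write out the derivative explicitly. Differentiating $\mfrF_{\sfA,\sfP}(y) = \rme^{\lambda y^2} f_Y(y) - \kappa_{\sfA,\sfP}$ and using $f'_Y(y) = \frac{1}{\sqrt{2\pi}}\bbE\bigl[(X-y)\rme^{-(y-X)^2/2}\bigr]$, a short calculation gives
\begin{align}
\mfrF'_{\sfA,\sfP}(y) = \frac{\rme^{\lambda y^2}}{\sqrt{2\pi}}\,\bbE\!\left[(X-(1-2\lambda)y)\exp\!\left(-\frac{(y-X)^2}{2}\right)\right]\text{.}
\end{align}
The key observation is that choosing the real point $y_0 = \sfA/(1-2\lambda)$ --- which, by the hypothesis $\sfB \ge \sfA/(1-2\lambda)$, lies in the disc $|z|\le \sfB$ --- cancels the ``extra'' term produced by the weight $\rme^{\lambda y^2}$: the factor $X-(1-2\lambda)y_0$ becomes $X-\sfA$, and so is nonpositive for every realisation of $X$.

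Next I would take absolute values and exploit the amplitude constraint to obtain a uniform exponential lower bound on $\rme^{-(y_0-X)^2/2}$. Because $|X|\le \sfA$, we have $|y_0 - X| \le y_0 + \sfA = \sfA(1+\alpha)/\alpha$ with $\alpha := 1-2\lambda$, hence
\begin{align}
\left|\mfrF'_{\sfA,\sfP}(y_0)\right|
\ge \frac{1}{\sqrt{2\pi}}\exp\!\left(\frac{\lambda \sfA^2}{\alpha^2}-\frac{\sfA^2(1+\alpha)^2}{2\alpha^2}\right)\bbE[\sfA-X] \text{,}
\end{align}
and $\bbE[\sfA-X] = \sfA$ since the capacity-achieving input is symmetric (hence zero-mean). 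The remaining step is purely algebraic: showing the exponent simplifies as
\begin{align}
\frac{2\lambda - (1+\alpha)^2}{2\alpha^2} = -\frac{2-\lambda}{1-2\lambda}\text{,}
\end{align}
which follows from $(1+\alpha)^2 = (2-2\lambda)^2$ and the factorisation $4\lambda^2-10\lambda+4 = 2(2\lambda-1)(\lambda-2)$.

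Finally, I would pass from $\lambda$ to $\lambda_\sfP$. A direct differentiation shows that $g(\lambda) := (2-\lambda)/(1-2\lambda)$ is strictly increasing on $[0,1/2)$, so $g(\lambda)\le g(\lambda_\sfP)$ whenever $\lambda \le \lambda_\sfP$; and this last inequality is exactly Lemma~\ref{lem:lambda value}. Since the exponent is negated, this produces the claimed lower bound. The main (mild) obstacle is keeping track of the algebraic identity in the exponent, but once the point $y_0 = \sfA/(1-2\lambda)$ is guessed, everything collapses, and no additional analytic machinery (e.g. complex extension, Jensen, Tijdeman) is required beyond what was used in the amplitude-only case.
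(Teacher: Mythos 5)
Your proof is correct and takes essentially the same route as the paper: both write out $\breve{\mfrF}'_{\sfA,\sfP}$ explicitly, evaluate at the same suboptimal real point $z_0 = \sfA/(1-2\lambda)$ (making the prefactor $X-\sfA\le 0$), bound the exponential from below at the worst case $X=-\sfA$, use $\bbE[X]=0$, and finish by replacing $\lambda$ with $\lambda_\sfP$ via Lemma~\ref{lem:lambda value}. The only cosmetic difference is that the paper first completes the square in the exponent and bounds the resulting quadratic in $X$ term by term, whereas you bound $(z_0-X)^2 \le (z_0+\sfA)^2$ directly; both yield the identical exponent $-\tfrac{2-\lambda}{1-2\lambda}\sfA^2$.
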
 

\begin{figure*}[h!]
  \begin{align}
     &\rmN\left([-\sfB_{\kappa_{\sfA,\, \sfP}}, \sfB_{\kappa_{\sfA,\, \sfP}}], \mfrF_{\sfA,\, \sfP}\right) \nonumber \\
     &\le 1+ \rmN\left([-\sfB_{\kappa_{\sfA,\, \sfP}}, \sfB_{\kappa_{\sfA,\, \sfP}}], \mfrF_{\sfA,\, \sfP}'\right)  \label{frm:Rolle again} \\
     &\le 1+ \rmN\left(\cD_{\sfB_{\kappa_{\sfA,\, \sfP}}},\breve \mfrF_{\sfA,\, \sfP}'\right) \label{frm:complex mfrF has more zeros} \\
     &\le 1+ \min_{s>1,\, t\ge \frac{\sfA_\sfP}{\sfB_{\kappa_{\sfA,\, \sfP}}} } \left\{\frac{1}{\log s} \left( \log \max_{|z|\le (st+s+t)\sfB_{\kappa_{\sfA,\, \sfP}}}\left|\breve \mfrF_{\sfA,\, \sfP}'\right| - \log \max_{|z|\le t \sfB_{\kappa_{\sfA,\, \sfP}}} \left|\breve \mfrF_{\sfA,\, \sfP}'\right|  \right) \right\} \label{frm:Tijdeman nu of zero} \\ 
     &\le 1+ \min_{s>1,\, t\ge \frac{\sfA_\sfP}{\sfB_{\kappa_{\sfA,\, \sfP}}}} \left\{\frac{1}{\log s} \left( \frac{(st+s+t)^2\sfB_{\kappa_{\sfA,\, \sfP}}^2}{2/(1+2\lambda_\sfP) } +  \frac{2- \lambda_\sfP}{1-2\lambda_\sfP}\, \sfA^2 + \log\left(1+\frac{(st+s+t)\sfB_{\kappa_{\sfA,\, \sfP}}}{\sfA/(1+2\lambda_\sfP) }\right) \right)  \right\} \label{frm:invoke lem apd upp and low} \\
     &= 1+ \min_{s>1} \left\{\frac{1}{\log s} \left( \frac{((\sfA_\sfP +\sfB_{\kappa_{\sfA,\, \sfP}})s + \sfA_\sfP)^2 }{2/(1+2\lambda_\sfP) } +  \frac{2- \lambda_\sfP}{1-2\lambda_\sfP}\, \sfA^2 + \log\left(\frac{2}{1-2\lambda_\sfP} +\frac{(\sfA_\sfP +\sfB_{\kappa_{\sfA,\, \sfP}})s }{\sfA/(1+2\lambda_\sfP) }\right) \right)  \right\} \label{frm:optmizing t is the smallest} \\
     &\le 1+ \min_{s>1} \left\{\frac{1}{\log s} \left( \frac{((3\sfA_\sfP +1)s + \sfA_\sfP)^2 }{2/(1+2\lambda_\sfP) } +  \frac{2- \lambda_\sfP}{1-2\lambda_\sfP}\, \sfA^2 + \log\left(\frac{2}{1-2\lambda_\sfP} +\frac{(3\sfA_\sfP +1)s }{\sfA/(1+2\lambda_\sfP) }\right) \right)  \right\} \label{frm:looser bound on B_kappa} \\
     &\le 1+ 2 \left( \frac{((3\sqrt{\rme} +1)\sfA_\sfP + \sqrt{\rme} )^2 }{2/(1+2\lambda_\sfP) } +  \frac{2- \lambda_\sfP}{1-2\lambda_\sfP}\, \sfA^2 + \log\left(\frac{2}{1-2\lambda_\sfP} +\frac{(3\sfA_\sfP +1) \sqrt{\rme} }{\sfA/(1+2\lambda_\sfP) }\right) \right) \label{frm:sub opt sqrt{e}} \\
     &\le 1+ 2 \left( \frac{((3\sqrt{\rme} +1)\sfA_\sfP + \sqrt{\rme} )^2 }{2/(1+2\lambda_\sfP) } + (2- \lambda_\sfP)(1-2\lambda_\sfP)\, \sfA_\sfP^2 + \log\left(\frac{2+4 \sqrt{\rme}(1+2\lambda_\sfP)}{1-2\lambda_\sfP} \right) \right) \label{frm:assuming A>1} \text{,}
  \end{align} 
\rule{\textwidth}{0.5pt}
  \end{figure*}

\begin{proof}
  Note that
  \begin{align}
 &    \notag\\
    &\max_{|z|\le \sfB} \left|\breve \mfrF_{\sfA,\, \sfP}'(z) \right| \\ 
    &= \max_{|z|\le \sfB} \left|\rme^{\lambda z^2} \left(\breve f'_Y(z) + 2 \lambda z \breve f_Y(z)\right)\right| \\
     &=\max_{|z|\le \sfB} \frac{1}{\sqrt{2\pi}}\left|\bbE\left[(X-(1-2\lambda)z)\rme^{-\frac{1-2\lambda}{2} \left(z- \frac{X}{1-2\lambda}\right)^2 + \frac{\lambda X^2}{1-2\lambda}}\right]\right| \\ 
     &\ge \frac{1}{\sqrt{2\pi}} \left|\bbE\left[(X-\sfA)\exp\left( \frac{-\sfA^2+2\sfA X - (1-2\lambda)X^2}{2-4\lambda} \right)\right] \right| \label{frm:subopt z=A/(1-lambda)}  \\  
     &\ge \frac{1}{\sqrt{2\pi}} \bbE\left[(\sfA-X)\exp\left( \frac{-3\sfA^2 }{2-4\lambda} - \frac12 \sfA^2 \right)\right] \label{frm:|X|<A again}\\
      &= \frac{\sfA}{\sqrt{2\pi}} \exp\left(-\left(\frac12 + \frac{3}{2-4\lambda}\right)\sfA^2  \right) \label{bc:E[X]=0} \\
      &\ge \frac{\sfA}{\sqrt{2\pi}} \exp\left(- \frac{2- \lambda_\sfP}{1-2\lambda_\sfP}\, \sfA^2\right) \label{bc:lamda<log(1+p)/2p} \text{,}
  \end{align}
  where \eqref{frm:subopt z=A/(1-lambda)} follows from the suboptimal choice of $z = \frac{\sfA}{1-2\lambda} \le \sfB $; \eqref{frm:|X|<A again} follows because $|X|\le \sfA $; \eqref{bc:E[X]=0} is a consequence of $\bbE[X]=0$; and finally, \eqref{bc:lamda<log(1+p)/2p} follows because $\lambda \le \lambda_\sfP $, see Lemma~\ref{lem:lambda value}. 
  \end{proof}

\begin{lemma}[Bound on the Number of Oscillations of $\mfrF_{\sfA,\, \sfP}^\star$] \label{lem:bound using tijdeman power case} 
Suppose that  $\mfrF_{\sfA,\, \sfP}^\star$ is as defined in \eqref{eq:de:mfrF_{A,P}} and $\sfB_{\kappa_{\sfA,\, \sfP}}$ be as defined in Lemma~\ref{lem:location of zeros power constraint}. The number of zeros of $\mfrF_{\sfA,\, \sfP}^\star $ within the interval $[-\sfB_{\kappa_{\sfA,\, \sfP}}, \sfB_{\kappa_{\sfA,\, \sfP}}]$ satisfies
\begin{align}
  \rmN\left([-\sfB_{\kappa_{\sfA,\, \sfP}}, \sfB_{\kappa_{\sfA,\, \sfP}}], \mfrF_{\sfA,\, \sfP}\right) \le  a^{}_{\sfP_2} \sfA_\sfP^2 + a^{}_{\sfP_1} \sfA_\sfP + a^{}_{\sfP_0}  \text{,}\label{eq:lem:upp bd on support}
\end{align}
where $\sfA_\sfP $, $ a^{}_{\sfP_2}$, $ a^{}_{\sfP_1}$, $ a^{}_{\sfP_0}$, and $\lambda_\sfP$ are as defined in \eqref{eqn:def:A_P}, \eqref{eqn:def:a_{sfP_2}}, \eqref{eqn:def:a_{sfP_1}}, \eqref{eqn:def:a_{sfP_0}}, and \eqref{eqn:def:lambda_P}. 
\end{lemma}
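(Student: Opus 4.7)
The plan is to mirror the argument that gave the explicit bound \eqref{eq:MainBound} in the pure amplitude-constrained case, with the two extra ingredients being the bound on the Lagrange multiplier $\lambda$ from Lemma~\ref{lem:lambda value} and the uniform control on the location of the zeros afforded by Lemma~\ref{lem:looser bound on B_kappa}. The first step is to trade zeros of $\mfrF_{\sfA,\, \sfP}^\star$ for zeros of its derivative: since $\mfrF_{\sfA,\, \sfP}^\star$ is analytic on $\bbR$ and has only finitely many zeros on the compact interval $[-\sfB_{\kappa_{\sfA,\, \sfP}}, \sfB_{\kappa_{\sfA,\, \sfP}}]$ by Lemma~\ref{lem:location of zeros power constraint}, Rolle's Theorem yields $ \rmN([-\sfB_{\kappa_{\sfA,\, \sfP}}, \sfB_{\kappa_{\sfA,\, \sfP}}], \mfrF_{\sfA,\, \sfP}^\star) \le 1 + \rmN([-\sfB_{\kappa_{\sfA,\, \sfP}}, \sfB_{\kappa_{\sfA,\, \sfP}}], (\mfrF_{\sfA,\, \sfP}^\star)').$ This step eliminates the nuisance constant $\kappa_{\sfA,\, \sfP}$ (whose exact value is unknown) from the counting problem, exactly as in the amplitude-only case.

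Next, I will extend to the complex plane. Since $(\mfrF_{\sfA,\, \sfP}^\star)' $ is analytic (convolution with a Gaussian preserves analyticity, and multiplication by $\rme^{\lambda z^2}$ is entire), real zeros are also zeros of the complex extension $\breve{(\mfrF_{\sfA,\, \sfP}^\star)'}$. Thus I can apply Tijdeman's Number of Zeros Lemma (Lemma~\ref{lem:number of zeros of analytic function}) to the disc $\cD_{\sfB_{\kappa_{\sfA,\, \sfP}}}$, and feed in the two-sided bounds on $\max_{|z|\le \sfB}|\breve{(\mfrF_{\sfA,\, \sfP}^\star)'}(z)|$ furnished by Lemmas~\ref{lem:upper bound power constraint} and~\ref{lem:lower bound power constraint}. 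The crucial observation is that the lower bound in Lemma~\ref{lem:lower bound power constraint} requires $\sfB \ge \sfA/(1-2\lambda)$, so the Tijdeman parameter $t$ must be chosen satisfying $t\sfB_{\kappa_{\sfA,\, \sfP}} \ge \sfA/(1-2\lambda)$; the sharpest admissible choice is $t = \sfA_\sfP/\sfB_{\kappa_{\sfA,\, \sfP}}$, which is precisely how $\sfA_\sfP$ appears in the final bound.

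With that choice of $t$, Tijdeman's inequality reduces the problem to minimizing, over $s>1$, an expression whose leading behavior is quadratic in $(\sfA_\sfP + \sfB_{\kappa_{\sfA,\, \sfP}})s$ divided by $\log s$. Invoking Lemma~\ref{lem:looser bound on B_kappa} to replace $\sfB_{\kappa_{\sfA,\, \sfP}}$ by $2\sfA_\sfP + 1$, and then making the suboptimal choice $s = \sqrt{\rme}$ (which gives $1/\log s = 2$ and dovetails with the exponent $\tfrac{1+2\lambda_\sfP}{2} \sfB^2$ in Lemma~\ref{lem:upper bound power constraint}), produces a quadratic-in-$\sfA_\sfP$ expression with coefficients depending on $\lambda_\sfP$ through $(1+2\lambda_\sfP)$, $(2-\lambda_\sfP)(1-2\lambda_\sfP)$, and $(1-2\lambda_\sfP)^{-1}$. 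Bookkeeping these three factors against \eqref{eqn:def:a_{sfP_2}}--\eqref{eqn:def:a_{sfP_0}} will yield exactly \eqref{eq:lem:upp bd on support}.

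The only real obstacle is algebraic: carefully assembling the $\lambda_\sfP$-dependent coefficients so that they match the stated constants $a^{}_{\sfP_2}, a^{}_{\sfP_1}, a^{}_{\sfP_0}$. In particular, the bound $\lambda \le \lambda_\sfP < 1/2$ from Lemma~\ref{lem:lambda value} must be used in the right direction at each step (the upper bound on $|\breve{(\mfrF_{\sfA,\, \sfP}^\star)'}|$ is monotone increasing in $\lambda$, while the lower bound is monotone decreasing in $\lambda$), and the small quadratic correction $(2-\lambda_\sfP)(1-2\lambda_\sfP)\sfA^2 \le (2-\lambda_\sfP)(1-2\lambda_\sfP)\sfA_\sfP^2$ from Lemma~\ref{lem:lower bound power constraint} must be absorbed into the $\sfA_\sfP^2$ coefficient. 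When $\sfP \ge \sfA^2$ (so $\lambda_\sfP = 0$ and $\sfA_\sfP = \sfA$) the constants collapse to $a^{}_{\sfP_2} = 9\rme + 6\sqrt{\rme} + 5$, $a^{}_{\sfP_1} = 6\rme + 2\sqrt{\rme}$, $a^{}_{\sfP_0} = \rme + 2\log(4\sqrt{\rme}+2)+1$, in agreement with Theorem~\ref{thm:MainResult}, which serves as a consistency check.
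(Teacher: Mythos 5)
Your proposal tracks the paper's proof of Lemma~\ref{lem:bound using tijdeman power case} step for step: Rolle's Theorem, complex extension of $\mfrF_{\sfA,\sfP}'$, Tijdeman's lemma fed with Lemmas~\ref{lem:upper bound power constraint} and~\ref{lem:lower bound power constraint}, the choices $t = \sfA_\sfP/\sfB_{\kappa_{\sfA,\sfP}}$ and $s = \sqrt{\rme}$, and the replacement $\sfB_{\kappa_{\sfA,\sfP}} < 2\sfA_\sfP + 1$ from Lemma~\ref{lem:looser bound on B_kappa}. This is the same argument, and your observations about the monotonicity directions in $\lambda$ and the $\lambda_\sfP = 0$ consistency check are both correct.
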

\begin{proof}
We may assume $\sfP < \sfA^2$, otherwise see the proof of the upper bound in Theorem~\ref{thm:MainResult}. For an arbitrary output density $f_Y$ define $\mfrF_{\sfA,\, \sfP} \colon \bbR \to \bbR $ such that $\mfrF_{\sfA,\, \sfP} (y)  = \rme^{\lambda y^2} f_Y (y) - \kappa_{\sfA,\, \sfP}$ and let $\mfrF_{\sfA,\, \sfP}'\colon \bbR \to \bbR $ be the derivative of $\mfrF_{\sfA,\, \sfP} $. Consider the disk $\cD_R \subset \bbC $ of radius $R$ centered at the origin and note 
 the following sequence of inequalities shown at the top of this page.

There, \eqref{frm:Rolle again} follows from Rolle's Theorem, see Lemma~\ref{lem:extreme points and oscillations}; \eqref{frm:complex mfrF has more zeros} follows because the zeros of $\mfrF_{\sfA,\, \sfP}' \colon \bbR \to \bbR $ are also the zeros of its complex extension $\breve \mfrF_{\sfA,\, \sfP}' \colon \bbC \to \bbC $; \eqref{frm:Tijdeman nu of zero} is a consequence of Tijdeman's Number of Zeros Lemma, namely Lemma~\ref{lem:number of zeros of analytic function}; \eqref{frm:invoke lem apd upp and low} follows by invoking Lemmas~\ref{lem:upper bound power constraint}~and~\ref{lem:lower bound power constraint} above; \eqref{frm:optmizing t is the smallest} follows because $t = \frac{\sfA_\sfP}{\sfB_{\kappa_{\sfA,\, \sfP}}} $ is the minimizer in the right side of \eqref{frm:invoke lem apd upp and low}; \eqref{frm:looser bound on B_kappa} follows from the fact that $\sfB_{\kappa_{\sfA,\, \sfP}}< 2\sfA_\sfP +1 $, see Lemma~\ref{lem:looser bound on B_kappa}; \eqref{frm:sub opt sqrt{e}} is a consequence of the suboptimal choice $s = \sqrt{\rme} $; and finally, \eqref{frm:assuming A>1} follows from the assumption that $\sfA > 1 $. 
 
 Algebraic manipulations in the right side of \eqref{frm:assuming A>1} yield the desired result in \eqref{eq:lem:upp bd on support}. 
\end{proof}

\end{appendices}

\bibliographystyle{IEEEtran}
\bibliography{refs_small.bib}

\begin{IEEEbiographynophoto}{Alex Dytso}  is currently a Postdoctoral Researcher in the Department of Electrical Engineering at Princeton University. In 2016, he received a Ph.D. degree from the Department of Electrical and Computer Engineering at the University of Illinois, Chicago. He received his B.S. degree in 2011 from the University of Illinois, Chicago, where he also received the International Engineering Consortium's William L. Everitt Student Award of Excellence for outstanding seniors. His current research interest are in the areas of multi-user information theory and estimation theory, and their applications in wireless networks.\end{IEEEbiographynophoto}

 \smallskip{}
\begin{IEEEbiographynophoto}{Semih Yagli}
 received his Bachelor of Science degree in Electrical and Electronics Engineering in 2013, his Bachelor of Science degree in Mathematics in 2014 both from Middle East Technical University and his Master of Arts degree in Electrical Engineering in 2016 from Princeton University.
 Currently, he is pursuing his Ph.D. degree in Electrical Engineering at Princeton University under the supervision of H. Vincent Poor. His research interest include information theory, optimization, statistical modeling, privacy and unsupervised machine learning. 
\end{IEEEbiographynophoto}

 \smallskip{}
\begin{IEEEbiographynophoto}{H. Vincent Poor} (S'72, M'77, SM'82, F'87) received the Ph.D. degree in electrical engineering and computer science from Princeton University in 1977.  From 1977 until 1990, he was on the faculty of the University of Illinois at Urbana-Champaign. Since 1990 he has been on the faculty at Princeton, where he is the Michael Henry Strater University Professor of Electrical Engineering. During 2006 to 2016, he served as Dean of Princeton's School of Engineering and Applied Science. He has also held visiting appointments at several other institutions, most recently at Berkeley and Cambridge. His research interests are in the areas of information theory and signal processing, and their applications in wireless networks, energy systems and related fields. Among his publications in these areas is the recent book {\it Information Theoretic Security and Privacy of Information Systems} (Cambridge University Press, 2017).

Dr. Poor is a member of the National Academy of Engineering and the National Academy of Sciences, and is a foreign member of the Chinese Academy of Sciences, the Royal Society, and other national and international academies. Recent recognition of his work includes the 2017 IEEE Alexander Graham Bell Medal, the 2019 ASEE Benjamin Garver Lamme Award, a D.Sc. {\it honoris causa} from Syracuse University, awarded in 2017, and a D.Eng. {\it honoris causa} from the University of Waterloo, awarded in 2019.

\end{IEEEbiographynophoto}
 \smallskip{}

\begin{IEEEbiographynophoto}{Shlomo Shamai (Shitz)}
Shlomo Shamai (Shitz) is with the Department of Electrical Engineering, 
Technion---Israel Institute of Technology, where he is now a 
Technion Distinguished Professor, and holds the William Fondiller 
Chair of Telecommunications.
His research interests encompasses a wide spectrum of topics in information
theory and statistical communications.
Dr. Shamai (Shitz) is an IEEE Life Fellow, an URSI Fellow, a member of the 
Israeli Academy of Sciences and Humanities and a foreign member of the
US National Academy of Engineering. He is the recipient of the 2011
Claude E. Shannon Award, the 2014 Rothschild Prize in
Mathematics/Computer Sciences and Engineering and the
2017 IEEE Richard W. Hamming Medal. He is a co-recipient of the 2018 Third
Bell Labs Prize for Shaping the Future of Information and Communications 
Technology.

He is the recipient of numerous technical and paper awards
and recognitions and is listed as a Highly Cited Researcher (Computer Science)
for the years 2013/4/5/6/7/8.
He has served as Associate Editor for the Shannon Theory of the IEEE
Transactions on Information Theory, and has also served twice on the
Board of Governors of the Information Theory Society.
He has also served on the Executive Editorial Board of the IEEE Transactions
on Information Theory, the IEEE Information Theory Society Nominations
and Appointments Committee and the IEEE Information Theory Society,
Shannon Award Committee.
\end{IEEEbiographynophoto}

\end{document}